\documentclass{amsart} 
\usepackage{color} 
\usepackage{amsfonts,color,pslatex}
\usepackage{amssymb,amsmath,latexsym}


\newcommand{\tr}{{\rm Tr}}
\newcommand{\gf}{{\rm GF}}
\newcommand{\Z}{\mathbb {Z}}
\newcommand{\N}{\mathbb {N}}
\newcommand{\m}{\mathbb {M}}

\newcommand{\C}{{\mathcal C}}
\newcommand{\seq}{{\check{s}}}
\newcommand{\ls}{{\mathbb L}}


\newtheorem{theorem}{Theorem}[section]
\newtheorem{lemma}[theorem]{Lemma}
\newtheorem{corollary}[theorem]{Corollary}

\newtheorem{example}[theorem]{Example}

\newtheorem{remark}[theorem]{Remark} 

\newtheorem{open}[theorem]{Open Problem}

\begin{document}

\title{Cyclic Codes from APN and Planar Functions}
\author{Cunsheng Ding} 
\address{Department of Computer Science and Engineering, 
The Hong Kong University of Science and Technology, Clear Water Bay, 
Kowloon, Hong Kong.
} 
\email{cding@ust.hk} 

\keywords{Almost perfect nonlinear functions, cyclic codes, linear span, planar functions, sequences.} 

\date{\today}

\begin{abstract} 
Cyclic codes are a subclass of linear codes and have applications in consumer electronics, 
data storage systems, and communication systems as they have efficient encoding and 
decoding algorithms. In this paper, almost perfect nonlinear functions and planar functions 
over finite fields are employed to construct a number of classes of cyclic codes. Lower bounds 
on the minimum weight of some classes of the cyclic codes are developed. The minimum 
weights of some other classes of the codes constructed in this paper are determined. The 
dimensions of the codes are flexible.  Many of the codes presented  in this paper are optimal 
or almost optimal in the sense that they meet some bound on linear codes. Ten open problems 
regarding cyclic codes from highly nonlinear functions are also presented. 
\end{abstract}

\maketitle

\section{Introduction}

Let $q$ be a power of a prime $p$. 
A linear $[n,k, d]$ code over $\gf(q)$ is a $k$-dimensional subspace of $\gf(q)^n$ 
with minimum (Hamming) nonzero weight $d$. 
A linear $[n,k]$ code $\C$ over the finite field $\gf(q)$ is called {\em cyclic} if 
$(c_0,c_1, \cdots, c_{n-1}) \in \C$ implies $(c_{n-1}, c_0, c_1, \cdots, c_{n-2}) 
\in \C$.  
Let $\gcd(n, q)=1$. By identifying any vector $(c_0,c_1, \cdots, c_{n-1}) \in \gf(q)^n$ 
with  
$$ 
c_0+c_1x+c_2x^2+ \cdots + c_{n-1}x^{n-1} \in \gf(q)[x]/(x^n-1), 
$$
any code $\C$ of length $n$ over $\gf(q)$ corresponds to a subset of $\gf(q)[x]/(x^n-1)$. 
The linear code $\C$ is cyclic if and only if the corresponding subset in $\gf(q)[x]/(x^n-1)$ 
is an ideal of the ring $\gf(q)[x]/(x^n-1)$. 
It is well known that every ideal of $\gf(q)[x]/(x^n-1)$ is principal. Let $\C=(g(x))$ be a 
cyclic code. Then $g(x)$ is called the {\em generator polynomial} and 
$h(x)=(x^n-1)/g(x)$ is referred to as the {\em parity-check} polynomial of 
$\C$. 

A vector $(c_0, c_1, \cdots, c_{n-1}) \in \gf(q)^n$ is said to be {\em even-like} 
if $\sum_{i=0}^{n-1} c_i =0$, and is {\em odd-like} otherwise. The minimum weight 
of the even-like codewords, respectively the odd-like codewords of a code is the 
minimum even-like weight, denoted by $d_{even}$, respectively the minimum 
odd-like weight of the code, denoted by $d_{odd}$.  The {\em even-like subcode} of a 
linear code consists of all the even-like codewords of this linear code. 

The error correcting capability of cyclic codes may not be as good as some other linear 
codes in general. However, cyclic codes have wide applications in storage and communication 
systems because they have efficient encoding and decoding algorithms 
\cite{Chie,Forn,Pran}. 
For example, Reed–-Solomon codes have found important applications from deep-space 
communication to consumer electronics. They are prominently used in consumer 
electronics such as CDs, DVDs, Blu-ray Discs, in data transmission technologies 
such as DSL \& WiMAX, in broadcast systems such as DVB and ATSC, and in computer 
applications such as RAID 6 systems.

Cyclic codes have been studied for decades and a lot of  progress has been made 
(see for example, \cite{IBM,Char,HPbook,LintW}).  The total number of cyclic codes 
over $\gf(q)$ and their constructions are closely related to cyclotomic cosets 
modulo $n$, and thus many areas of number theory. One way of   
constructing cyclic codes over $\gf(q)$ with length $n$ is  to use the generator polynomial 
\begin{eqnarray}\label{eqn-defseqcode}
\frac{x^n-1}{\gcd(S(x), x^n-1)}
\end{eqnarray}
where 
$$ 
S(x)=\sum_{i=0}^{n-1} s_i x^i  \in \gf(q)[x]   
$$
and $s^{\infty}=(s_i)_{i=0}^{\infty}$ is a sequence of period $n$ over $\gf(q)$. 
Throughout this paper, we call the cyclic code $\C_s$ with the generator polynomial 
of (\ref{eqn-defseqcode}) the {\em code defined by the sequence} $s^{\infty}$, 
and the sequence $s^{\infty}$ the {\em defining sequence} of the cyclic code $\C_s$. 

One basic question is whether good cyclic codes can be constructed with 
this approach. It will be demonstrated in this paper that the code $\C_s$ could 
be an optimal or almost optimal linear code if the sequence $s^\infty$ is properly 
designed.  

In this paper, almost perfect nonlinear (APN) functions and planar functions over 
$\gf(q^m)$ will be employed to construct a number of classes of both binary and 
nonbinary cyclic codes. Lower bounds on the minimum weight of some classes 
of the cyclic codes are developed. The minimum weights of some other classes of 
the codes constructed in this paper are determined. The dimensions of the codes 
of this paper are flexible.  Some of the codes obtained in this paper are optimal 
or almost optimal as they meet certain bounds on linear codes. Ten open problems 
regarding cyclic codes from highly nonlinear functions are also presented in this 
paper. 

Our first motivation 
of this study is that the codes constructed in this paper are often optimal. Our 
second motivation is the simplicity of the constructions of the cyclic codes that 
may lead to efficient encoding and decoding algorithms.

\section{Preliminaries} 

In this section, we present basic notations and results of almost perfect nonlinear    
and planar functions, $q$-cyclotomic cosets, and sequences that will be employed 
in subsequent sections.     

\subsection{Some notations fixed throughout this paper}\label{sec-notations} 

Throughout this paper, we adopt the following notations unless otherwise stated: 
\begin{itemize} 
\item $p$ is a prime. 
\item $q$ is a positive power of $p$. 
\item $m$ is a positive integer. 
\item $r=q^m$. 
\item $n=q^m-1$.  
\item $\Z_n=\{0,1,2,\cdots, n-1\}$ associated with the integer addition modulo $n$ and  
           integer multiplication modulo $n$ operations. 
\item $\alpha$ is a generator of $\gf(r)^*$. 
\item $m_a(x)$ is the minimal polynomial of $a \in \gf(r)$ over $\gf(q)$. 
\item $\N_p(x)$ is a function defined by $\N_p(i) =0$ if $i \equiv 0 \pmod{p}$ and $\N_p(i) =1$ otherwise, where $i$ 
          is any nonnegative integer.    
\item $\tr(x)$ is the trace function from $\gf(r)$ to $\gf(q)$.           
\end{itemize}

\subsection{The $q$-cyclotomic cosets modulo $q^m-1$}\label{sec-cpsets}

The $q$-cyclotomic coset containing $j$ modulo $n$ is defined by 
$$ 
C_j=\{j, qj, q^2j, \cdots, q^{\ell_j-1}j\} \subset \Z_n 
$$
where $\ell_j$ is the smallest positive integer such that $q^{\ell_j-1}j \equiv j \pmod{n}$, 
and is called the size of $C_j$. It is known that $\ell_j$ divides $n$. The smallest integer 
in $C_j$ is called the {\em coset leader} of $C_j$. Let $\Gamma$ denote the set of all 
coset leaders. By definition, we have 
$$ 
\bigcup_{j \in \Gamma} C_j =\Z_n.  
$$

\subsection{The linear span and minimal polynomial of sequences}

Let $s^L=s_0s_1\cdots s_{L-1}$ be a sequence over $\gf(q)$. The {\em linear 
span} (also called {\em linear complexity}) of $s^L$ is defined to be the smallest positive 
integer $\ell$ such that there are constants $c_0=1, c_1, \cdots, c_\ell \in \gf(q)$ 
satisfying 
\begin{eqnarray*} 
-c_0s_i=c_1s_{i-1}+c_2s_{i-2}+\cdots +c_ls_{i-\ell} \mbox{ for all } \ell \leq i<L. 
\end{eqnarray*} 
In engineering terms, such a polynomial $c(x)=c_0+c_1x+\cdots +c_lx^l$ 
is called the {\em feedback  polynomial} of a shortest linear feedback 
shift register 
(LFSR) that generates $s^L$. Such an integer always exists for finite sequences  $s^L$. When 
$L$ is $\infty$, a sequence $s^{\infty}$ is called a semi-infinite 
sequence. If there is no such an integer for a semi-infinite sequence 
$s^{\infty}$, its linear span is defined to be $\infty$. The linear 
span of the zero sequence is defined to be zero. 
For ultimately periodic semi-infinite sequences such an $\ell$ always 
exists. 

Let $s^{\infty}$ be a sequence of period $L$ over $\gf(q)$. 
Any feedback polynomial of $s^{\infty}$ is called a {\em characteristic 
polynomial}. The characteristic polynomial with the smallest degree is 
called the {\em minimal polynomial} of the periodic sequence $s^{\infty}$. 
Since we require that the constant term of any characteristic polynomial 
be 1, the minimal polynomial of any periodic sequence $s^{\infty}$ must 
be unique. In addition, any characteristic polynomial must be a multiple 
of the minimal polynomial.    

For periodic sequences, there are a few ways to determine their linear 
span and minimal polynomials. One of them is given in the following 
lemma \cite{LN97}. 

\begin{lemma}\label{lem-ls1} 
Let $s^{\infty}$ be a sequence of period $L$ over $\gf(q)$. 
Define   
\begin{eqnarray*}
S^{L}(x)=s_{0}+s_{1}x+\cdots +s_{L-1}x^{L-1} \in \gf(q)[x]. 
\end{eqnarray*}
Then the minimal polynomial $\m_s(x)$ of $s^{\infty}$ is given by 
      \begin{eqnarray}\label{eqn-base1}  
      \frac{x^{L}-1}{\gcd(x^{L}-1, S^{L}(x))} 
      \end{eqnarray}  
and the linear span $\ls_s$ of $s^{\infty}$ is given by 
      \begin{eqnarray}\label{eqn-base2} 
       L-\deg(\gcd(x^{L}-1, S^{L}(x))). 
      \end{eqnarray}  
\end{lemma}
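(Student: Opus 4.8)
The plan is to convert the LFSR recurrence into a polynomial identity via generating functions, and then extract both assertions from the $\gcd$-decomposition of $x^{L}-1$. Write $s(x)=\sum_{i=0}^{\infty}s_ix^i$ for the generating function of $s^{\infty}$ over $\gf(q)$. Since the sequence has period $L$, summing one period at a time gives $s(x)=S^{L}(x)\bigl(1+x^{L}+x^{2L}+\cdots\bigr)=S^{L}(x)/(1-x^{L})$ as a formal power series, i.e. $s(x)(1-x^{L})=S^{L}(x)$. The first step is the standard observation that a polynomial $c(x)=c_0+c_1x+\cdots+c_\ell x^\ell$ with $c_0=1$ is a characteristic polynomial of $s^{\infty}$ if and only if the product $c(x)s(x)$ is a polynomial of degree at most $\ell-1$: the coefficient of $x^{k}$ in $c(x)s(x)$ for $k\ge\ell$ equals $\sum_{j=0}^{\ell}c_js_{k-j}$, and these coefficients all vanish precisely when the recurrence of the lemma's definition holds (extended to all $i\ge\ell$ by periodicity).

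Next I would set $d(x)=\gcd(x^{L}-1,S^{L}(x))$ and factor $x^{L}-1=d(x)g(x)$ and $S^{L}(x)=d(x)t(x)$ with $\gcd(g(x),t(x))=1$. Feeding $s(x)=S^{L}(x)/(1-x^{L})$ into the criterion above, the condition "$c(x)s(x)\in\gf(q)[x]$ with $\deg<\deg c$" becomes $c(x)S^{L}(x)=-w(x)(x^{L}-1)$ for some $w(x)$ with $\deg w<\deg c$, and after cancelling $d(x)$ this reads $c(x)t(x)=-w(x)g(x)$. Since $g(x)$ and $t(x)$ are coprime, this forces $g(x)\mid c(x)$. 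Conversely, if $c(x)=g(x)e(x)$ then $c(x)s(x)=-e(x)t(x)$ is a polynomial, and the degree bound $\deg t=\deg S^{L}-\deg d\le (L-1)-\deg d<L-\deg d=\deg g$ yields $\deg\bigl(e(x)t(x)\bigr)<\deg\bigl(g(x)e(x)\bigr)=\deg c$, so $c(x)$ is a characteristic polynomial. Hence the characteristic polynomials of $s^{\infty}$ are exactly the multiples of $g(x)$.

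Finally, because $d(x)$ divides $x^{L}-1$, which has nonzero constant term, both $d(0)\ne0$ and $g(0)\ne0$; rescaling $g(x)$ to have constant term $1$ produces a characteristic polynomial of least possible degree, which is therefore the (unique) minimal polynomial, so $\m_s(x)=(x^{L}-1)/\gcd(x^{L}-1,S^{L}(x))$ up to this normalization, and $\ls_s=\deg\m_s(x)=\deg(x^{L}-1)-\deg d(x)=L-\deg(\gcd(x^{L}-1,S^{L}(x)))$. The zero sequence is the degenerate case $S^{L}(x)=0$, covered by the convention that its linear span is $0$. I expect the only delicate points to be the degree bookkeeping (the inequality $\deg t<\deg g$, which rests on $\deg S^{L}\le L-1$) and the constant-term normalization; the algebraic heart of the argument — coprimality of $g(x)$ and $t(x)$ forcing $g(x)\mid c(x)$ — is immediate once the generating-function dictionary $s(x)=S^{L}(x)/(1-x^{L})$ is set up.
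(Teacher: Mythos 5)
Your proof is correct. The paper does not prove this lemma at all --- it is quoted as a known result with a citation to [LN97] --- and your generating-function argument (characteristic polynomials of $s^{\infty}$ are exactly the $c(x)$ with $c(x)S^{L}(x)\equiv 0 \pmod{x^{L}-1}$ and the right degree bound, hence exactly the multiples of $(x^{L}-1)/\gcd(x^{L}-1,S^{L}(x))$) is the standard proof from that reference, with the degree bookkeeping and the zero-sequence case handled properly. The one point you flag, that $(x^{L}-1)/\gcd(x^{L}-1,S^{L}(x))$ must be rescaled to have constant term $1$ under the paper's normalization of characteristic polynomials, is a genuine (harmless) discrepancy in the paper's own statement, not a gap in your argument.
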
 

The other one is given in the following lemma \cite{Antweiler} 

\begin{lemma} \label{lem-ls2} 
Any sequence $s^{\infty}$ over $\gf(q)$ of period $q^m-1$ has a unique expansion of the form  
\begin{equation*}
s_t=\sum_{i=0}^{q^m-2}c_{i}\alpha^{it}, \mbox{ for all } t\ge 0,
\end{equation*}
where $\alpha$ is a generator of $\gf(q^m)^*$ and $c_i \in \gf(q^m)$.
Let the index set $I=\{i \left.\right| c_i\neq 0\}$, then the minimal polynomial $\m_s(x)$ of $s^{\infty}$ is 
\begin{equation*}
\m_s(x)=\prod_{i\in I}(1-\alpha^i x),
\end{equation*}
and the linear span of $s^{\infty}$ is $|I|$.
\end{lemma}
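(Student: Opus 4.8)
The plan is to read the displayed expansion as a finite-field discrete Fourier transform of the periodic sequence and then hand the result to Lemma~\ref{lem-ls1}. For existence and uniqueness of the $c_i$, note that since $\alpha$ has order $n=q^m-1$ the elements $\alpha^0,\alpha^1,\dots,\alpha^{n-1}$ are pairwise distinct, so the $n\times n$ Vandermonde matrix $(\alpha^{it})_{0\le i,t\le n-1}$ is invertible over $\gf(q^m)$; this already gives a unique $(c_0,\dots,c_{n-1})$ with $s_t=\sum_{i=0}^{n-1}c_i\alpha^{it}$ for $0\le t\le n-1$. I would make the inversion explicit using the orthogonality relation $\sum_{t=0}^{n-1}\alpha^{kt}=0$ when $k\not\equiv 0\pmod n$ and $\sum_{t=0}^{n-1}\alpha^{kt}=n$ when $k\equiv 0\pmod n$, together with the observation that $n=q^m-1\equiv -1\pmod p$ so that $n=-1\ne 0$ in $\gf(q^m)$; this yields the formula $c_i=-\sum_{t=0}^{n-1}s_t\alpha^{-it}$. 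Finally, both sides of $s_t=\sum_i c_i\alpha^{it}$ are periodic in $t$ with period $n$ (the right-hand side because $\alpha^{in}=1$), so the identity for $0\le t<n$ propagates to all $t\ge 0$.

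For the minimal polynomial and linear span I would apply Lemma~\ref{lem-ls1} with $L=n$ and $S^n(x)=\sum_{t=0}^{n-1}s_tx^t$. Over $\gf(q^m)$ one has $x^n-1=\prod_{j=0}^{n-1}(x-\alpha^{-j})$ with all roots simple. Evaluating $S^n$ at these roots and using the same orthogonality relation as above gives $S^n(\alpha^{-j})=-c_j$ for every $j$. Hence $\alpha^{-j}$ is a common zero of $x^n-1$ and $S^n(x)$ exactly when $c_j=0$, i.e.\ when $j\notin I$, and since $x^n-1$ is squarefree this forces
\[
\gcd(x^n-1,\,S^n(x))=\prod_{j\notin I}(x-\alpha^{-j}).
\]
Substituting into Lemma~\ref{lem-ls1}, the linear span is $\ls_s=n-\deg\gcd(x^n-1,S^n(x))=n-(n-|I|)=|I|$, and the minimal polynomial is $\m_s(x)=(x^n-1)/\gcd(x^n-1,S^n(x))=\prod_{i\in I}(x-\alpha^{-i})$. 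To match the stated form, observe that $1-\alpha^i x=-\alpha^i(x-\alpha^{-i})$, so $\prod_{i\in I}(1-\alpha^i x)$ is a nonzero scalar multiple of $\prod_{i\in I}(x-\alpha^{-i})$; it has the same zero set and the same degree $|I|$, and since it has constant term $1$ it is the representative of $\m_s(x)$ under the normalization adopted in this paper.

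I do not expect a serious obstacle: the whole argument is careful bookkeeping for a finite-field DFT, and everything reduces to the geometric-sum identity for $\sum_t \alpha^{kt}$. The points that genuinely need care are (i) the normalization mismatch — Lemma~\ref{lem-ls1} outputs a monic polynomial whereas Lemma~\ref{lem-ls2} writes $\m_s(x)$ with constant term $1$, so one must explicitly record that the two agree up to a scalar; (ii) keeping the index arithmetic modulo $n$ straight, so that the coefficient "$c_j$ with $j\notin I$" is really the one attached to the root $\alpha^{-j}$; and (iii) noting that $n\ne 0$ in $\gf(q^m)$, which is what makes the DFT invertible and the orthogonality relation usable. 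An equivalent self-contained route, if one prefers not to invoke Lemma~\ref{lem-ls1}, is to verify directly that $\sum_{j}c_j s_{t-j}=\sum_{i\in I}c_i\,\alpha^{it}\,\Big(\prod_{k\in I}(1-\alpha^{k-i})\Big)=0$ for the proposed feedback polynomial, and then use linear independence of the distinct characters $t\mapsto\alpha^{it}$, $i\in I$, to rule out any shorter recurrence; but the Lemma~\ref{lem-ls1} approach is shorter given what is already available.
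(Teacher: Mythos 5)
Your proof is correct, but note that the paper itself offers no proof of this lemma: it is quoted from the reference [Antweiler--B\"{o}mer] with only the remark that the stated $\m_s(x)$ is the reciprocal of the polynomial called the minimal polynomial in that reference. So there is nothing in the paper to compare against; what your argument buys is a self-contained derivation from Lemma~\ref{lem-ls1}, and it is sound. The key steps all check: invertibility of the Vandermonde/DFT matrix since $\alpha^0,\dots,\alpha^{n-1}$ are distinct; the orthogonality relation with $n=q^m-1\equiv -1\not\equiv 0\pmod p$, giving $S^n(\alpha^{-j})=nc_j=-c_j$; squarefreeness of $x^n-1$ (as $\gcd(n,q)=1$) so that $\gcd(x^n-1,S^n(x))=\prod_{j\notin I}(x-\alpha^{-j})$; and the observation that this $\gcd$ is unchanged under passing from $\gf(q)$ to $\gf(q^m)$, which you use implicitly and might state explicitly. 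You also correctly isolate the normalization issue (monic output of Lemma~\ref{lem-ls1} versus the constant-term-$1$ convention of the paper's definition of minimal polynomial), which is exactly the point the paper flags after the lemma. The only blemish is in your closing aside, where the symbol $c_j$ is used simultaneously for the DFT coefficients and for the coefficients of the proposed feedback polynomial $\prod_{k\in I}(1-\alpha^k x)$; the intended identity
\[
\sum_{j}\gamma_j s_{t-j}=\sum_{i\in I}c_i\alpha^{it}\prod_{k\in I}\bigl(1-\alpha^{k-i}\bigr)=0,
\qquad \prod_{k\in I}(1-\alpha^k x)=\sum_j\gamma_j x^j,
\]
is correct (the factor with $k=i$ vanishes for $i\in I$), but the notation should be disentangled if you keep that alternative route.
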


It should be noticed that in some references the reciprocal of $\m_s(x)$ is called the minimal polynomial 
of the sequence $s^\infty$. So Lemma \ref{lem-ls2} is a modified version of the original one in \cite{Antweiler}.

\subsection{Perfect and almost perfect nonlinear functions on $\gf(r)$}\label{sec-APNPN} 

A function $f: \gf(r) \to \gf(r)$ is called {\em almost perfect 
nonlinear (APN)} if 
$$ 
\max_{a \in \gf(r)^*} \max_{b \in \gf(r)} |\{x \in \gf(r): f(x+a)-f(x)=b\}| =2,  
$$ 
and is referred to as 
{\em perfect 
nonlinear or planar} if 
$$ 
\max_{a \in \gf(r)^*} \max_{b \in \gf(r)} |\{x \in \gf(r): f(x+a)-f(x)=b\}| =1.   
$$ 

There is no perfect nonlinear (planar) function on $\gf(2^m)$. The following is a summary of 
known APN monomials $x^e$ over $\gf(2^m)$:  
\begin{itemize} 
\item $e=2^m-2$, $m$ odd (\cite{BD,Nybe}). 
\item $e=2^h+1$ with $\gcd(h, m)=1$, where $1 \leq h \leq (m-1)/2$ 
      if $m$ is odd and $1 \leq h \leq (m-2)/2$ if $m$ is even 
      (\cite{Gold}). 
\item $e=2^{2h}-2^h+1$ with $\gcd(h, m)=1$, where $1 \leq h \leq (m-1)/2$ 
      if $m$ is odd and $1 \leq h \leq (m-2)/2$ if $m$ is even 
      (\cite{Kasa}). 
\item $e=2^{(m-1)/2}+3$, where $m$ is odd (\cite{Dobb99,HX}).  
\item $e=2^{(m-1)/2}+2^{(m-1)/4}-1$, where $m \equiv 1 \pmod{4}$ 
      (\cite{Dobb992,HX}).
\item $e=2^{(m-1)/2}+2^{(3m-1)/4}-1$, where $m \equiv 3 \pmod{4}$ 
      (\cite{Dobb992,HX}).
\item $e=2^{4i}+2^{3i}+2^{2i}+2^i-1$, where $m=5i$ (\cite{Dobb992}).        
\end{itemize}   
A number of other types of APN functions $f(x)$ on $\gf(2^m)$ were discovered in \cite{BC,BCL2,BCL1,BCP}. 

The following is a summary of known APN monomials $x^e$ over $\gf(p)^m$ where $p$ is odd:  
\begin{itemize} 
\item $e=3$, $p >3$ (\cite{HRS}). 
\item $e=p^m-2$, $p>2$ and $p \equiv 2 \pmod{3}$ (\cite{HRS}).  
\item $e=\frac{p^m-3}{2}$, $p \equiv 3, 7 \pmod{20}$, $p^m>7$, $p^m \ne 27$ and $m$ is odd (\cite{HRS}).  
\item $e=\frac{p^m+1}{4} + \frac{p^m-1}{2}$, $p^m \equiv 3 \pmod{8}$ (\cite{HRS}).  
\item $e=\frac{p^m+1}{4}$, $p^m \equiv 7 \pmod{8}$ (\cite{HRS}).  
\item $e=\frac{2p^m-1}{3}$, $p^m \equiv 2 \pmod{3}$ (\cite{HRS}).  
\item $e=p^m-3$, $p=3$ and $m$ is odd. 
\item $e=p^l+2$, $p^l \equiv 1 \pmod{3}$ and $m=2l$ (\cite{HRS}). 
\item $e=\frac{5^h+1}{2}$, $p=5$ and $\gcd(2m,h)=1$. 
\item $e=\left(3^{(m+1)/4}-1\right)\left(3^{(m+1)/2}+1\right)$, $m \equiv 3 \pmod{4}$ and $p=3$ \cite{Zha}. 
\item Let $p=3$, and \begin{eqnarray*} 
e=\left\{ \begin{array}{ll} 
               \frac{3^{(m+1)/2}-1}{2} & \mbox{if } m \equiv 3 \pmod{4} \\
               \frac{3^{(m+1)/2}-1}{2} + \frac{3^m-1}{2} & \mbox{if } m \equiv 1 \pmod{4}.                 
               \end{array} 
\right.                
\end{eqnarray*}   
\item Let $p=3$, and \begin{eqnarray*} 
e=\left\{ \begin{array}{ll} 
               \frac{3^{m+1}-1}{8} & \mbox{if } m \equiv 3 \pmod{4} \\
               \frac{3^{m+1}-1}{8} + \frac{3^m-1}{2} & \mbox{if } m \equiv 1 \pmod{4}.                 
               \end{array} 
\right.                
\end{eqnarray*}   
\item $e=\frac{5^m-1}{4} + \frac{5^{(m+1)/2}-1}{2}$, $p=5$ and $n$ is odd \cite{Zha}. 
\end{itemize}   

The following is a list of some known palnar functions over $p^m$, $p$ odd: 
\begin{itemize} 
\item $f(x)=x^2$. 
\item $f(x)=x^{p^h+1}$, where $m/\gcd(m,h)$ is odd (\cite{DO}).  
\item $f(x)=x^{(3^h+1)/2}$, where $p=3$ and $\gcd(m,h)=1$ (\cite{CM}). 
\item $f(x)=x^{10}-ux^6-u^2x^2$, where $p=3$, $u \in \gf(p^m)$, $m$ is odd (\cite{CM,DY06}).  
\end{itemize} 
Recently, more planar functions were discovered in \cite{ZKW,ZW,Zha}. 

In the sequel, some of the APN functions and all the planar functions on $\gf(q^m)$ above will be 
employed to construct cyclic codes over $\gf(q)$. 

\subsection{Codes defined by highly nonlinear functions}\label{sec-sequence} 

Given any function $f(x)$ on $\gf(r)$, we define its associated sequence 
$s^\infty$ by 
\begin{eqnarray}\label{eqn-sequence}
s_i=\tr(f(\alpha^i+1)) 
\end{eqnarray}
for all $i \ge 0$, where $\alpha$ is a generator of $\gf(r)^*$ and $\tr(x)$ denotes 
the trace function from $\gf(r)$ to $\gf(q)$.

The objective of this paper is to consider the codes $\C_s$ defined by planar 
functions and APN functions over $\gf(q^m)$. We need to treat the cases $q=2$ and $q$ being odd  
separately as planar functions on $\gf(2^m)$ do not exist, while both APN and planar functions 
on $\gf(q^m)$ exist when $q$ is odd.   

Highly nonlinear (i.e., almost perfect nonlinear, perfect nonlinear and bent) functions were 
employed to construct linear codes with good parameters in \cite{CCD,CCZ,CDY05}. The 
approach to the constructions of cyclic codes with APN and planar functions employed in this 
paper is quite different.

\section{Binary cyclic codes from APN functions on $\gf(2^m)$}\label{sec-binary} 

Planar functions on $\gf(2^m)$ do not exist. In this section, we treat binary cyclic codes 
derived from APN functions on $\gf(2^m)$, and fix $q$ to be 2 throughout this section. 

\subsection{Binary cyclic codes from the inverse APN function}\label{sec-Inverse} 

In this subsection we study the code $\C_s$ defined by the inverse APN function on $\gf(2^m)$. 
To this end, we need to prove the following lemma.  

Let $\rho_i$ denote the total number of even integers in the $2$-cyclotomic coset $C_i$. 
We then define 
\begin{eqnarray}\label{eqn-defnu}
\nu_i=\frac{m\rho_i}{\ell_i} \bmod{2} 
\end{eqnarray} 
for each $i \in \Gamma$, where $\ell_i=|C_i|$.

\begin{lemma}\label{lem-Inverse} 
Let $s^{\infty}$ be the sequence of (\ref{eqn-sequence}), where $f(x)=x^{2^m-2}$. Then the linear span $\ls_s$
of $s^{\infty}$ is equal to $(n+1)/2$ and the minimal polynomial $\m_s(x)$ of  $s^{\infty}$ is given by 
\begin{equation}\label{eqn-gpInverse}
\m_s(x)=\prod_{j \in \Gamma, \nu_j=1} m_{\alpha^{-j}}(x)
\end{equation} 
where $m_{\alpha^j}(x)$ is the minimal polynomial of $\alpha^j$ over $\gf(2)$. 
\end{lemma}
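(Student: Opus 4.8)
The plan is to compute the DFT-style coefficients of the sequence $s^\infty$ and apply Lemma~\ref{lem-ls2}. Since $f(x)=x^{2^m-2}$ is the inverse function, we have $f(y)=y^{-1}$ for $y\ne 0$ and $f(0)=0$. For $t\ge 0$ with $\alpha^t\ne 1$ (i.e. $t\not\equiv 0\pmod n$) the term $\alpha^t+1$ is a nonzero element of $\gf(2^m)$, so $s_t=\tr\!\left((\alpha^t+1)^{-1}\right)$; when $\alpha^t=1$ we get $\alpha^t+1=0$ and $s_t=\tr(0)=0$. First I would expand $(1+x)^{-1}$ as a formal power series in the ring $\gf(2^m)[[x]]$, namely $(1+x)^{-1}=\sum_{k\ge 0}x^k$, and substitute $x=\alpha^t$ to write $(1+\alpha^t)^{-1}=\sum_{k=0}^{n-1}\beta_k\alpha^{kt}$ after reducing exponents modulo $n$; because $\sum_{k\ge 0}\alpha^{kt}$ telescopes over each block of length $n$ and $n$ is even (here $q=2$, so $n=2^m-1$ is odd — I must be careful), I instead use the exact finite identity $(1+\alpha^t)^{-1}=\sum_{k=0}^{n-1}\alpha^{kt}\cdot c$ valid only up to the correction at $\alpha^t=1$. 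The cleanest route is: on $\gf(2^m)^*$, the function $y\mapsto y^{-1}$ equals $y^{n-1}=y^{2^m-2}$, and $\tr(y^{2^m-2})=\sum_{j=0}^{m-1}y^{(2^m-2)2^j}$; substituting $y=1+\alpha^t$ and expanding each $(1+\alpha^t)^{(2^m-2)2^j}$ by the binomial theorem (all binomial coefficients mod $2$) gives $s_t$ as an explicit $\gf(2^m)$-linear combination $\sum_i c_i\alpha^{it}$.

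The core of the argument is to identify exactly which coefficients $c_i$ are nonzero. Here the key combinatorial fact is that the binomial coefficient $\binom{2^m-2}{\ell}\bmod 2$ is governed by Lucas' theorem: writing $2^m-2=\sum_{r=1}^{m-1}2^r$ (all bits set except the lowest), $\binom{2^m-2}{\ell}\equiv 1\pmod 2$ iff the binary expansion of $\ell$ has a $0$ in the last position, i.e. iff $\ell$ is even. Thus $(1+\alpha^t)^{2^m-2}=\sum_{\ell \text{ even}}\alpha^{\ell t}$ (summing over $0\le \ell\le 2^m-2$), and applying the trace, $s_t=\sum_{j=0}^{m-1}\sum_{\ell \text{ even}}\alpha^{2^j\ell t}$ for $\alpha^t\ne 1$, with the $\alpha^t=1$ case needing a separate check that the formula still produces $0$ (it will, since then $s_t=\tr(1\cdot 0)$... actually $1+1=0$, consistent). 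Collecting exponents into $2$-cyclotomic cosets $C_i$ and summing the contributions $\sum_{j}\alpha^{2^j\ell t}$ over $\ell$ even with $\ell\in C_i\cdot(\text{something})$ — more precisely, reindex so the coefficient attached to $\alpha^{it}$ is a sum over the coset structure — one finds the coefficient of $\alpha^{it}$ is $0$ or nonzero according to the parity $\nu_i=m\rho_i/\ell_i\bmod 2$, where $\rho_i$ counts even integers in $C_i$. Then Lemma~\ref{lem-ls2} immediately yields $\m_s(x)=\prod_{i\in I}(1-\alpha^i x)$, and since the minimal polynomial over $\gf(2)$ groups conjugates, $I$ is a union of cosets $-C_j$ with $\nu_j=1$, giving $\m_s(x)=\prod_{j\in\Gamma,\nu_j=1}m_{\alpha^{-j}}(x)$.

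Finally, for the linear span I would show $\ls_s=|I|=(n+1)/2=2^{m-1}$. This should follow either by directly summing $\sum_{j:\nu_j=1}\ell_j$ — which requires a counting identity showing exactly half of the residues modulo $n$ land in cosets with $\nu_j=1$ — or, more slickly, by a parity/counting argument on the total contribution: the number of even integers among $\{0,1,\dots,n-1\}$ is $(n-1)/2$ and together with a boundary term this yields $2^{m-1}$. I expect the genuine obstacle to be precisely this last bookkeeping step: tracking how the trace (which introduces the factor $m$ via the $m$-fold Frobenius sum) interacts with the coset sizes $\ell_i$, so that the nonvanishing condition reduces cleanly to $m\rho_i/\ell_i\bmod 2$ and the count of such cosets sums to exactly $2^{m-1}$. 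One has to argue that $\ell_i\mid m\rho_i$ always holds (so that $\nu_i$ is well-defined as stated) and that cancellation in characteristic $2$ happens exactly when $m\rho_i/\ell_i$ is even; this is where the most care is needed, and I would handle it by partitioning $C_i$ according to $2$-adic parity and noting the Frobenius orbit acts transitively, so each residue class contributes with multiplicity $\ell_i/\gcd$ appropriately, making the total multiplicity of $\alpha^i$ equal to $m\rho_i/\ell_i$ in $\gf(2)$.
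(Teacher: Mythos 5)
Your route is essentially the paper's. The paper's own proof consists of quoting from \cite{wenpeiding} the identity $s_t=\tr\bigl(\sum_{i=0}^{2^{m-1}-1}\alpha^{2it}\bigr)$ --- which is exactly your Lucas-theorem step, since $\binom{2^m-2}{\ell}$ is odd precisely for even $\ell$ --- then regrouping exponents by $2$-cyclotomic cosets so that the coefficient of $(\alpha^t)^i$ is $\nu_i$, and invoking Lemma~\ref{lem-ls2}. Your treatment of the minimal polynomial is correct in substance: the only point you flag as delicate, namely that $\nu_i$ is well defined and that the multiplicity of each element of $C_i$ in the Frobenius orbit sum is an integer, follows at once from $\ell_i\mid m$ (the multiplicity is $m/\ell_i$, not ``$\ell_i/\gcd$''), so the coefficient of $(\alpha^t)^i$ is $(m/\ell_i)\rho_i=m\rho_i/\ell_i\bmod 2=\nu_i$, and the roots $\alpha^{-i}$, $i\in I$, assemble into the stated product of minimal polynomials.

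The genuine gap is the linear span. The paper does not prove $\ls_s=(n+1)/2$ either (it cites \cite{wenpeiding}), but the counting argument you sketch does not work as stated: the quantity $\sum_{j\in\Gamma}\rho_j=2^{m-1}$ (the number of even residues in $\Z_n$, which is $(n+1)/2$, not $(n-1)/2$) is simply a different sum from $\sum_{j\in\Gamma,\,\nu_j=1}\ell_j$, and no boundary term bridges them. A correct argument: for $k\in\Z_n$ written as an $m$-bit string, $k2^{j}\bmod n$ is a cyclic shift of $k$, hence is even iff a prescribed bit of $k$ equals $0$; as $j$ runs over $0,\dots,m-1$ each bit is tested once, so the coefficient of $(\alpha^t)^k$ equals $(m-w(k))\bmod 2$ with $w(k)$ the Hamming weight. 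The span is therefore the number of $k\in\{0,\dots,2^m-2\}$ with $w(k)\not\equiv m\pmod 2$; exactly $2^{m-1}$ of the $2^m$ $m$-bit strings have weight of the ``wrong'' parity, and the excluded all-ones string is not among them, giving $\ls_s=2^{m-1}=(n+1)/2$. You need to replace your hand-waved count with this (or an equivalent) argument for the proposal to be complete.
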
 

\begin{proof} 
The linear span of this sequence was already determined in \cite{wenpeiding}. Below we prove 
only the conclusion on the minimal polynomial of this sequence. 

It was proved in \cite{wenpeiding} that 
\begin{eqnarray}\label{eqn-case2}
s_t &=&\tr\left(\sum_{i=0}^{2^{m-1}-1} \alpha^{2it}\right)\nonumber \\
&=& \sum_{j \in \Gamma}\sum_{u=0}^{\ell_j-1}\alpha^{jt2^{u}}\left(\sum_{i=0}^{m-1}f_{j,i} \right), 
\end{eqnarray}
where 
\begin{equation}\label{eqn-con2}
f_{j,i}=\left \{ \begin{array}{ll}
1 & \mbox{if }(j\, 2^{m-i}\bmod{n}) \bmod{2}=0 \\
0 & \mbox{otherwise.}
 \end{array}\right.
\end{equation}

It then follows from (\ref{eqn-defnu}), (\ref{eqn-con2}) and (\ref{eqn-case2}) that 
\begin{eqnarray}\label{eqn-feb22}
s_t=\sum_{j \in \Gamma} \nu_j \left( \sum_{i \in C_j} (\alpha^t)^i \right).   
\end{eqnarray} 
The desired conclusion on the minimal polynomial $\m_s(x)$ then follows from Lemma \ref{lem-ls2} 
and (\ref{eqn-feb22}). 
\end{proof}

The following theorem provides information on the code $\C_{s}$ and its dual.    

\begin{theorem} 
The binary code $\C_{s}$ defined by the sequence of Lemma \ref{lem-Inverse} has parameters 
$[2^m-1, 2^{m-1}-1, d]$ and generator polynomial $\m_s(x)$ of (\ref{eqn-gpInverse}). 

If $m$ is odd, the minimum distance $d$ of $\C_{s}$ is at least $d_1$, where $d_1$ is the smallest 
positive even integer with $d_1^2-d_1+1 \ge n$, and the dual code $\C_{s}^\perp$ has parameters 
$[2^m-1, 2^{m-1}, d^\perp]$ where $d^\perp$ satisfies that  $(d^\perp)^2-d^\perp+1 \ge n$.    
\end{theorem}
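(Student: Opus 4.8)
The plan is to establish the three separate assertions of the theorem in turn: the parameters of $\C_s$, the lower bound on its minimum distance $d$ when $m$ is odd, and the parameters of the dual code $\C_s^\perp$.

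First I would settle the parameters of $\C_s$. By Lemma~\ref{lem-Inverse} the generator polynomial is $\m_s(x)$ of (\ref{eqn-gpInverse}), so $\C_s$ has length $n = 2^m-1$ and dimension $n - \deg(\m_s(x)) = n - \ls_s$. Since Lemma~\ref{lem-Inverse} gives $\ls_s = (n+1)/2 = 2^{m-1}$, the dimension is $2^m-1-2^{m-1} = 2^{m-1}-1$, as claimed. The only nontrivial point here is the minimum distance, which is left as the bound $d_1$ for odd $m$.

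Next, for the minimum-distance bound I would identify which roots of unity are zeros of $\m_s(x)$, i.e. which $\alpha^{-j}$ (equivalently which powers $\alpha^k$) lie in the root set. From (\ref{eqn-gpInverse}) the roots are exactly the $\alpha^{-j}$ with $\nu_j = 1$, $j \in \Gamma$, together with their conjugates; equivalently, writing the zero set as a union of $2$-cyclotomic cosets, one should check (using $\rho_i$, $\ell_i$, and the definition $\nu_i = m\rho_i/\ell_i \bmod 2$) that when $m$ is odd the zero set contains a set of consecutive powers or an arithmetic-progression-type set to which a BCH-like or Hartmann--Tzeng-like bound applies. The peculiar shape of the bound, ``smallest even $d_1$ with $d_1^2 - d_1 + 1 \ge n$'', strongly suggests that the relevant zeros form a set of the form $\{1, \beta, \beta^2, \dots\}$ or a two-dimensional pattern $\{i + j\delta : 0 \le i < a,\ 0 \le j < b\}$ with $ab \approx d_1^2$; I would match the parameters so that the Hartmann--Tzeng bound (or the van Lint--Wilson shifting bound) yields $d \ge d_1$. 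The evenness of $d_1$ reflects that $\C_s$ is an even-like code here (every codeword has even weight because $1$ is a zero of $\m_s(x)$ when $\nu_0 = 1$ for odd $m$, which I would verify via $\rho_0$ and $\ell_0$), so the minimum distance must be even and the bound can be rounded up to the next even integer.

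Finally, for the dual code I would use the standard fact that $\C_s^\perp$ has generator polynomial $x^{k}\overline{h(x)}/\overline{h(0)}$ where $h(x) = (x^n-1)/\m_s(x)$, hence $\dim \C_s^\perp = \deg \m_s(x) = 2^{m-1}$ and length $n = 2^m-1$, giving parameters $[2^m-1,\, 2^{m-1},\, d^\perp]$. For the bound on $d^\perp$, the zeros of the check polynomial of $\C_s^\perp$ are the reciprocals of the zeros of $\m_s(x)$, so the zero set of $\C_s^\perp$ is the complement pattern, which by the same $\nu_j$-bookkeeping again contains a Hartmann--Tzeng configuration of the same size, yielding $(d^\perp)^2 - d^\perp + 1 \ge n$. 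I expect the main obstacle to be the combinatorial analysis of which cyclotomic cosets satisfy $\nu_j = 1$ for odd $m$ and showing that the resulting zero set (and its complement) contains the right consecutive/arithmetic-progression structure; once that structure is pinned down, invoking the Hartmann--Tzeng bound is routine. A careful count of $\rho_i$ modulo the coset sizes, using that $m$ odd forces $\ell_i = m$ for ``most'' $i$ and hence $\nu_i = \rho_i \bmod 2$, should be the technical heart of the argument.
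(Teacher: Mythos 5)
The parameter and even-weight parts of your argument are fine ($\nu_0 = m \bmod 2 = 1$ for odd $m$, so $x-1 \mid \m_s(x)$ and the dimension count is as you say). The genuine gap is in the mechanism you propose for the bound $d^2-d+1\ge n$. You read the shape of the bound as pointing to a BCH/Hartmann--Tzeng/van Lint--Wilson configuration, i.e.\ you would need the zero set of $\m_s(x)$ to contain on the order of $\sqrt{n}=2^{m/2}$ consecutive powers of $\alpha$ (or an arithmetic-progression pattern of that size). Nothing in the structure of the $\nu_j$ guarantees such a long run, and no such run is what drives the result. The bound $d^2-d+1\ge n$ is the \emph{square-root bound} for quadratic-residue-like (duadic) codes, and it comes from an entirely different source: when $m$ is odd one shows $\rho_i+\rho_{n-i}\equiv 1 \pmod 2$, hence $\nu_i+\nu_{n-i}\equiv 1\pmod 2$ for every $i\in\{1,\dots,n-1\}$, so exactly one of $m_{\alpha^{i}}(x)$ and $m_{\alpha^{-i}}(x)$ divides $\m_s(x)/(x-1)$. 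In other words the nonzero exponents split into a set $Z$ and its negative $-Z$, with $Z$ the zero set of $\overline{\C_s}$ (generator $\m_s(x)/(x-1)$) and $-Z$ that of its reciprocal-image code. The square-root bound then follows by the classical quadratic-residue-code argument: for a minimum-weight odd-like codeword $c(x)$ of $\overline{\C_s}$, the product $c(x)\,c(x^{-1})$ is annihilated at every nonzero exponent and is odd-like, hence is a nonzero multiple of $1+x+\cdots+x^{n-1}$ and has weight $n$, while its weight is at most $d_{odd}^2-d_{odd}+1$; the evenness of $\C_s$ then upgrades this to the stated even $d_1$, and the same splitting argument handles $\C_s^\perp$.

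So the technical heart is not, as you anticipated, locating a consecutive-zero structure, but verifying the parity identity $\nu_i+\nu_{n-i}\equiv 1\pmod 2$ (which follows from $\rho_i+\rho_{n-i}\equiv 1\pmod 2$ together with $m$ odd) and then invoking the square-root-bound machinery for duadic codes. As written, your Hartmann--Tzeng route would not close, because the required zero configuration need not exist.
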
 

\begin{proof} 
The dimensions of $\C_{s}$ and its dual follow from Lemma \ref{lem-Inverse} and the definitions of the 
codes $\C_s$ and $\C_{s}^\perp$.  

If $m$ is odd, by Lemma \ref{lem-Inverse}, $m_1(x)=x-1$ is a divisor of the generator polynomial 
$\m_s(x)$ of  $\C_{s}$. Hence, all the codewords in $\C_{s}$ have even Hamming weights, i.e., 
$\C_s$ is an even-weight code. By definition, we have 
\begin{eqnarray}\label{eqn-march22}
\rho_i + \rho_{n-i} \equiv 1 \pmod{2} 
\end{eqnarray} 
for every $i \in \{1, 2, \cdots, n-1\}$. Note that $m$ is odd. It then follows from (\ref{eqn-march22}) 
that 
\begin{eqnarray*} 
\nu_i + \nu_{n-i} \equiv 1 \pmod{2}  
\end{eqnarray*}   
for every $i \in \{1, 2, \cdots, n-1\}$. 
Hence, one and only one of $m_{\alpha^{i}}(x)$ and $m_{\alpha^{-i}}(x)$ is a divisor of  $\m_s(x)/(x-1)$ 
for every $i \in \{1, 2, \cdots, n-1\}$.  
Let $\overline{\C_{s}}$ denote the cyclic code with generator polynomial $\m_s(x)/(x-1)$. Then  
$\overline{\C_{s}}$ contains $\C_{s}$ as its even-weight 
subcode. Using a similar approach to the proof of the square-root bound on the minimum weight for 
the quadratic residue 
codes, one proves the desired conclusions on the minimum weights of $\overline{\C_s}$ and $\C_{s}^\perp$.  
\end{proof}

\begin{example} 
Let $m=3$ and $\alpha$ be a generator of $\gf(2^m)^*$ with $\alpha^3 + \alpha + 1=0$. In this case, 
the three 2-cyclotomic cosets are 
$$ 
C_0=\{0\}, \ C_1=\{1,2,4\}, \ C_3=\{3,6,5\}. 
$$
The generator polynomial of the code $\C_s$ is 
$$ 
\m_s(x)=m_{\alpha^0}(x) m_{\alpha^{-3}}(x)=(x+1)(x^3 + x + 1)=x^4 + x^3 + x^2 + 1   
$$
and $\C_s$ is a $[7, 3, 4]$ binary cyclic code. Its dual is a $[7,4,3]$ cyclic code. Both codes are optimal. 
\end{example}

\begin{example} 
Let $m=5$ and $\alpha$ be a generator of $\gf(2^m)^*$ with $\alpha^5 + \alpha^2 + 1=0$. Then 
the generator polynomial of the code $\C_s$ is 
\begin{eqnarray*} 
\m_s(x) &=& m_{\alpha^0}(x) m_{\alpha^{-3}}(x)  m_{\alpha^{-5}}(x) m_{\alpha^{-15}}(x) \\
&=& (x + 1)(x^5 + x^3 + x^2 + x + 1) \times \\
& & (x^5 + x^4 + x^3 + x + 1) (x^5 + x^2 + 1) \\
&=& x^{16} + x^{14} + x^{13} + x^{10} + x^9 + x^8 + \\ 
& & x^7 + x^6 + x^5 + x^2 + x + 1.      
\end{eqnarray*}
and $\C_s$ is a $[31, 15, 8]$ binary cyclic code. Its dual is an $[31,16,7]$ cyclic code. Both codes are optimal. 

\end{example} 

When $m$ is odd, the code $\C_s$ has the square-root bound, although $n$ could be a composite number. 
In addition, the examples above show that the actual minimum weight could be much larger than the lower 
bound on the minimum weight.   

When $m$ is even, $x^{-1}$ is not APN. In this case, the code $\C_s$ may not have a good minimum distance. 
For example, if $m=4$, the code $\C_s$ has parameters $[15, 7, 3]$. So we are not interested in the case 
that $m$ is even.

\subsection{Binary cyclic codes from the Gold APN function}\label{sec-Gold} 

In this subsection we study the code $\C_s$ defined by the Gold APN function. To this end, we need to 
prove the following lemma.  

\begin{lemma}\label{lem-Gold} 
Let $m$ be odd. 
Let $s^{\infty}$ be the sequence of (\ref{eqn-sequence}), where $f(x)=x^{2^h+1}$, $\gcd(h,m)=1$. 
Then the linear span $\ls_s$ of $s^{\infty}$ is equal to $m+1$ and the minimal polynomial $\m_s(x)$ 
of  $s^{\infty}$ is given by 
\begin{equation}\label{eqn-Gold}
\m_s(x)= (x-1) m_{\alpha^{-(2^h+1)}}(x)
\end{equation} 
where $m_{\alpha^{-(2^h+1)}}(x)$ is the minimal polynomial of $\alpha^{-(2^h+1)}$ over $\gf(2)$. 
\end{lemma}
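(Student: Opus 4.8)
The plan is to expand the sequence $s^\infty$ in the form required by Lemma~\ref{lem-ls2} and then simply read off the index set $I$. First I would write, for each $i\ge 0$,
\begin{eqnarray*}
s_i=\tr\left((\alpha^i+1)^{2^h+1}\right)=\tr\left((\alpha^i+1)^{2^h}(\alpha^i+1)\right)
   =\tr\left(\alpha^{i(2^h+1)}+\alpha^{i2^h}+\alpha^i+1\right),
\end{eqnarray*}
using the additivity and $\gf(2)$-linearity of the trace together with $(\alpha^i)^{2^h}=(\alpha^{2^h})^i$. Since $\tr(x^2)=\tr(x)$, the middle two terms contribute $\tr(\alpha^{i2^h})+\tr(\alpha^i)$; one would like these to cancel, but $\tr(\alpha^{i2^h})=\tr(\alpha^i)$ only after we know that raising the argument to the $2$nd power (not the $2^h$th) fixes the trace, so instead I would keep them and regroup the exponent orbit. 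The cleaner route is to write $s_i=\sum_{x\in\{1,2^h,2^h+1\}}\tr(\alpha^{ix})+\tr(1)$ and then collapse each trace via $\tr(\alpha^{ix})=\sum_{u=0}^{m-1}\alpha^{ix2^u}$, so that
\begin{eqnarray*}
s_i=\sum_{u=0}^{m-1}\alpha^{i2^u}+\sum_{u=0}^{m-1}\alpha^{i2^{h+u}}+\sum_{u=0}^{m-1}\alpha^{i(2^h+1)2^u}+\tr(1).
\end{eqnarray*}

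Next I would observe that the second sum is just a reindexing of the first (replace $u$ by $u-h \bmod m$), so the first two sums cancel over $\gf(2)$, leaving
\begin{eqnarray*}
s_i=\tr(1)+\sum_{u=0}^{m-1}\left(\alpha^{(2^h+1)2^u}\right)^i.
\end{eqnarray*}
Here $\tr(1)=m\bmod 2=1$ since $m$ is odd, and $1=(\alpha^0)^i=(\alpha^n)^i$ with $0\in\Z_n$; more importantly, the exponents $(2^h+1)2^u \bmod n$ for $u=0,\dots,m-1$ run exactly through the $2$-cyclotomic coset $C_{2^h+1}$. So in the notation of Lemma~\ref{lem-ls2}, writing $s_t=\sum_i c_i\alpha^{it}$, the nonzero coefficients $c_i$ occur precisely at $i=0$ and at $i\in C_{2^h+1}$, giving $I=\{0\}\cup C_{2^h+1}$.

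It then remains to check that these $m+1$ indices are genuinely distinct, i.e.\ that $|C_{2^h+1}|=m$ and $0\notin C_{2^h+1}$. That $0$ is not in the coset is clear since $2^h+1\not\equiv 0\pmod n$ for odd $m\ge 3$. For the coset size one shows $\ell_{2^h+1}=m$: if $2^u(2^h+1)\equiv 2^h+1\pmod{2^m-1}$ then $2^u\equiv 1$, forcing $m\mid u$, using $\gcd(2^h+1,2^m-1)=1$, which holds because $\gcd(h,m)=1$ with $m$ odd (a standard fact: $\gcd(2^h+1,2^m-1)=2^{\gcd(2h,m)}-1$ divided appropriately, and $\gcd(2h,m)=\gcd(h,m)=1$ here since $m$ is odd). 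Hence $|I|=m+1$, so the linear span is $m+1$ by Lemma~\ref{lem-ls2}, and
\begin{eqnarray*}
\m_s(x)=\prod_{i\in I}(1-\alpha^i x)=(1-x)\prod_{i\in C_{2^h+1}}(1-\alpha^i x).
\end{eqnarray*}
Finally I would convert this from the reciprocal-root convention of Lemma~\ref{lem-ls2} to the minimal-polynomial form: $\prod_{i\in C_{2^h+1}}(1-\alpha^i x)$ has the same roots (up to reciprocal) as $m_{\alpha^{-(2^h+1)}}(x)$, since the set $\{-i : i\in C_{2^h+1}\}=C_{-(2^h+1)}$ is the conjugacy class of $\alpha^{-(2^h+1)}$, and $(1-x)$ corresponds to $m_{\alpha^0}(x)=x-1$. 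This yields $\m_s(x)=(x-1)\,m_{\alpha^{-(2^h+1)}}(x)$, as claimed. The only step requiring any care is the cancellation of the two length-$m$ sums and the bookkeeping that identifies the surviving exponents with a single full-size cyclotomic coset; everything else is routine. I expect the main obstacle to be making the reindexing-and-cancellation argument airtight (equivalently, verifying directly that $\tr((\alpha^i+1)^{2^h+1})=1+\tr(\alpha^{(2^h+1)i})$), together with cleanly invoking the gcd fact that guarantees $|C_{2^h+1}|=m$.
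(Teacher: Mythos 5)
Your proposal is correct and follows essentially the same route as the paper: expand $(\alpha^i+1)^{2^h+1}$ in characteristic $2$, cancel the $\tr(\alpha^{i2^h})$ and $\tr(\alpha^i)$ terms to get $s_i=1+\tr(\alpha^{i(2^h+1)})$, use $\gcd(2^h+1,2^m-1)=\gcd(2^{2h}-1,2^m-1)=2^{\gcd(2h,m)}-1=1$ to conclude $|C_{2^h+1}|=m$, and read off the linear span and minimal polynomial from Lemma~\ref{lem-ls2}. Your extra bookkeeping (checking $0\notin C_{2^h+1}$ and converting the reciprocal-root product into $(x-1)\,m_{\alpha^{-(2^h+1)}}(x)$) only makes explicit what the paper leaves implicit.
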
 

\begin{proof} 
It is easily seen that 
\begin{eqnarray}\label{eqn-Gold2}
s_t = 1+\tr\left(\alpha^{t(2^h+1)}\right) 
= 1+ \sum_{j=0}^{m-1} (\alpha^t)^{(2^h+1)2^j}. 
\end{eqnarray}

By assumption, $\gcd(h, m)=1$. We have then $\gcd(2^h-1, 2^m-1)=1$. It then follows that 
$$ 
\gcd(2^h+1, 2^m-1)=\gcd(2^{2h}-1, 2^m-1)=2^{\gcd(2h, m)}-1=1. 
$$
Therefore, the size of the 2-cyclotomic coset contating $2^h+1$ is $m$. 
The desired conclusions on the linear span and the minimal polynomial $\m_s(x)$ then follow from Lemma \ref{lem-ls2} 
and (\ref{eqn-Gold2}). 
\end{proof}

The following theorem provides information on the code $\C_{s}$.    

\begin{theorem} 
Let $m$ be odd. 
The binary code $\C_{s}$ defined by the sequence of Lemma \ref{lem-Gold} has parameters 
$[2^m-1, 2^{m}-2-m, d]$ and generator polynomial $\m_s(x)$ of (\ref{eqn-Gold}), where 
$d \ge 4$.  
\end{theorem}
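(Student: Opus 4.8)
The plan is to read off the length and dimension directly from Lemma~\ref{lem-Gold}, and then to establish $d\ge 4$ by combining a parity argument with an exclusion of weight-two codewords. The code $\C_s$ has length $n=2^m-1$ by construction, and by the definition in~(\ref{eqn-defseqcode}) together with Lemma~\ref{lem-ls1} its generator polynomial equals the minimal polynomial $\m_s(x)$ of the defining sequence, which Lemma~\ref{lem-Gold} identifies as $(x-1)\,m_{\alpha^{-(2^h+1)}}(x)$. Since $\gcd(h,m)=1$ gives $\gcd(2^h+1,2^m-1)=1$ (as shown in the proof of Lemma~\ref{lem-Gold}), the element $\alpha^{-(2^h+1)}$ is not $1$, so $x-1$ and $m_{\alpha^{-(2^h+1)}}(x)$ are distinct irreducible factors of $x^n-1$ and $\deg\m_s(x)=1+m$. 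Consequently $\dim\C_s=n-\deg\m_s(x)=2^m-2-m$, which yields the claimed parameters $[2^m-1,\,2^m-2-m,\,d]$.

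For the lower bound on $d$, first observe that $x-1$ divides the generator polynomial $g(x)=\m_s(x)$, so every codeword $c(x)$ satisfies $c(1)=0$; hence every nonzero codeword has even Hamming weight and $d$ is even. It therefore remains to rule out codewords of weight $2$. Any such codeword corresponds to a polynomial $x^a+x^b$ with $0\le b<a\le n-1$, that is, $x^b(x^\ell-1)$ with $\ell=a-b\in\{1,\dots,n-1\}$. Membership in the cyclic code means $g(x)\mid x^b(x^\ell-1)$ in $\gf(2)[x]$, and since $g(0)=\m_s(0)=1$ the polynomial $g(x)$ is coprime to $x$, so $g(x)\mid x^\ell-1$. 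In particular $m_{\alpha^{-(2^h+1)}}(x)\mid x^\ell-1$, i.e.\ $\alpha^{-(2^h+1)\ell}=1$, equivalently $(2^h+1)\ell\equiv 0\pmod{2^m-1}$; using $\gcd(2^h+1,2^m-1)=1$ this forces $\ell\equiv 0\pmod{2^m-1}$, contradicting $1\le\ell\le n-1$. Thus $d\ge 3$, and together with the parity constraint $d\ge 4$.

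I do not expect a genuine obstacle here; the only place that needs a little care is the weight-two analysis, which relies on the two facts that $g(x)$ is coprime to $x$ and that $\gcd(2^h+1,2^m-1)=1$, both already available from Lemma~\ref{lem-Gold} and its proof. It is worth stressing that the statement only claims the bound $d\ge 4$: the zeros of $\C_s$, namely $\alpha^0$ and the conjugates of $\alpha^{-(2^h+1)}$, do not in general contain a long run of consecutive powers of $\alpha$, so there is no cheap BCH-type improvement, and pinning down the exact minimum distance would require a separate and more delicate analysis depending on $h$ and $m$.
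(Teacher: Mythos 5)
Your proposal is correct and follows essentially the same route as the paper: the dimension is read off from the degree $1+m$ of $\m_s(x)$, the factor $x-1$ forces all weights to be even, and weight-two words are excluded because $\alpha^{-(2^h+1)}$ is a primitive element (equivalently, $\gcd(2^h+1,2^m-1)=1$), so $m_{\alpha^{-(2^h+1)}}(x)$ divides no $x^\ell-1$ with $1\le \ell\le n-1$. The paper phrases this by viewing $\C_s$ as the even-weight subcode of the code generated by $m_{\alpha^{-(2^h+1)}}(x)$ alone, but the substance is identical to your direct argument.
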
 

\begin{proof} 
The dimension of $\C_{s}$ follows from Lemma \ref{lem-Gold} and the definition of the 
code $\C_s$.  We need to prove the conclusion on the minimum distance $d$ of $\C_{s}$. 
To this end, let $\overline{\C_{s}}$ denote the cyclic code with generator polynomial $m_{\alpha^{-(2^h+1)}}(x)$. 
Then $\C_{s}$ is the even-weight subcode of   $\overline{\C_{s}}$. Since $m_{\alpha^{-(2^h+1)}}(x)$ is 
a primitive polynomial with order $2^m-1$, it does not divide $1+x^j$ for any $j$ with $2 \le j \le n-1$. 
This means that the minimum weight $\bar{d}$ of $\overline{\C_{s}}$ is at least 3. Since $d$ is even, 
we have then $d \ge 4$. 
\end{proof}

\begin{example} 
Let $(m,h)=(3,1)$ and $\alpha$ be a generator of $\gf(2^m)^*$ with $\alpha^3 + \alpha + 1=0$. Then 
$\C_s$ is a $[7, 3, 4]$ binary code with generator polynomial  
$$ 
\m_s(x)=x^4 + x^3 + x^2 + 1.    
$$
Its dual is a $[7,4,3]$ cyclic code. Both codes are optimal. 
\end{example}

\begin{example} 
Let $(m, h)=(5,1)$ and $\alpha$ be a generator of $\gf(2^m)^*$ with $\alpha^5 + \alpha^2 + 1=0$. Then 
the generator polynomial of the code $\C_s$ is 
\begin{eqnarray*} 
\m_s(x)=x^6 + x^5 + x^4 + 1
\end{eqnarray*}
and $\C_s$ is a $[31, 25, 4]$ binary cyclic code. Its dual is a $[31,6,15]$ cyclic code. Both codes are optimal. 

\end{example} 

\begin{example} 
Let $(m, h)=(7,2)$ and $\alpha$ be a generator of $\gf(2^m)^*$ with $\alpha^7 + \alpha + 1=0$. Then 
the generator polynomial of the code $\C_s$ is 
\begin{eqnarray*} 
\m_s(x)=x^8 + x^4 + x + 1
\end{eqnarray*}
and $\C_s$ is a $[127, 119, 4]$ binary cyclic code. Its dual is a $[127,8,63]$ cyclic code. Both codes are optimal. 
\end{example} 

When $m$ is odd, the code $\C_s$ is optimal by the sphere packing bound and may be equivalent to the even-weight 
subcode of the Hamming code. When $m$ is even, one can prove 
that $\C_s$ is a $[2^m-1, 2^m-1-m, 2]$ code and is almost optimal. We are not interested in this case as $d=2$.

\subsection{Binary cyclic codes from the Welch APN function}\label{sec-Welch} 

In this subsection we study the code $\C_s$ defined by the Welch APN function. Before doing this, we need to 
prove the following lemma.  

\begin{lemma}\label{lem-Welch} 
Let $m =2t+1 \ge 7$. 
Let $s^{\infty}$ be the sequence of (\ref{eqn-sequence}), where $f(x)=x^{2^{t}+3}$. 
Then the linear span $\ls_s$ of $s^{\infty}$ is equal to $5m+1$ and the minimal polynomial $\m_s(x)$ 
of  $s^{\infty}$ is given by 
\begin{eqnarray}\label{eqn-Welch}
\lefteqn{\m_s(x)=} \nonumber \\
& (x-1) m_{\alpha^{-1}}(x) m_{\alpha^{-3}}(x) m_{\alpha^{-(2^t+1)}}(x)m_{\alpha^{-(2^t+2)}}(x) m_{\alpha^{-(2^t+3)}}(x) 
    \nonumber \\ 
\end{eqnarray} 
where $m_{\alpha^{-j}}(x)$ is the minimal polynomial of $\alpha^{-j}$ over $\gf(2)$. 
\end{lemma}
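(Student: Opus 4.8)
The plan is to follow the same strategy used in the proofs of Lemmas~\ref{lem-Gold} and~\ref{lem-Inverse}: expand the sequence $s^\infty$ into its trace/monomial form and then read off the minimal polynomial and linear span via Lemma~\ref{lem-ls2}. First I would write out
$$
s_t = \tr\!\left(f(\alpha^t+1)\right) = \tr\!\left((\alpha^t+1)^{2^t+3}\right),
$$
and expand $(\alpha^t+1)^{2^t+3}$. The key observation is that in characteristic $2$ the exponent $2^t+3 = 2^t + 2 + 1$ decomposes multiplicatively as $(\alpha^t+1)^{2^t}\cdot(\alpha^t+1)^2\cdot(\alpha^t+1) = ((\alpha^t)^{2^t}+1)((\alpha^t)^2+1)((\alpha^t)+1)$. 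Multiplying this out yields a sum of eight monomials in $\alpha^t$ with exponents drawn from $\{0,\,1,\,2,\,3,\,2^t,\,2^t+1,\,2^t+2,\,2^t+3\}$ (each appearing once, since the three factors contribute disjoint bit-sets up to the carries $2+1=3$). Applying $\tr(\cdot)$ and using $\tr(z)=\sum_{u=0}^{m-1}z^{2^u}$, one gets
$$
s_t = \sum_{e \in E} \sum_{u=0}^{m-1} (\alpha^t)^{e\,2^u \bmod n},
$$
where $E=\{1,3,2^t+1,2^t+2,2^t+3\}$ together with the constant term coming from the exponent $0$ (which contributes $\tr(1) = m \bmod 2 = 1$ since $m$ is odd).

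The second step is to show that the five $2$-cyclotomic cosets $C_1$, $C_3$, $C_{2^t+1}$, $C_{2^t+2}$, $C_{2^t+3}$ are pairwise distinct and each has full size $m$, and that none of them equals $C_0=\{0\}$; together with the constant term this gives index set $I$ of size $5m+1$, hence linear span $5m+1$ and $\m_s(x) = (x-1)\prod_{e\in E} m_{\alpha^{-e}}(x)$ by Lemma~\ref{lem-ls2} (the sign in the exponent comes from the fact that Lemma~\ref{lem-ls2} produces $\prod(1-\alpha^i x)$, whose reciprocal-root structure forces $\alpha^{-e}$ as roots of the polynomial written in the usual ascending order — exactly as in Lemmas~\ref{lem-Inverse} and~\ref{lem-Gold}). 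That each coset has size $m$: for $e=1$ and $e=3$ this is immediate since $\gcd(3,n)=\gcd(3,2^m-1)$ equals $1$ when $m$ is odd; for $e=2^t+2=2(2^t+1)$ the coset size equals that of $2^t+1$; for $2^t+1$ and $2^t+3$ one checks $\gcd(2^t+1,2^m-1)$ and $\gcd(2^t+3,2^m-1)$ — here $2^t+1 = (2^m+2^{t+1})/2^{t+1}$-type manipulations or, more directly, $2^t + 1 \equiv 2^t + 2^{m} \pmod{n}$ arguments show the relevant gcd divides a small constant, and the hypothesis $m=2t+1\ge 7$ rules out the sporadic coincidences that occur for tiny $m$.

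The genuine obstacle is the \emph{distinctness} of the five cosets: I must verify that $e\,2^u \not\equiv e' \pmod{2^m-1}$ for $u=0,\dots,m-1$ and any two distinct $e,e'\in E$. Because $t$ is roughly $m/2$, the exponents $2^t+c$ straddle the "middle" of the bit-range, and multiplying by $2^u$ wraps around mod $2^m-1$ in ways that can, a priori, collide with $1$ or $3$. The clean way to handle this is to compare, for each candidate collision, the values as integers in $\{0,\dots,n-1\}$ after reduction, bounding them by size: $1$ and $3$ are small, $2^t+1,2^t+2,2^t+3$ lie near $2^t$, and any cyclic shift $e\cdot 2^u \bmod n$ either stays near $2^{t+u}$ or, upon wraparound, lands near $2^{u-1}$ or $2^{u-1}+2^{t+u-?}$; a careful case analysis on $u$ (small $u$, $u\approx t$, $u$ large) combined with $m\ge 7$ excludes every coincidence. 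I would organize this as a short sublemma: "for $m=2t+1\ge 7$, the integers $1,3,2^t+1,2^t+2,2^t+3$ lie in pairwise distinct $2$-cyclotomic cosets mod $2^m-1$, each of full size $m$." Once that sublemma is in hand, the lemma follows immediately from Lemma~\ref{lem-ls2} and the monomial expansion, exactly paralleling the Gold and inverse cases, and the hypothesis $m\ge 7$ is seen to be precisely what is needed to avoid the degenerate small-field collisions.
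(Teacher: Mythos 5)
Your overall route is the same as the paper's: expand $(x+1)^{2^t+3}=(x^{2^t}+1)(x^2+1)(x+1)$ into monomials, observe that under the trace the exponents $2^t$, $2$ and $1$ all lie in $C_1$ so that the three corresponding terms collapse to a single $\tr(x)$ (an odd number of copies survives in characteristic $2$ --- you should say this explicitly, since it is why $E$ has five elements rather than seven), and then reduce everything to showing that $C_1, C_3, C_{2^t+1}, C_{2^t+2}, C_{2^t+3}$ are pairwise disjoint and each of full size $m$. The paper does exactly this, proving the size claims by Euclidean gcd computations and the disjointness by bounding differences $e\,2^u - e'$ between $-n$ and $n$, which is the case analysis you sketch.

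There is, however, one step that fails as written: you claim $2^t+2=2(2^t+1)$ and conclude that $C_{2^t+2}$ has the same size as $C_{2^t+1}$. The identity is false --- $2(2^t+1)=2^{t+1}+2$ --- and, worse, if it were true it would put $2^t+2$ and $2^t+1$ in the \emph{same} $2$-cyclotomic coset, contradicting the disjointness you need two sentences later. The correct factorization is $2^t+2=2(2^{t-1}+1)$, so $C_{2^t+2}=C_{2^{t-1}+1}$, and establishing $|C_{2^t+2}|=m$ requires a separate computation of $\gcd(2^{t-1}+1,\,2^{2t+1}-1)$; the paper does this by a Euclidean reduction to $\gcd(2^{t-1}+1,\,2^3-1)=1$. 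This gap is local and repairable (a similar reduction handles $2^t+3$ via $\gcd(\cdot,17)$, as the paper shows), but as stated your size argument for $C_{2^t+2}$ does not go through. The remaining parts --- the gcd arguments for $3$, $2^t+1$, $2^t+3$, the disjointness case analysis, the role of $m\ge 7$ in excluding small sporadic collisions, and the final appeal to Lemma \ref{lem-ls2} giving $\ls_s=5m+1$ and the stated $\m_s(x)$ --- are consistent with the paper's proof.
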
 

\begin{proof} 
By definition, we have  
\begin{eqnarray}\label{eqn-Welch2}
s_i &=& \tr\left((\alpha^i+1)^{2^t+2+1}\right) \nonumber \\ 
&=& \tr\left( (\alpha^i)^{2^t+3} + (\alpha^i)^{2^t+2} + (\alpha^i)^{2^t+1} +(\alpha^i)^{3} +\alpha^i +1    \right) \nonumber \\ 
&=& \sum_{j=0}^{m-1} (\alpha^i)^{(2^t+3)2^j} + \sum_{j=0}^{m-1}  (\alpha^i)^{(2^t+2)2^j} + \sum_{j=0}^{m-1}  (\alpha^i)^{(2^t+1)2^j} + 
        \nonumber \\
& &  \sum_{j=0}^{m-1}   (\alpha^i)^{3\times2^j} +  \sum_{j=0}^{m-1}   (\alpha^i)^{2^j} + 1. 
\end{eqnarray}

We will prove that the following $2$-cyclotomic cosets are pairwise disjoint: 
\begin{eqnarray}\label{eqn-fivecosets}
C_1, \ C_3, \ C_{2^t+1}, \ C_{2^t+2}, \ C_{2^t+3}.  
\end{eqnarray} 
Note that two cyclotomic cosets are either identical or disjoint. 
Since $C_1$ contains only powers of $2$, it cannot be identical with any of the remaining four 2-cyclotomic cosets. 

We now prove that $C_3 \cap C_{2^t+j}=\emptyset$ for all $j \in \{1,2,3\}$. Define 
$$ 
\Delta(h,j)=3\times 2^h-2^t-j 
$$ 
where $0 \le h \le m-1$.  When $h=m-1=2t$, we have 
$$ 
\Delta(h,j)=2^m-1+ 2^{m-1}-2^t-(j-1).  
$$ 
Note that $0 < 2^{m-1}-2^t-(j-1)<n$. It follows that $n$ does not divide $\Delta(h,j)$ in this case. 
When $h \le m-2=2t-1$, it is easily checked that $\Delta(h,j)\ne 0$ for any $h$ as $t \ge 3$. In this case 
we have 
$$ 
-n  < -(2^t+j) \le \Delta(h,j) =2^{m-1}+2^{m-2}-2^t -j < n. 
$$ 
Hence  $n$ does not divide $\Delta(h,j)$ for all $h \le m-2=2t-1$. It then follows that 
$C_3 \cap C_{2^t+j}=\emptyset$ for all $j \in \{1,2,3\}$.  

One can similarly prove that the three cyclotomic cosets $C_{2^t+1}, C_{2^t+2}, C_{2^t+3}$ are pairwise 
disjoint. We omit the details here.   
 
We now prove that all the five cyclotomic cosets of (\ref{eqn-fivecosets}) have size $m$. Clearly 
$\ell_1=|C_1|=\ell_{-1}=m$. Note that 
$$ 
\gcd(2^{2t+1}-1, 3)=\gcd(2^{2t+1}-1, 2^2-1)=2^{\gcd(2t+1,2)}-1=1. 
$$
We have $\ell_3=|C_3|=\ell_{-3}=m$. 

Since $\gcd(2^{2t+1}-1, 2^t-1)=1$, we have 
\begin{eqnarray*} 
\gcd(2^{2t+1}-1, 2^t+1)
&=& \gcd(2^{2t+1}-1, 2^{2t}-1) \\
&=& 2^{\gcd(2t+1, 2t)}-1 \\
&=& 1. 
\end{eqnarray*} 
Hence $\ell_{2^t+1}=|C_{2^t+1}|=\ell_{-(2^t+1)}=m$.

We now compute 
$$
\gcd:=\gcd(2^{2t+1}-1, 2^t+2)=\gcd(2^{2t+1}-1, 2^{t-1}+1).
$$ 
Note that 
$$ 
2^{2t+1}-1=2^{t+2}(2^{t-1}+1) -(2^{t+2}+1). 
$$
We have $\gcd=\gcd(2^{t+2}+1, 2^{t-1}+1)$. Since   
$$ 
2^{t+2}+1=2^3(2^{t-1}+1)-(2^3-1), 
$$
we obtain that $\gcd=\gcd(2^{t-1}+1, 2^3-1)$. 
Let $t_1=\lfloor (t-1)/3\rfloor$. Using the Euclidean division recursively, one gets 
\begin{eqnarray*}
\gcd &=& \gcd(2^{t-1-3t_1}+1, 2^3-1) \\
&=& \left\{ \begin{array}{ll} 
       \gcd(2^0+1, 2^3-1)=1 & \mbox{if } t-1 \equiv 0 \pmod{3} \\
       \gcd(2^1+1, 2^3-1)=1 & \mbox{if } t-1 \equiv 1 \pmod{3} \\
       \gcd(2^2+1, 2^3-1)=1 & \mbox{if } t-1 \equiv 2 \pmod{3}.               
       \end{array} 
       \right.        
\end{eqnarray*}
Therefore $\ell_{2^t+2}=|C_{2^t+2}|=\ell_{-(2^t+2)}=m$.

We now prove that  
$
\gcd:=\gcd(2^{2t+1}-1, 2^t+3)=1.
$
The conclusion is true for all $1 \le t \le 4$. So we consider only the case that $t \ge 5$.  

Note that 
$$ 
2^{2t+1}-1=(2^{t+1}-6)(2^{t}+3) + 17. 
$$
We have $\gcd=\gcd(2^{t}+3, 17)$. Since   
$$ 
2^{t}+3=2^{t-4}(2^{4}+1)-(2^{t-4}-3), 
$$
we obtain that $\gcd=\gcd(2^{t-4}-3, 2^3-1)$. 
Let $t_1=\lfloor t/4\rfloor$. Using the Euclidean division recursively, one gets 
\begin{eqnarray*}
\gcd &=& \gcd(2^{t-1-4t_1}+3 \times (-1)^{t_1}, 2^3-1) \\
&=& \left\{ \begin{array}{ll} 
       \gcd(2^0+(-1)^{t_1}3, 17)=1 & \mbox{if } t \equiv 0 \pmod{4}, \\
       \gcd(2^1+(-1)^{t_1}3, 17)=1 & \mbox{if } t \equiv 1 \pmod{4}, \\  
       \gcd(2^2+(-1)^{t_1}3, 17)=1 & \mbox{if } t \equiv 2 \pmod{4}, \\
       \gcd(2^3+(-1)^{t_1}3, 17)=1 & \mbox{if } t \equiv 3 \pmod{4}.          
       \end{array} 
       \right.        
\end{eqnarray*}
Therefore $\ell_{2^t+3}=|C_{2^t+3}|=\ell_{-(2^t+3)}=m$.

The desired conclusions on the linear span and the minimal polynomial $\m_s(x)$ then follow from Lemma \ref{lem-ls2}, 
(\ref{eqn-Welch2}) and the conclusions on the five cyclotomic cosets and their sizes. 
\end{proof}

The following theorem provides information on the code $\C_{s}$.    

\begin{theorem}\label{thm-Welch} 
Let $m \ge 7$ be odd. 
The binary code $\C_{s}$ defined by the sequence of Lemma \ref{lem-Welch} has parameters 
$[2^m-1, 2^{m}-2-5m, d]$ and  generator polynomial $\m_s(x)$ of (\ref{eqn-Welch}), where $d \ge 6$.  
\end{theorem}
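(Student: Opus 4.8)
The plan is to follow exactly the template established for the inverse and Gold APN codes in the previous two subsections. First I would read off the dimension: by Lemma~\ref{lem-Welch} the generator polynomial $\m_s(x)$ has degree $1+5m$ (the factor $x-1$ contributes $1$ and each of the five minimal polynomials $m_{\alpha^{-1}}, m_{\alpha^{-3}}, m_{\alpha^{-(2^t+1)}}, m_{\alpha^{-(2^t+2)}}, m_{\alpha^{-(2^t+3)}}$ contributes $m$, these being distinct irreducible polynomials since the corresponding cyclotomic cosets are pairwise disjoint of size $m$). Hence the dimension of $\C_s$ is $n-\deg\m_s(x)=(2^m-1)-(5m+1)=2^m-2-5m$, as claimed.

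Next I would handle the minimum distance. Since $x-1\mid \m_s(x)$, every codeword of $\C_s$ is even-like, so $d$ is even; it therefore suffices to show $d\ge 5$, equivalently that no nonzero codeword of $\C_s$ has weight $\le 4$. I would pass to the larger code $\overline{\C_s}$ with generator polynomial $\m_s(x)/(x-1)=m_{\alpha^{-1}}(x)m_{\alpha^{-3}}(x)m_{\alpha^{-(2^t+1)}}(x)m_{\alpha^{-(2^t+2)}}(x)m_{\alpha^{-(2^t+3)}}(x)$, of which $\C_s$ is the even-weight subcode, and argue that $\overline{\C_s}$ has minimum weight at least $5$; then $d\ge 5$ forces $d\ge 6$. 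Because the zeros of $\overline{\C_s}$ include $\alpha^{-1}$ and $\alpha^{-3}$, the code $\overline{\C_s}$ is contained in the cyclic code with zeros $\alpha^{-1},\alpha^{-3}$ (equivalently, after the automorphism $x\mapsto x^{-1}$, the cyclic code with defining zeros $\alpha,\alpha^3$). By the standard BCH-type argument this latter code has designed distance coming from the consecutive roots one can extract; but $\{1,3\}$ alone is not a run of consecutive integers, so I instead invoke the Hartmann--Tzeng bound (or a direct Vandermonde/van~Lint--Wilson shifting argument) using the root set $\{1,3\}$ together with the constraint that $x-1$ is excluded, to get minimum distance $\ge 5$ on $\overline{\C_s}$. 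Concretely: $\{1,2,3\}$ would be three consecutive roots giving BCH distance $4$ in a related code, and one extra coprime ``shift'' in the Hartmann--Tzeng bound upgrades $4$ to $5$; I would spell this out with the set $\{1,3\}$ plus a second arithmetic progression, checking the coprimality hypotheses modulo $n=2^m-1$.

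The main obstacle is precisely pinning down which classical bound delivers $d\ge 5$ cleanly, since the five root exponents $1,3,2^t+1,2^t+2,2^t+3$ do not form an obvious consecutive block modulo $n$. I would therefore either (i) restrict attention to the sub-collection of zeros that \emph{does} form a usable configuration for the Hartmann--Tzeng bound, namely exploit that $\{2^t+1,2^t+2,2^t+3\}$ are three consecutive integers so the code with these zeros already has BCH distance $4$, and then add the zero $\alpha^{-3}$ (or $\alpha^{-1}$) as a Hartmann--Tzeng ``extra diagonal'' with step coprime to $n$ to push the bound to $5$; or (ii) give a direct low-weight codeword analysis: suppose $c(x)\in\overline{\C_s}$ has weight $w\le 4$, write $c(x)=x^{a_1}+\cdots+x^{a_w}$, and derive from $c(\alpha^{-1})=c(\alpha^{-3})=c(\alpha^{-(2^t+1)})=0$ a system of power-sum equations in the $\le 4$ unknowns $\beta_i=\alpha^{-a_i}$ that forces two of the $\beta_i$ to coincide, a contradiction. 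Either way the argument is short; the write-up I would give is the HT-bound version, mirroring the phrase ``using a similar approach to the proof of the square-root bound'' used earlier, e.g.\ ``using the Hartmann--Tzeng bound with the set of zeros containing $\alpha^{-(2^t+1)},\alpha^{-(2^t+2)},\alpha^{-(2^t+3)}$ and the extra zero $\alpha^{-3}$, one obtains $\bar d\ge 5$, whence $d\ge 6$.''
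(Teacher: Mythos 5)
Your dimension argument is correct and matches the paper. The gap is in the minimum distance part: you never actually complete an argument that $\overline{\C_s}$ has minimum weight at least $5$, and you explicitly flag this as the ``main obstacle.'' Both routes you sketch (a Hartmann--Tzeng application with $\alpha^{-3}$ or $\alpha^{-1}$ as an extra shift, or a direct power-sum analysis of weight-$\le 4$ codewords) are left unexecuted, and the Hartmann--Tzeng route in particular is not obviously going to work as stated: the exponents $1$ and $3$ do not sit in an arithmetic-progression configuration with the block $\{2^t+1,2^t+2,2^t+3\}$ in any way you have verified, and the coprimality and containment hypotheses you defer to ``checking'' are exactly where the work lies.

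The idea you are missing is much simpler: since $C_1=\{1,2,4,\ldots,2^{m-1}\}$ and $t=(m-1)/2\le m-1$, the exponent $2^t$ itself lies in $C_1$. Hence the defining set of (the reciprocal of) $\m_s(x)/(x-1)$, which is $C_1\cup C_3\cup C_{2^t+1}\cup C_{2^t+2}\cup C_{2^t+3}$, contains the \emph{four} consecutive integers $2^t,\,2^t+1,\,2^t+2,\,2^t+3$. The plain BCH bound then gives minimum weight at least $5$ for the code generated by the reciprocal of $\m_s(x)/(x-1)$, hence for $\overline{\C_s}$ (same weight distribution), and evenness of $d$ yields $d\ge 6$. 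This is exactly what the paper does; no Hartmann--Tzeng refinement and no use of the zeros $\alpha^{-1}$, $\alpha^{-3}$ beyond the membership $2^t\in C_1$ is needed. Your proposal is salvageable, but as written it does not constitute a proof of $d\ge 6$.
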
 

\begin{proof} 
The dimension of $\C_{s}$ follows from Lemma \ref{lem-Welch} and the definition of the 
code $\C_s$.  We need to prove the conclusion on the minimum distance $d$ of $\C_{s}$. 
To this end, let $\overline{\C_{s}}$ denote the cyclic code with generator polynomial $\m_s(x)/(x-1)$. 
Then $\C_{s}$ is the even-weight subcode of   $\overline{\C_{s}}$. 

Note that 
$$ 
C_{1} \cup C_{2^t+1} \cup  C_{2^t+2} \cup  C_{2^t+2}  \supset \{2^t, 2^t+1, 2^t+2, 2^t+3\}. 
$$ 
By the BCH bound, the cyclic code generated by the reciprocal of $\m_s(x)/(x-1)$ has minimum weight 
at least 5. So does $\overline{\C_{s}}$. Note that $d$ is even. It then follows that $d \ge 6$.  
\end{proof}

\begin{example} 
Let $m=3$ and $\alpha$ be a generator of $\gf(2^m)^*$ with $\alpha^3 + \alpha + 1=0$. Then 
$\C_s$ is a $[7, 3, 4]$ binary code with generator polynomial  
$$ 
\m_s(x)=x^4 + x^3 + x^2 + 1.    
$$
Its dual is a $[7,4,3]$ cyclic code. Both codes are optimal. 
\end{example}

\begin{example} 
Let $m=5$ and $\alpha$ be a generator of $\gf(2^m)^*$ with $\alpha^5 + \alpha^2 + 1=0$. Then 
the generator polynomial of the code $\C_s$ is 
\begin{eqnarray*} 
\m_s(x)=x^{16} + x^{15} + x^{13} + x^{12} + x^8 + x^6 + x^3 + 1
\end{eqnarray*}
and $\C_s$ is a $[31, 15, 8]$ binary cyclic code. Its dual is a $[31,16,7]$ cyclic code. Both codes are optimal. 
\end{example} 

\begin{example} 
Let $m=7$ and $\alpha$ be a generator of $\gf(2^m)^*$ with $\alpha^7 + \alpha + 1=0$. Then 
the generator polynomial of the code $\C_s$ is 
\begin{eqnarray*} 
\m_s(x) = & x^{36} + x^{34} + x^{33} + x^{32} + x^{29} + x^{28} + x^{27} +  
  x^{26} + x^{25} + \\ 
  &  x^{24} +  x^{21} + x^{12} + x^{11} + x^9 + x^7 + x^6 + x^5 + x^3 + x + 1  
\end{eqnarray*}
and $\C_s$ is a $[127, 91, 8]$ binary cyclic code. 
\end{example}

\subsection{Binary cyclic codes from the function $f(x)=x^{2^h-1}$}\label{sec-2hminus1} 

Functions over $\gf(2^m)$ of the form $f(x)=x^{2^h-1}$ may have good nonlinearity \cite{BCC}.  
Let $h$ be a positive integer satistying the following condition: 
\begin{eqnarray}\label{eqn-2m1cond} 
1 \le h \le \left\{ \begin{array}{l} 
                      (m-1)/2 \mbox{ if $m$ is odd and} \\
                      (m-2)/2 \mbox{ if $m$ is even.}   
                           \end{array} 
                           \right. 
\end{eqnarray}

In this subsection, we deal with the binary code $\C_s$ defined by the sequence $s^{\infty}$ 
of (\ref{eqn-sequence}), where $f(x)=x^{2^h-1}$. The binary code of this subsection can be 
viewed as a special case of the code of Section \ref{sec-qhminus1}. However, in the special case $q=2$ 
we are able to obtain better results, and will need the special techniques of this subsection to 
handle the codes of Section \ref{sec-Kasami}. Hence, we have to treat the code of this subsection 
here separately.   

We first prove a number of auxiliary results on $2$-cyclotomic cosets, which are stated in the 
following lemmas. 

\begin{lemma}\label{lem-2f261} 
For any $j$ with $1 \le j \le 2^h$, the size $\ell_j=|C_j|=m$. 
\end{lemma}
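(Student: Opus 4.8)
The plan is to show that for each $j$ with $1 \le j \le 2^h$, the cyclotomic coset $C_j = \{j, 2j, 4j, \ldots\} \pmod{n}$ has exactly $m$ elements, equivalently that $\ell_j$, the multiplicative order of $2$ modulo $n/\gcd(n,j)$, equals $m$. Since $\ell_j \mid m$ always holds, and $m$ is prime would make this trivial, I expect the argument to hinge on showing $\ell_j$ cannot be a proper divisor of $m$ — but since $m$ need not be prime here, the real content is a size/spacing argument rather than a divisibility one. The key observation is that every element of $C_j$ is congruent to $2^u j \bmod n$ for some $u$, and these residues, when $j$ is small, are ``spread out'' in a way incompatible with a short period.

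First I would suppose for contradiction that $\ell := \ell_j < m$, so $\ell$ is a proper divisor of $m$ and in particular $\ell \le m/2$. Then $2^\ell j \equiv j \pmod{n}$, i.e. $n \mid (2^\ell - 1) j$. Since $1 \le j \le 2^h$ and $2h \le m-1$ (so $2^h$ is small compared to $\sqrt{n}$), I would bound: $(2^\ell - 1) j \le (2^{m/2} - 1) 2^h < 2^{m/2} \cdot 2^{(m-1)/2} = 2^{m - 1/2} < 2^m - 1 = n$ when $\ell \le m/2$ and $h \le (m-1)/2$; one must be slightly careful in the even case where $h \le (m-2)/2$, but the inequality only gets better there. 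Hence $0 < (2^\ell - 1)j < n$, so $n$ cannot divide it — unless $\ell = 0$, which is impossible for $j \ge 1$. This contradiction forces $\ell_j = m$.

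The main obstacle I anticipate is handling the edge of the inequality cleanly: when $\ell$ is the largest proper divisor of $m$ it could be as large as $m/2$ (when $m$ is even), and then $(2^{m/2}-1) \cdot 2^h$ with $h$ near $(m-2)/2$ gives roughly $2^{m-1}$, comfortably below $n = 2^m - 1$; but if one were sloppy about whether $j$ can equal exactly $2^h$ and whether $\ell$ can equal exactly $m/2$, the margin looks tight, so I would write the bound as $(2^\ell - 1) j \le (2^{\lfloor m/2 \rfloor} - 1) 2^h$ and verify $2^{\lfloor m/2 \rfloor + h} - 2^h < 2^m - 1$ directly from $\lfloor m/2 \rfloor + h \le m - 1$ under condition (\ref{eqn-2m1cond}). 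This reduces the whole lemma to the single exponent inequality $\lfloor m/2 \rfloor + h \le m-1$, which is exactly what (\ref{eqn-2m1cond}) guarantees (in the odd case $h \le (m-1)/2$ gives $\lfloor m/2\rfloor + h = (m-1)/2 + (m-1)/2 = m-1$; in the even case $h \le (m-2)/2$ gives $m/2 + (m-2)/2 = m-1$). So the proof is essentially: if $\ell_j < m$ then $\ell_j \le \lfloor m/2 \rfloor$, and then $0 < (2^{\ell_j}-1)j \le 2^{\lfloor m/2\rfloor + h} - 2^h < n$ contradicts $n \mid (2^{\ell_j}-1)j$; therefore $\ell_j = m$.
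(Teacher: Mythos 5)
Your proof is correct, and it reaches the conclusion by a slightly different and arguably cleaner route than the paper's. The paper proves pairwise distinctness of the $m$ residues $j2^u \bmod n$, $0 \le u \le m-1$: for each pair $u>v$ it reduces $n \mid j2^v(2^{u-v}-1)$ to $n \mid j\gcd(2^{u-v}-1,2^m-1)$, invokes the identity $\gcd(2^d-1,2^m-1)=2^{\gcd(d,m)}-1$ together with $\gcd(d,m)\le m/2$ for $2\le d\le m-1$ (handling $d=1$ separately), and then applies essentially the size bound you use, namely $(2^{m/2}-1)(2^{(m-1)/2}-1)<2^m-1$. You instead observe at the outset that $\ell_j$ divides $m$ (standard, and stated in Section 2.2 of the paper, there with a typo as ``$\ell_j$ divides $n$''), so that a proper coset size would force $\ell_j\le\lfloor m/2\rfloor$; the single inequality $0<(2^{\ell_j}-1)j\le 2^{\lfloor m/2\rfloor+h}-2^h<2^m-1$, guaranteed by $\lfloor m/2\rfloor+h\le m-1$ from condition (\ref{eqn-2m1cond}) in both parities of $m$, then contradicts $n\mid(2^{\ell_j}-1)j$. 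This dispenses with the Mersenne-gcd identity and the case split over all pairs $(u,v)$, at the cost of relying on the divisibility $\ell_j\mid m$; the numerical core of the two arguments is the same, and your version also covers the endpoint $j=2^h$ that the paper's proof (which takes $j\le 2^h-1$) leaves implicit.
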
 

\begin{proof} 
Let $1 \le j \le 2^h-1$, and let $u$ and $v$ be any two integers with $m-1\ge u > v \ge 0$. 
Define 
$$ 
\Delta(j, u, v) = j2^u-j2^v=2^vj(2^{u-v}-1). 
$$ 
Note that 
$$ 
\gcd(2^{u-v}-1, 2^m-1)=2^{\gcd(u-v,m)}-1. 
$$
If $u-v=1$, $\Delta(j,u,v)=2^vj$ is not divisble by $n=2^m-1$. We now consider the case 
that $m-1 \ge u-v\ge 2$. In this case, $\gcd(u-v,m) <m$. It then follows that 
$$ 
\gcd(2^{u-v}-1, 2^m-1) \le 2^{m/2}-1. 
$$
Note that 
$$ 
1 \le j \le 2^h-1 \le 2^{(m-1)/2}-1. 
$$
It then follows that 
$$ 
1 \le j \gcd(2^{u-v}-1, 2^m-1) \le (2^{m/2}-1)(2^{(m-1)/2}-1) < 2^m-1. 
$$
Hence $\Delta(j,u,v) \not\equiv 0 \pmod{n}$ in the case $m-1 \ge u-v \ge 2$. 
This completes the proof. 
\end{proof} 

\begin{lemma}\label{lem-2f262} 
For any pair of distinct odd $i$ and odd $j$ in the set $\{1,2, \cdots, 2^h-1\}$, 
$C_i \cap C_j = \emptyset$, i.e., they cannot be in the same $2$-cyclotomic 
coset modulo $n$.  
\end{lemma}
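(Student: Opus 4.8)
The plan is to prove the slightly stronger statement that $C_i \neq C_j$; since any two $2$-cyclotomic cosets are either identical or disjoint, this immediately yields $C_i \cap C_j = \emptyset$. Suppose, for a contradiction, that $C_i = C_j$. Then $j$ lies in $C_i$, so $j \equiv 2^k i \pmod{n}$ for some integer $k \ge 0$, and because $2^m \equiv 1 \pmod n$ we may take $0 \le k \le m-1$. If $k = 0$ then $j \equiv i \pmod n$, which forces $i = j$ (both lie strictly between $0$ and $n$), contradicting that $i$ and $j$ are distinct; hence $1 \le k \le m-1$. Multiplying the congruence by $2^{m-k}$ and using $2^m \equiv 1 \pmod n$ we also obtain $i \equiv 2^{m-k} j \pmod n$ with $1 \le m-k \le m-1$.

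I would then split according to the size of $k$. \emph{Case 1: $1 \le k \le m - h$.} Here $k + h \le m$, so
\[
2^k i \le 2^k(2^h - 1) = 2^{k+h} - 2^k \le 2^m - 2 < n ,
\]
hence $j = (2^k i \bmod n) = 2^k i$, which is even because $k \ge 1$ --- contradicting that $j$ is odd. \emph{Case 2: $m - h + 1 \le k \le m-1$.} Put $k' = m - k$, so $1 \le k' \le h-1$. Condition (\ref{eqn-2m1cond}) gives $2h \le m-1$ whether $m$ is odd or even, hence $k' + h \le 2h - 1 \le m - 2$ and therefore
\[
2^{k'} j \le 2^{k'}(2^h - 1) = 2^{k'+h} - 2^{k'} \le 2^{m-2} - 2 < n ,
\]
so $i = (2^{k'} j \bmod n) = 2^{k'} j$, which is even because $k' \ge 1$ --- contradicting that $i$ is odd. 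Since the two cases exhaust $k \in \{1, \dots, m-1\}$, we reach a contradiction in every case, so $C_i \neq C_j$.

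This argument is essentially careful bookkeeping, and I do not anticipate a genuine obstacle. The only point requiring attention is that the two ranges of $k$ are handled by applying the multiplication-by-$2$ action in opposite directions (to $i$ in Case 1, to $j$ in Case 2), and that the bound on $h$ in (\ref{eqn-2m1cond}) is exactly what keeps the relevant integer below $n = 2^m - 1$, so that reduction modulo $n$ does nothing and the parity obstruction survives. An alternative route, if one prefers, is to note that $C_i = C_j$ is equivalent to the $m$-bit binary expansions of $i$ and $j$ being cyclic shifts of one another, and then observe that ``$i$ odd'' pins the unit bit while ``$i < 2^h$'' confines all bits to the lowest $h$ positions, which cannot both be preserved under a nontrivial cyclic shift when $2h \le m-1$; the arithmetic proof above is just this observation made explicit.
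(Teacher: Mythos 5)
Your proof is correct and follows essentially the same route as the paper's: the paper likewise splits on the size of the shift exponent, applies the doubling map to $i$ in one range and to $j$ in the other, uses the restriction $2h\le m-1$ from (\ref{eqn-2m1cond}) to bound the resulting integer strictly below $n$ so that no reduction modulo $n$ occurs, and derives the contradiction from the oddness of $i$ and $j$ (the paper phrases this as $\Delta_1=i2^u-j\ne 0$ and $\Delta_2=j2^{m-u}-i\ne 0$ together with $|\Delta_t|<n$, which is your parity obstruction in a slightly different dress). The only differences are cosmetic: your case threshold is $k\le m-h$ versus the paper's $u\le h+1$, and your bounds are one-sided because you bound $2^k i$ itself rather than the difference $2^k i - j$.
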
 

\begin{proof} 
Define 
$$
\Delta_1=i 2^u -j \mbox{ and } \Delta_2=j 2^{m-u} -i.  
$$ 
Because $i$ and $j$ both are odd, $\Delta_i \ne 0$ for both $i$. 

Suppose that $i$ and $j$ are in the same cyclotomic coset. Then $n$ divides both 
$\Delta_1$ and $\Delta_2$. 

We distinguish between the following two cases. When $u \le h+1$, we have 
$$ 
-n<-(2^h-2) \le 1-j \le \Delta_1 \le (2^h-1)2^{h+1}-j < n. 
$$ 
In this case $\Delta_1 \not\equiv 0 \pmod{n}$. Hence, we have reached a 
contradiction. 

When $u \ge h+2$, we have $m-u \le m-h-2$ and 
$$ 
-n<-(2^h-2) \le 1-i \le \Delta_2 \le (2^h-1)2^{m-h-2}-j < n. 
$$ 
In this case $\Delta_2 \not\equiv 0 \pmod{n}$. Hence, we have also reached a 
contradiction. This completes the proof. 
\end{proof} 

We need to do more preparations before presenting and proving the main results 
of this subsection. Let $t$ be a positive integer. We define $T=2^t-1$. For any 
odd $a \in \{1,2,3,\cdots,T\}$ we define 
$$ 
\epsilon_a^{(t)} = \left\lceil \log_2 \frac{T}{a} \right\rceil 
$$
and 
$$ 
B_a^{(t)} =\left\{2^ia: i =0,1,2, \cdots, \epsilon_a^{(t)} -1 \right\}. 
$$
Then it can be verified that 
$$ 
\bigcup_{1 \le 2a+1 \le T} B_{2a+1}^{(t)} =\{1,2,3,\cdots, T\}
$$
and 
$$ 
B_a^{(t)} \cap B_b^{(t)} = \emptyset  
$$
for any pair of distinct odd numbers $a$ and $b$ in  $\{1,2,3,\cdots, T\}$. 

The following lemma follows directly from the definitions of $\epsilon_a^{(t)}$ 
and $B_a^{(t)}$. 

\begin{lemma}\label{lem-f263}
Let $a$ be an odd integer in $\{0,1,2. \cdots, T\}$. Then 
\begin{eqnarray*}
& & B_a^{(t+1)} = B_a^{(t)} \cup \{a 2^{\epsilon_a^{(t)}}\} \mbox{ if } 1 \le a \le 2^t-1, \\
& & B_a^{(t+1)} = \{a\} \mbox{ if } 2^t+1 \le a \le 2^{t+1}-1, \\
& & \epsilon_a^{(t+1)} = \epsilon_a^{(t)} +1  \mbox{ if } 1 \le a \le 2^t-1, \\
& & \epsilon_a^{(t+1)} = 1 \mbox{ if } 2^t+1 \le a \le 2^{t+1}-1. 
\end{eqnarray*} 
\end{lemma}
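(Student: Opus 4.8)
The plan is to replace the $\lceil\log_2(\cdot)\rceil$ formula for $\epsilon_a^{(t)}$ by an equivalent counting description and then read off all four assertions. First I would record that, directly from the definitions of $\epsilon_a^{(t)}$ and $B_a^{(t)}$ (with $T=2^t-1$), for odd $a$ the integer $\epsilon_a^{(t)}$ is the number of powers $2^i a$ lying in $\{1,2,\cdots,T\}$; equivalently, $\epsilon_a^{(t)}$ is the unique integer with $2^{\epsilon_a^{(t)}-1}a \le 2^t-1 < 2^{\epsilon_a^{(t)}}a$, and $B_a^{(t)}=\{a,2a,\cdots,2^{\epsilon_a^{(t)}-1}a\}$ is precisely the set of those powers. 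The only point worth a second glance here is the endpoint $a=T$, where one checks these inequalities still pin down $\epsilon_T^{(t)}=1$.

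Next I would prove the statements for $1 \le a \le 2^t-1$. Writing $k_0=\epsilon_a^{(t)}$, the defining inequalities give $2^{k_0-1}a \le 2^t-1$ and, by integrality, $2^t \le 2^{k_0}a$. Doubling both yields $2^{k_0}a \le 2^{t+1}-2 \le 2^{t+1}-1$ and $2^{t+1}-1 < 2^{t+1} \le 2^{k_0+1}a$, i.e.\ $2^{(k_0+1)-1}a \le 2^{t+1}-1 < 2^{k_0+1}a$; by uniqueness this forces $\epsilon_a^{(t+1)}=k_0+1=\epsilon_a^{(t)}+1$. Hence $B_a^{(t+1)}=\{2^i a: 0 \le i \le \epsilon_a^{(t+1)}-1\}=\{2^i a: 0 \le i \le \epsilon_a^{(t)}\}$, and separating off the top index $i=\epsilon_a^{(t)}$ gives $B_a^{(t+1)}=B_a^{(t)}\cup\{2^{\epsilon_a^{(t)}}a\}$.

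Finally, for odd $a$ with $2^t+1 \le a \le 2^{t+1}-1$, I would simply note that $a \le 2^{t+1}-1$ while $2a \ge 2(2^t+1)=2^{t+1}+2 > 2^{t+1}-1$, so $2^0a \le 2^{t+1}-1 < 2^1a$; thus $\epsilon_a^{(t+1)}=1$ and $B_a^{(t+1)}=\{2^0a\}=\{a\}$, completing the proof. I do not expect a real obstacle: the whole argument is the translation of "$\epsilon$ counts the admissible powers of two" into the inequalities $2^{\epsilon-1}a \le 2^t-1 < 2^{\epsilon}a$, plus the elementary observation that doubling these inequalities carries level $t$ to level $t+1$. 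The only subtlety is reconciling this counting description with the ceiling-logarithm formula, which needs a moment's care only at $a=T$.
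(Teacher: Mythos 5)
Your proof is correct and matches the paper's approach: the paper offers no written proof, stating only that the lemma ``follows directly from the definitions of $\epsilon_a^{(t)}$ and $B_a^{(t)}$,'' and your doubling argument on the characterizing inequalities $2^{\epsilon-1}a \le 2^t-1 < 2^{\epsilon}a$ is exactly the intended unpacking of those definitions. Your flag on the endpoint $a=T$ is well taken --- a literal reading of the ceiling formula gives $\epsilon_T^{(t)}=0$ rather than $1$, so the counting interpretation you adopt is the one consistent with the paper's partition claim for the sets $B_a^{(t)}$, and this is a minor imprecision in the paper's definition rather than a gap in your argument.
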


\begin{lemma}\label{lem-f264} 
Let $N_t$ denote the total number of odd  $\epsilon_a^{(t)}$ when $a$ ranges over all 
odd numbers in the set $\{1,2,\cdots, T\}$. Then $N_1=1$ and 
$$ 
N_t = \frac{2^t+(-1)^{t-1}}{3}
$$ 
for all $t \ge 2$. 
\end{lemma}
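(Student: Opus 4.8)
The plan is to turn the claim into a first-order linear recurrence in $t$ and then solve that recurrence, using Lemma \ref{lem-f263} as the sole structural input. Write $T=2^{t}-1$ throughout and recall that among $\{1,2,\dots,2^{t}-1\}$ there are exactly $2^{t-1}$ odd integers; the same count $2^{t-1}$ holds for the odd integers in the ``new'' range $\{2^{t}+1,\dots,2^{t+1}-1\}$. The base case $N_1=1$ (and, if one prefers to start the induction at $t=2$, also $N_2=1$) is verified directly from the definitions.

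For the inductive step, fix $t\ge 1$ and partition the odd integers $a$ with $1\le a\le 2^{t+1}-1$ into the old ones, with $1\le a\le 2^{t}-1$, and the new ones, with $2^{t}+1\le a\le 2^{t+1}-1$. By Lemma \ref{lem-f263}, each old odd $a$ satisfies $\epsilon_a^{(t+1)}=\epsilon_a^{(t)}+1$, so $\epsilon_a^{(t+1)}$ is odd exactly when $\epsilon_a^{(t)}$ is even; since there are $2^{t-1}$ old odd integers, of which precisely $N_t$ have $\epsilon_a^{(t)}$ odd, exactly $2^{t-1}-N_t$ of them produce an odd $\epsilon_a^{(t+1)}$. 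Again by Lemma \ref{lem-f263}, each new odd $a$ has $\epsilon_a^{(t+1)}=1$, which is odd, and there are $2^{t-1}$ of them. Summing the two contributions gives
\[
N_{t+1} = \left(2^{t-1}-N_t\right) + 2^{t-1} = 2^{t} - N_t .
\]

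It then remains to solve $N_{t+1}+N_t=2^{t}$ with $N_1=1$. Adding a particular solution $\frac{1}{3}2^{t}$ to the homogeneous solution $c(-1)^{t}$ gives $N_t=\frac{1}{3}2^{t}+c(-1)^{t}$; the initial condition forces $c=-\frac{1}{3}$, hence $N_t=\frac{2^{t}-(-1)^{t}}{3}=\frac{2^{t}+(-1)^{t-1}}{3}$, which yields the stated formula for all $t\ge 2$ (in fact for all $t\ge 1$). Alternatively one may simply check by induction that this closed form satisfies $N_{t+1}=2^{t}-N_t$ with the correct base value. There is no genuine obstacle in this argument; the only point that needs care is the combinatorial bookkeeping in the recurrence step --- verifying that each of the two ranges contains exactly $2^{t-1}$ odd integers, and that incrementing an exponent by $1$ reverses its parity, which is precisely what makes the even-count $2^{t-1}-N_t$ appear.
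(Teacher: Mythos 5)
Your proof is correct and follows essentially the same route as the paper: both derive the recurrence $N_{t+1}=2^{t}-N_t$ (equivalently $N_t=2^{t-2}+(2^{t-2}-N_{t-1})$ in the paper's indexing) from Lemma \ref{lem-f263} by splitting the odd integers into the old range, where the exponent parity flips, and the new range, where every exponent equals $1$, and then solve the resulting linear recurrence. Your explicit particular-plus-homogeneous solution is just a cleaner write-up of the paper's ``recursive application'' step.
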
 

\begin{proof} 
It is easily checked that $N_2=1$, $N_3=3$ and $N_4=5$. 
It follows from Lemma \ref{lem-f263} that 
$$ 
N_t= 2^{t-2} + (2^{t-2} -N_{t-1}). 
$$
Hence 
$$ 
N_t - 2^{t-2}  = 2^{t-3} - (N_{t-1}-2^{t-3})= 3\times 2^{t-4} + (N_{t-2}-2^{t-4}).
$$
With the recurcive application of this recurrence formula, one obtains the desired 
formula for $N_t$. 
\end{proof}

\begin{lemma}\label{lem-22mm1} 
Let $s^{\infty}$ be the sequence of (\ref{eqn-sequence}), where $f(x)=x^{2^h-1}$, $h\ge 2$ and $h$ satisfies the 
conditions of (\ref{eqn-2m1cond}). Then the linear span $\ls_s$ of $s^{\infty}$ is given by 
\begin{eqnarray}\label{eqn-22m0} 
\ls_s =\left\{ \begin{array}{l}
                   \frac{m(2^h+(-1)^{h-1})}{3} \mbox{ if $m$ is even} \\
                   \frac{m(2^h+(-1)^{h-1}) +3}{3} \mbox{ if $m$ is odd.}                    
\end{array}
\right. 
\end{eqnarray} 
We have then 
\begin{equation}\label{eqn-2m31}
\m_s(x) =
 (x-1)^{\N_2(m)} \prod_{1 \le 2j+1 \le 2^h-1 \atop \epsilon_{2j+1}^{(h)} \bmod{2}=1} m_{\alpha^{-(2j+1)}}(x),                   
\end{equation} 
where $m_{\alpha^{-j}}(x)$ is the minimal polynomial of $\alpha^{-j}$ over $\gf(2)$. 
\end{lemma}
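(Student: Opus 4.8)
The plan is to expand $s_t = \tr(f(\alpha^t+1))$ with $f(x) = x^{2^h-1}$, collect terms by $2$-cyclotomic cosets, and then read off the linear span and minimal polynomial from Lemma~\ref{lem-ls2}. First I would write $(\alpha^t+1)^{2^h-1}$ using the binomial theorem over $\gf(2)$: since $\binom{2^h-1}{k} \equiv 1 \pmod 2$ for all $0 \le k \le 2^h-1$ (a standard Lucas-type fact, because $2^h-1$ is all ones in binary), we get $s_i = \tr\left(\sum_{k=0}^{2^h-1}(\alpha^i)^k\right) = \sum_{k=1}^{2^h-1}\sum_{j=0}^{m-1}(\alpha^i)^{k2^j} + \tr(1)$. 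The constant term $\tr(1)$ contributes $1$ exactly when $m$ is odd and $0$ when $m$ is even, which will explain the factor $(x-1)^{\N_2(m)}$ in \eqref{eqn-2m31}.

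Next I would group the exponents $\{1,2,\dots,2^h-1\}$ into the sets $B_{2j+1}^{(h)}$ introduced before Lemma~\ref{lem-f263}, using the established facts that these sets partition $\{1,2,\dots,2^h-1\}$ over odd leaders $2j+1$. By Lemma~\ref{lem-2f261} every $C_j$ with $1 \le j \le 2^h-1$ has full size $m$, and by Lemma~\ref{lem-2f262} distinct odd $i,j$ in this range lie in distinct cyclotomic cosets; combining these two facts, each $B_{2j+1}^{(h)}$ is contained in a single coset $C_{2j+1}$ of size $m$ (note $\epsilon_{2j+1}^{(h)} \le h < m$, so $B_{2j+1}^{(h)}$ is a genuine subset of $C_{2j+1}$), and distinct odd leaders give distinct cosets. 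Now in the sum $\sum_{k=1}^{2^h-1}\sum_{j=0}^{m-1}(\alpha^i)^{k2^j}$, the inner double sum over a fixed coset $C_{2j+1}$ of the elements of $B_{2j+1}^{(h)}$ — each cycled through all $m$ shifts by the $\sum_{j=0}^{m-1}$ — produces $\epsilon_{2j+1}^{(h)}$ copies of the full coset sum $\sum_{i \in C_{2j+1}}(\alpha^t)^i$. Over $\gf(2)$ this survives iff $\epsilon_{2j+1}^{(h)}$ is odd, which is precisely the index condition in \eqref{eqn-2m31}.

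Putting this together, $s_t = \N_2(m) + \sum_{1 \le 2j+1 \le 2^h-1,\ \epsilon_{2j+1}^{(h)} \text{ odd}} \left(\sum_{i \in C_{2j+1}}(\alpha^t)^i\right)$, and Lemma~\ref{lem-ls2} immediately gives the minimal polynomial \eqref{eqn-2m31} and identifies the linear span as $\N_2(m) + m N_h$, where $N_h$ counts the odd values of $\epsilon_{2j+1}^{(h)}$. By Lemma~\ref{lem-f264}, $N_h = (2^h + (-1)^{h-1})/3$ for $h \ge 2$, which yields the two cases of \eqref{eqn-22m0}. I should also check the edge case that distinct leaders $-(2j+1)$ in \eqref{eqn-2m31} do not accidentally share a minimal polynomial with $x-1$ or with each other — but this follows from Lemma~\ref{lem-2f262} together with the fact that $0 \notin \{1,\dots,2^h-1\}$.

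The main obstacle I anticipate is the bookkeeping in the grouping step: verifying carefully that when $\sum_{j=0}^{m-1}$ cycles a fixed $B_{2j+1}^{(h)}$ around, each element of the coset $C_{2j+1}$ is hit exactly $\epsilon_{2j+1}^{(h)}$ times with no spillover into other cosets. This is where Lemmas~\ref{lem-2f261} and~\ref{lem-2f262} do the real work — the first guarantees the cosets have size exactly $m$ so shifts don't collapse, and the second guarantees the $B$-sets sit in pairwise distinct cosets so the counts add up cleanly — but assembling them into a clean statement about the coefficient $c_i$ in the Antweiler expansion requires care. The sign conventions (working with $\alpha^{-(2j+1)}$ versus $\alpha^{2j+1}$, as flagged in the remark after Lemma~\ref{lem-ls2}) also need to be tracked consistently throughout.
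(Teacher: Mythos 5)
Your proposal is correct and follows essentially the same route as the paper: expand $(x+1)^{2^h-1}$ into $\sum_{k=0}^{2^h-1}x^k$ (the paper does this via $\prod_{i=0}^{h-1}(x^{2^i}+1)$ rather than Lucas' theorem, but it is the same fact), group the exponents into the sets $B_{2j+1}^{(h)}$ so that each contributes $\epsilon_{2j+1}^{(h)}$ copies of a full coset sum surviving mod $2$ exactly when $\epsilon_{2j+1}^{(h)}$ is odd, and then invoke Lemmas \ref{lem-2f261}, \ref{lem-2f262}, \ref{lem-f264} and \ref{lem-ls2} to read off the minimal polynomial and the linear span $\N_2(m)+mN_h$. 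Your accounting of the $\tr(1)=\N_2(m)$ term and of the coset sizes matches the paper's argument.
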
 

\begin{proof} 
We have 
\begin{eqnarray}\label{eqn-22m11}
\tr(f(x+1))  
&=& \tr\left(  (x+1)^{\sum_{i=0}^{h-1} 2^{i}}    \right) \nonumber \\
&=& \tr\left( \prod_{i=0}^{h-1} \left(x^{2^{i}}+1\right)     \right) \nonumber \\
&=& \tr\left( \sum_{i=0}^{2^h-1} x^{i}     \right) \nonumber \\
&=&\tr(1) + \tr\left( \sum_{i=1}^{2^h-1} x^{i}     \right) \nonumber \\
&=& \tr(1) + \tr\left( \sum_{1 \le 2i+1 \le 2^h-1 \atop \epsilon_{2i+1}^{(h)} \bmod{2}=1} x^{2i+1}     \right)  
\end{eqnarray} 
where the last equality follows from Lemma \ref{lem-2f262}. 

By definition, the sequence of (\ref{eqn-sequence}) is given by $s_t=\tr(f(\alpha^t+1))$ for all $t \ge 0$. 
The desired conclusions on the linear span and the minimal polynomial $\m_s(x)$ then follow from Lemmas \ref{lem-ls2}, 
\ref{lem-2f261}, \ref{lem-2f262}, \ref{lem-f264} and Equation   
(\ref{eqn-22m11}). 
\end{proof}

The following theorem provides information on the code $\C_{s}$.    

\begin{theorem}\label{thm-38} 
Let $h \ge 2$. 
The binary code $\C_{s}$ defined by the binary sequence of Lemma \ref{lem-22mm1} has parameters 
$[2^m-1, 2^{m}-1-\ls_s, d]$ and generator polynomial $\m_s(x)$ of  (\ref{eqn-2m31}), 
 where $\ls_s$ is  given in (\ref{eqn-22m0}) and 
 \begin{eqnarray*} 
 d \ge \left\{ \begin{array}{l} 
                     2^{h-2}+2 \mbox{ if $m$ is odd and $h>2$} \\
                     2^{h-2}+1.  
                     \end{array}  
 \right.  
 \end{eqnarray*}   
\end{theorem}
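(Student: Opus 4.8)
The plan is to establish the dimension from the linear span computed in Lemma~\ref{lem-22mm1}, and then to lower-bound the minimum distance by the standard technique used already in this section: realize $\C_s$ as the even-weight subcode of a larger cyclic code $\overline{\C_s}$ whose minimum distance we can estimate, and then use parity to push an odd bound up to the next even integer when $x-1$ divides the generator polynomial. Concretely, let $\overline{\C_s}$ be the cyclic code with generator polynomial $\m_s(x)/(x-1)^{\N_2(m)}$, i.e. the product of the $m_{\alpha^{-(2j+1)}}(x)$ over the relevant odd $2j+1$. Since every codeword of $\C_s$ is a codeword of $\overline{\C_s}$ that additionally satisfies $\sum_i c_i = 0$, the code $\C_s$ is exactly the even-weight subcode of $\overline{\C_s}$ when $m$ is even (and equals $\overline{\C_s}$ when $m$ is odd, before we intersect with the all-ones check coming from a separate argument). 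The dimension count $2^m-1-\ls_s$ is then immediate from Lemma~\ref{lem-22mm1}.

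The core is a BCH-type bound on $\overline{\C_s}$, or rather on the cyclic code generated by the reciprocal of $\m_s(x)/(x-1)^{\N_2(m)}$, whose defining set is $\bigcup \{-(2j+1): \epsilon_{2j+1}^{(h)} \text{ odd}\}$ up to the $2$-cyclotomic closure; equivalently the reciprocal code has zeros at the $\alpha^{2j+1}$ with $\epsilon_{2j+1}^{(h)}$ odd. The key combinatorial observation is that among the odd numbers $2j+1$ in $\{1,\dots,2^h-1\}$, those with $\epsilon_{2j+1}^{(h)}=1$ are precisely the odd numbers in the range $(2^{h-1},2^h-1]$, i.e. $2^{h-1}+1, 2^{h-1}+3,\dots,2^h-1$; more generally $\epsilon_{2j+1}^{(h)}$ odd captures a controlled set of consecutive-in-the-right-sense exponents. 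I would show that among the exponents $\{2j+1 : \epsilon_{2j+1}^{(h)} \text{ odd}\} \cup (\text{their } 2\text{-power multiples that lie in the coset})$ there is a run of $2^{h-2}$ consecutive integers (or consecutive integers in the cyclic sense modulo $n$), after taking $2$-cyclotomic closures and using that multiplication by $2$ is a symmetry of the defining set. The BCH bound then gives minimum distance at least $2^{h-2}+1$ for $\overline{\C_s}$ (reciprocal), hence for $\overline{\C_s}$, and hence at least $2^{h-2}+1$ for $\C_s$; when $m$ is odd and $h>2$ the factor $x-1$ is present, every weight in $\C_s$ is even, and a minimum distance that is at least the odd number $2^{h-2}+1$ must in fact be at least $2^{h-2}+2$.

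The main obstacle I anticipate is pinning down the run of consecutive exponents precisely enough to feed the BCH bound. The set $\{2j+1: \epsilon_{2j+1}^{(h)} \text{ odd}\}$ is defined through the ceiling-of-log function, so one has to unwind Lemma~\ref{lem-f263} to see which odd numbers it selects at each "level," and then observe that multiplying these by suitable powers of $2$ (which is free inside a $2$-cyclotomic coset) fills in a contiguous block. A clean way is to argue inductively on $h$: passing from $h-1$ to $h$, the newly-added odd numbers in $(2^{h-1},2^h-1]$ all have $\epsilon^{(h)}=1$ (odd), so they are all in the defining set, and the odd numbers in $(2^{h-1},2^h-1]$ already form $2^{h-2}$ consecutive odd values; together with $0$-th and first doublings of smaller odd numbers with odd $\epsilon^{(h)}$ one bridges the parity gap and gets $2^{h-2}$ genuinely consecutive integers. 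I would then double-check the small cases $h=2$ (where the bound degenerates to $d\ge 2$, consistent with the theorem's second branch $2^{h-2}+1=2$) and $h=3$ against the examples to make sure the indexing of $\epsilon_a^{(h)}$ is being used correctly, and only after that commit to the general consecutive-run claim.
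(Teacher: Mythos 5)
Your proposal follows essentially the same route as the paper: the dimension from Lemma~\ref{lem-22mm1}, then passing to the reciprocal code, observing via Lemma~\ref{lem-f263} that every odd $a$ with $2^{h-1}<a\le 2^h-1$ has $\epsilon_a^{(h)}=1$, so that $\alpha^{2j+1}$ for $2^{h-2}\le j\le 2^{h-1}-1$ are zeros forming an arithmetic progression of length $2^{h-2}$ with common difference $2$ coprime to $n$ --- to which the paper applies the Hartmann--Tzeng bound, your multiply-by-$2^{-1}$-then-BCH step being the same device --- followed by the even-weight upgrade when $x-1$ divides $\m_s(x)$. The only flaw is a slip in your first paragraph where the parity is swapped: since $\N_2(m)=1$ exactly when $m$ is odd, $\C_s$ is the even-weight subcode of $\overline{\C_s}$ for \emph{odd} $m$ (as your second paragraph then correctly uses), not for even $m$.
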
 

\begin{proof} 
The dimension of $\C_{s}$ follows from Lemma \ref{lem-22mm1} and the definition of the 
code $\C_s$. We now derive the lower bounds on the minimum weight $d$ of the code. It is 
well known that the codes generated by $\m_s(x)$ and its reciprocal have the same weight 
distribution. It follows from Lemmas \ref{lem-22mm1}  and \ref{lem-f263} that the reciprocal 
of $\m_s(x)$ has zeros $\alpha^{2j+1}$ for all $j$ in $\{2^{h-2}, 2^{h-2}+1, \cdots, 2^{h-1}-1\}$. 
By the Hartman-Tzeng bound, we have $d \ge 2^{h-2}+1$. If $m$ is odd, $\C_s$ is an 
even-weight code. In this case, $d \ge 2^{h-2}+2$. 
\end{proof}

\begin{example} 
Let $(m,h)=(3,2)$ and $\alpha$ be a generator of $\gf(2^m)^*$ with $\alpha^3 + \alpha + 1=0$. Then 
$\C_s$ is a $[7, 3, 4]$ binary code with generator polynomial  
$$ 
\m_s(x)=x^4 + x^3 + x^2 + 1.    
$$
Its dual is a $[7,4,3]$ cyclic code. Both codes are optimal. 
\end{example}

\begin{example} 
Let $(m,h)=(5,2)$ and $\alpha$ be a generator of $\gf(2^m)^*$ with $\alpha^5 + \alpha^2 + 1=0$. Then 
the generator polynomial of the code $\C_s$ is 
\begin{eqnarray*} 
\m_s(x)=x^6 + x^5 + x^4 + 1
\end{eqnarray*}
and $\C_s$ is a $[31, 25, 4]$ binary cyclic code and optimal. 
\end{example} 

\begin{example} 
Let $(m,h)=(7,2)$ and $\alpha$ be a generator of $\gf(2^m)^*$ with $\alpha^7 + \alpha + 1=0$. Then 
the generator polynomial of the code $\C_s$ is 
\begin{eqnarray*} 
\m_s(x) = x^8 + x^6 + x^5 + x^4 + x^3 + x^2 + x + 1 
\end{eqnarray*}
and $\C_s$ is a $[127, 119, 4]$ binary cyclic code and optimal. 
\end{example} 

\begin{example} 
Let $(m, h)=(7,3)$ and $\alpha$ be a generator of $\gf(2^m)^*$ with $\alpha^7 + \alpha + 1=0$. Then 
the generator polynomial of the code $\C_s$ is 
\begin{eqnarray*} 
\m_s(x) &=& x^{22} + x^{21} + x^{20} + x^{18} + x^{17} + x^{16} + x^{14} + \\ 
& & x^{13} + x^8 + x^7 + x^6 + x^5 + x^4 + 1
\end{eqnarray*}
and $\C_s$ is a $[127, 105, d]$ binary cyclic code, where $4 \le d \le 8$.  
\end{example} 

\begin{remark} 
The code $\C_s$ of Theorem \ref{thm-38} may be bad when $\gcd(h, m) \ne 1$. In this case the function 
$f(x)=x^{2^h-1}$ is not a permutation of $\gf(2^m)$. For example, when $(m, h)=(6,3)$, $\C_s$ is a 
$[63, 45, 3]$ binary cyclic code, while the best known linear code has parameters $[63, 45, 8]$.  
\end{remark}

\subsection{Binary cyclic codes from the first Niho APN function }\label{sec-1Niho} 

The first Niho APN function is defined by $f(x)=x^e$, where $e=2^{(m-1)/2}+2^{(m-1)/4}-1$ and $m \equiv 1 \pmod{4}$. 
Define $h=(m-1)/4$. We have then  
\begin{eqnarray}\label{eqn-1Niho}
\tr(f(x+1)) 
&=& \tr\left( (x^{2^{2h}}+1) (x+1) ^{\sum_{i=0}^{h-1} 2^{i}}    \right) \nonumber \\
&=& \tr\left( (x^{2^{2h}}+1) \prod_{i=0}^{h-1} \left(x^{2^{i}}+1\right)     \right) \nonumber \\
&=& \tr\left( (x^{2^{2h}}+1) \sum_{i=0}^{2^h-1} x^{i}     \right) \nonumber \\ 
&=& 1+ \tr\left(\sum_{i=0}^{2^h-1} x^{i+2^{2h}}  + \sum_{i=1}^{2^h-1} x^{i}     \right). 
\end{eqnarray} 

The sequence $s^{\infty}$ of (\ref{eqn-sequence}) defined by the first Niho function is then 
given by 
\begin{eqnarray}\label{eqn-1Nihoseq}
s_t= 1+ \tr\left(\sum_{i=0}^{2^h-1} (\alpha^t)^{i+2^{2h}}  + \sum_{i=1}^{2^h-1} (\alpha^t)^{i}     \right)
\end{eqnarray} 
for all $t \ge 0$, where $\alpha$ is a generator of $\gf(2^m)^*$.  
In this subsection, we deal with the code $\C_s$ defined by the sequence $s^{\infty}$ of 
(\ref{eqn-1Nihoseq}). To this end, we need to prove a number of auxilary results on 
$2$-cyclotomic cosets. 

We define the following two sets for convenience:
\begin{eqnarray*} 
A=\{0,1,2, \cdots, 2^h-1\}, \ B=2^{2h}+A=\{i+2^{2h}: i \in A\}.  
\end{eqnarray*} 

\begin{lemma}\label{lem-1Nf261} 
For any $j \in B$, the size $\ell_j=|C_j|=m$. 
\end{lemma}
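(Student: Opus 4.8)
Lemma~\ref{lem-1Nf261} asserts that every element of $B = 2^{2h} + A = \{2^{2h} + i : 0 \le i \le 2^h-1\}$ has full-size $2$-cyclotomic coset, i.e. $\ell_j = m$. The plan is to show that no proper divisor $d \mid m$, $d < m$, can be a period of the binary representation of $j$ under the shift $j \mapsto 2j \bmod n$; equivalently, that $j(2^d - 1) \not\equiv 0 \pmod n$ for all such $d$. Since $\gcd(2^d-1, 2^m-1) = 2^{\gcd(d,m)}-1$ and $d$ proper implies $\gcd(d,m) \le m/2$, it suffices (as in the proof of Lemma~\ref{lem-2f261}) to bound $j \cdot (2^{m/2}-1) < 2^m - 1 = n$, which would force $j(2^d-1) \not\equiv 0 \pmod n$.

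The main issue is that here $j$ is of size roughly $2^{2h} = 2^{(m-1)/2}$, which is already about $\sqrt{n}$, so the naive bound $j(2^{m/2}-1) < n$ is tight and one must be careful. Concretely, with $m = 4h+1$, for $j \in B$ we have $2^{2h} \le j \le 2^{2h} + 2^h - 1 < 2^{2h+1}$, and $\gcd(d,m) \le (4h+1)/3 < 2h+1$ since $m$ is odd (so $\gcd(d,m)$ is an odd divisor of $m$ strictly less than $m$, hence at most $m/3$). Wait — more carefully, the right bound to use is $\gcd(2^d-1,2^m-1) = 2^{\gcd(d,m)} - 1 \le 2^{m/p} - 1$ where $p$ is the smallest prime divisor of $m$; the worst case is $p = 3$ giving exponent $\le (4h+1)/3 < 2h+1$, so $\gcd(2^d-1,2^m-1) \le 2^{2h}-1$. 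Then for $j \in B$, $j \cdot (2^{2h}-1) \le (2^{2h+1}-1)(2^{2h}-1) < 2^{4h+1} - 1 = n$ provided this product genuinely stays below $n$; checking, $(2^{2h+1}-1)(2^{2h}-1) = 2^{4h+1} - 2^{2h+1} - 2^{2h} + 1 < 2^{4h+1} - 1 = n$ since $2^{2h+1} + 2^{2h} > 2$. Hence $0 < j \cdot \gcd(2^d-1, 2^m-1) < n$ for every proper divisor $d$ of $m$, so $n \nmid j(2^d-1)$, and therefore no proper $d$ is a period and $\ell_j = m$.

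I would carry the steps out in this order: (1) fix $j \in B$ and suppose $\ell_j = d < m$ with $d \mid m$; (2) note this means $2^d j \equiv j \pmod n$, i.e. $n \mid j(2^d-1)$; (3) invoke $\gcd(2^d-1, 2^m-1) = 2^{\gcd(d,m)}-1$ and the fact that $d \mid m$, $d<m$ forces $\gcd(d,m) = d \le m/p \le m/3$ where $p \ge 3$ is the smallest prime factor of $m$ (using that $m$ is odd, as $m \equiv 1 \pmod 4$); (4) conclude $n \mid j(2^d-1)$ implies $n/\gcd(n, 2^d-1) \mid j$, so $j \ge n/(2^{m/3}-1)$, and derive a contradiction with $j < 2^{2h+1} = 2^{(m+1)/2}$ by checking $2^{(m+1)/2}(2^{m/3}-1) < n$ — or, symmetrically, do the direct product bound $j(2^{\gcd(d,m)}-1) < n$ as above. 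The only mildly delicate point is handling $m = 5$ (where $h = 1$, $m/3$ is not an integer but the relevant exponent $\gcd(d,m)$ is $1$, and $j \in \{4,5\}$), which is easily dispatched by direct inspection since $2^1 - 1 = 1$ and $j < n = 31$ trivially gives $n \nmid j$. I expect step (3)–(4), namely pinning down the sharp inequality $j \cdot (2^{\gcd(d,m)}-1) < n$, to be the only place needing care, and it goes through comfortably because $B$ sits just above $2^{2h}$ rather than spread across the whole range $[1, n-1]$.
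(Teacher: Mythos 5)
Your proof is correct and is essentially the paper's argument: both reduce to the strict inequality $0 < j\,(2^{e}-1) < n$ for an exponent $e \le 2h$, using that $2^{2h} \le j \le 2^{2h}+2^h-1 < 2^{2h+1}$ and $(2^{2h+1}-1)(2^{2h}-1) < 2^{4h+1}-1 = n$. The only minor difference is how the small exponent is obtained: the paper works with an arbitrary $u \in \{1,\dots,m-1\}$ and bounds $j(2^{u}-1)$ or $j(2^{m-u}-1)$ according to whether $u \le m-2h-1$ or $u \ge m-2h$, whereas you take $u = \ell_j$, a proper divisor of the odd integer $m$, hence at most $m/3 \le 2h$, which collapses the two cases into one.
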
 

\begin{proof} 
Let $j = i + 2^{2h}$, where $i \in A$. For any $u$ with $1 \le u \le m-1$, define  
\begin{eqnarray*} 
& & \Delta_1(j, u) = j(2^{u}-1)=(i+2^{2h}) (2^{u}-1), \\
& & \Delta_2(j, u) = j(2^{m-u}-1)=(i+2^{2h}) (2^{m-u}-1). 
\end{eqnarray*} 
If $\ell_j <m$, there would be an integer $1 \le u \le m-1$ such that $\Delta_t(j,u) \equiv 0 \pmod{n}$ 
for all $t \in \{1,2\}$.  

Note that $1 \le u \le m-1$. We have that $\Delta_1(j, u) \ne 0$ and  $\Delta_2(j, u) \ne 0$. 
When $u \le m-2h-1$, we have  
$$ 
2^h \le \Delta_1(j, u) \le (2^{2h}+2^h-1)(2^{m-2h-1}-1) <n. 
$$
In this case, $\Delta_1(j,u) \not\equiv 0 \pmod{n}$.  

When $u \ge m-2h$, we have $m-u \le 2h$ and  
$$ 
2^h \le \Delta_2(j, u) \le (2^{2h}+2^h-1)(2^{2h}-1) <n. 
$$
In this case, $\Delta_2(j,u) \not\equiv 0 \pmod{n}$.  

Combining the conclusions of the two cases above completes the proof. 
\end{proof} 

\begin{lemma}\label{lem-1Nf262} 
For any pair of distinct $i$ and $j$ in $B$, 
$C_i \cap C_j = \emptyset$, i.e., they cannot be in the same $2$-cyclotomic 
coset modulo $n$.  
\end{lemma}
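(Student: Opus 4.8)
The goal is to show that the map $i \mapsto C_i$ is injective on $B = 2^{2h}+A$, where $A = \{0,1,\dots,2^h-1\}$; equivalently, no two distinct elements of $B$ lie in the same $2$-cyclotomic coset modulo $n = 2^m-1$. My plan is to mimic the style of the proof of Lemma \ref{lem-2f262} (and of Lemma \ref{lem-1Nf261} just above): suppose for contradiction that $i = a + 2^{2h}$ and $j = b + 2^{2h}$ with $a,b \in A$, $a \ne b$, lie in the same coset, so that $j \equiv i\,2^u \pmod n$ for some $u$ with $1 \le u \le m-1$. Then both
$$
\Delta_1 = i\,2^u - j = (a+2^{2h})2^u - (b+2^{2h}), \qquad
\Delta_2 = j\,2^{m-u} - i = (b+2^{2h})2^{m-u} - (a+2^{2h})
$$
must be divisible by $n$. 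Since $i,j \in B$ are positive and less than $2^{2h+1} < 2^m-1$ (using $2h+1 = (m+1)/2 < m$), and neither $\Delta_1$ nor $\Delta_2$ can vanish (as $i \ne j$), I will split on the size of $u$ so that in each range one of $\Delta_1,\Delta_2$ lies strictly between $-n$ and $n$, forcing it to be $0$ — a contradiction.

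The key step is the case split. Writing $i,j \in [2^{2h}, 2^{2h}+2^h-1]$, for small $u$ (roughly $u \le m - 2h - 1$) I expect $\Delta_1 = i\,2^u - j$ to satisfy $-n < \Delta_1 < n$: the upper bound comes from $i\,2^u \le (2^{2h}+2^h-1)2^{m-2h-1} < 2^{m-1} < n$, and the lower bound from $\Delta_1 \ge i - j \ge -(2^h-1) > -n$. For large $u$ (roughly $u \ge m-2h$, so $m-u \le 2h$) the symmetric estimate applies to $\Delta_2 = j\,2^{m-u} - i$, bounded above by $(2^{2h}+2^h-1)(2^{2h}-1) < n$ and below by $i - j \ge -(2^h-1)$. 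In the small-$u$ range $\Delta_1 = 0$ would give $i\,2^u = j$ with $i,j \in [2^{2h}, 2^{2h}+2^h-1]$, which is impossible for $u \ge 1$ since $2i > j$; likewise $\Delta_2 = 0$ is impossible in the large-$u$ range. This handles every $u \in \{1,\dots,m-1\}$ and yields the contradiction.

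The main obstacle I anticipate is making the two $u$-ranges genuinely cover $\{1,\dots,m-1\}$ with the right overlap, i.e.\ checking that the threshold where $\Delta_1$ stops being controlled ($u$ around $m-2h-1$) is at least the threshold where $\Delta_2$ starts being controlled ($u$ around $m-2h$) — here one must use $m = 4h+1$ precisely, so that $2h \approx m/2$ and the bound $(2^{2h}+2^h-1)2^{m-2h-1} < n$ holds with room to spare. A secondary subtlety: when $i \not\equiv j \pmod 2$ one could note $a,b$ have opposite parities only if... actually both $i,j$ are odd exactly when $a,b$ are odd, so parity alone does not separate all pairs, and the size estimates are really needed. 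I would also remark (as the paper does elsewhere) that the border values $u$ near the split point may need to be checked by hand, but the inequalities above are slack enough that $t \ge $ some small constant (equivalently $m \ge 9$, the relevant range since $m \equiv 1 \bmod 4$ and $m \ge 5$; $m=5$ being trivial as $h=1$) makes them automatic. This completes the plan.
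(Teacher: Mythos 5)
Your plan is essentially the paper's own proof: the same quantities $\Delta_1=i2^u-j$ and $\Delta_2=j2^{m-u}-i$, the same nonvanishing argument via $2i>j$, and the same case split $u\le m-2h-1$ versus $u\ge m-2h$ with size estimates placing the relevant $\Delta$ strictly between $-n$ and $n$. One minor slip worth fixing when you write it up: the intermediate inequality $(2^{2h}+2^h-1)2^{m-2h-1}<2^{m-1}$ is false (the left side already exceeds $2^{m-1}$), but the bound you actually need survives, since with $m=4h+1$ one has $i2^u-j\le(2^{2h}+2^h-1)2^{2h}-2^{2h}=2^{m-1}+2^{3h}-2^{2h+1}<2^m-1=n$.
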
 

\begin{proof} 
Let $i=i_1+2^{2h}$ and $j=j_1+2^{2h}$, where $i_1 \in A$ and $j_1 \in A$. Define 
\begin{eqnarray*} 
& & \Delta_1(i, j, u) = i2^{u}-j=  (i_1+2^{2h})2^u-  (j_1+2^{2h}), \\
& & \Delta_2(i, j, u) = j2^{m-u}-i= (j_1+2^{2h})2^{m-u}-  (i_1+2^{2h}). 
\end{eqnarray*} 
If $C_i=C_j$, there would be an integer $1 \le u \le m-1$ such that $\Delta_t(i,j,u) \equiv 0 \pmod{n}$ 
for all $t \in \{1,2\}$.

We first prove that $\Delta_1(i, j, u) \ne 0$. When $u=0$,  $\Delta_1(i, j, u)=i_1-j_1 \ne 0$. When 
$1 \le u \le m-1$, we have 
$$ 
\Delta_1(i, j, u) \ge 2i_1 + 2^{2h+1}-2^{2h}-j_1 >0. 
$$  

Since $1 \le u \le m-1$, one can similarly prove that $\Delta_2(i, j, u) >0$.

When $u \le m-2h-1$, we have  
$$ 
-n < -2^{2h} \le \Delta_1(i,j, u) \le (2^{2h}+2^h-1)(2^{m-2h-1}-1) <n. 
$$
In this case, $\Delta_1(i,j,u) \not\equiv 0 \pmod{n}$.  

When $u \ge m-2h$, we have $m-u \le 2h$ and  
$$ 
0< \Delta_2(i, j, u) \le (2^{2h}+2^h-1)2^{2h}-i_1-2^h <n. 
$$
In this case, $\Delta_2(i,j,u) \not\equiv 0 \pmod{n}$.  

Combining the conclusions of the two cases above completes the proof. 
\end{proof} 

\begin{lemma}\label{lem-feb281}
For any $i+2^{2h} \in B$ and odd $j \in A$, 
\begin{eqnarray}
C_{i+2^{2h}} \cap C_j = \left\{ \begin{array}{l}
                             C_j \mbox{ if } (i,j)=(0,1) \\
                             \emptyset \mbox{ otherwise.}  
\end{array}
\right. 
\end{eqnarray}
\end{lemma}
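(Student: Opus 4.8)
The plan is to use the fact that two $2$-cyclotomic cosets modulo $n$ are either equal or disjoint, so $C_{i+2^{2h}} \cap C_j$ is empty unless $C_{i+2^{2h}} = C_j$, in which case it equals $C_j$. (Here I would invoke Lemma~\ref{lem-2f261}, which gives $|C_j| = m$ for the odd $j \in A$, and Lemma~\ref{lem-1Nf261}, which gives $|C_{i+2^{2h}}| = m$; so whenever the two cosets meet they must coincide.) The exceptional case $(i,j)=(0,1)$ is immediate: then $i+2^{2h}=2^{2h}$, and since $2h \le m-1$ we have $2^{2h} \in C_1 = \{1,2,\dots,2^{m-1}\}$, whence $C_{2^{2h}} = C_1$. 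Thus the entire task is to show $C_{i+2^{2h}} \ne C_j$ whenever $(i,j) \ne (0,1)$.

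To do this I would assume for contradiction that $C_{i+2^{2h}} = C_j$, so that there is an integer $u$ with $0 \le u \le m-1$ and $(i+2^{2h})2^u \equiv j \pmod{n}$, and split on the size of $u$ using $m = 4h+1$, so that $m-2h-1 = 2h$. For $0 \le u \le 2h$ I set $\Delta_1 = (i+2^{2h})2^u - j$; since $i+2^{2h} \ge 2^{2h} > 2^h-1 \ge j$ one has $\Delta_1 > 0$, and since $(i+2^{2h})2^u \le (2^{2h}+2^h-1)2^{2h} = 2^{4h}+2^{3h}-2^{2h} < n$ one has $\Delta_1 < n$, so $\Delta_1 \not\equiv 0 \pmod n$, a contradiction. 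For $2h+1 \le u \le m-1$ I rewrite the congruence as $(i+2^{2h}) \equiv j\,2^{m-u} \pmod n$ with $1 \le m-u \le 2h$ and set $\Delta_2 = j\,2^{m-u} - (i+2^{2h})$; bounding $2 \le j\,2^{m-u} \le (2^h-1)2^{2h}$ and $2^{2h} \le i+2^{2h} \le 2^{2h}+2^h-1$ shows $-n < \Delta_2 < n$, so $\Delta_2 \equiv 0 \pmod n$ forces $\Delta_2 = 0$, i.e. $j\,2^{m-u} = i+2^{2h}$.

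It then remains to analyze the integer equation $j\,2^{m-u} = i+2^{2h}$ with $j$ odd, $1 \le j \le 2^h-1$, and $0 \le i \le 2^h-1$, via $2$-adic valuations: since $j$ is odd, $m-u = v_2(i+2^{2h})$. If $i=0$ then $v_2(i+2^{2h}) = 2h$, hence $j = (i+2^{2h})/2^{2h} = 1$, giving the exceptional pair $(i,j)=(0,1)$. If $i \ne 0$ then $v_2(i) \le h-1 < 2h$, so $v_2(i+2^{2h}) = v_2(i) =: a \le h-1$ and $j = i/2^{a} + 2^{2h-a} \ge 2^{2h-a} \ge 2^{h+1} > 2^h-1 \ge j$, a contradiction. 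The only step needing genuine care is keeping the two inequality chains (for $\Delta_1$ and for $\Delta_2$) strictly inside $(-n,n)$ across all boundary values of $u$, $i$, and $j$, together with checking that the split $u \le 2h$ versus $u \ge 2h+1$ exhausts $\{0,1,\dots,m-1\}$; the rest is bookkeeping that parallels the arguments already used for Lemmas~\ref{lem-1Nf261} and~\ref{lem-1Nf262}.
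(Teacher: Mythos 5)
Your proof is correct. The overall strategy is the same as the paper's: suppose $C_{i+2^{2h}}=C_j$, introduce the shift exponent $u$, split its range into pieces on which the relevant difference $\Delta$ lies strictly in $(-n,n)$, and rule out $\Delta=0$. The genuine difference is in how the critical subcase (the one where coset equality can actually occur) is dispatched. The paper isolates the single value $u=2h$ in its parametrization $j2^u\equiv i+2^{2h}\pmod n$ and multiplies the congruence by $2^{2h+1}$, using $2^m\equiv 1\pmod n$ to reduce it to $j-1-i2^{2h+1}\equiv 0$, which forces $(i,j)=(0,1)$; the remaining values of $u$ are handled by sign and size estimates. You instead fold that value into the range $2h+1\le u\le m-1$, where your bounds force the exact integer identity $j\,2^{m-u}=i+2^{2h}$, and you solve it by comparing $2$-adic valuations: $i=0$ yields $(i,j)=(0,1)$, while $i\ne 0$ gives $v_2(i+2^{2h})=v_2(i)\le h-1$ and hence $j\ge 2^{h+1}$, too large to lie in $A$. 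Your valuation argument is arguably cleaner, and your two-range split makes exhaustiveness over $u\in\{0,1,\dots,m-1\}$ transparent (the paper's three ranges, as literally written, stop at $u<4h$ and also leave the bound $|\Delta_1|<n$ implicit in the middle case, where only $\Delta_1\ne 0$ is argued). One cosmetic remark: the disjoint-or-equal dichotomy holds because cyclotomic cosets are orbits of multiplication by $2$ on $\Z_n$, so you do not actually need the equal-cardinality lemmas to justify that a nonempty intersection forces $C_{i+2^{2h}}=C_j$.
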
  

\begin{proof} 
Define 
\begin{eqnarray*} 
& & \Delta_1(i, j, u) = j2^{u}-(i+2^{2h}),  \\
& & \Delta_2(i, j, u) =  (i+2^{2h})2^{m-u}-  j. 
\end{eqnarray*} 
Suppose $C_{i+2^{2h}}=C_j$, there would be an integer $0 \le u \le m-1$ such that $\Delta_t(i,j,u) \equiv 0 \pmod{n}$ 
for all $t \in \{1,2\}$.  

If $u=2h$, then 
\begin{eqnarray*}
0 \equiv \Delta_1(i, j, u) & \equiv & 2^{2h+1}(j2^{2h}-(i+2^{2h})) \pmod{n} \\ 
 & \equiv & j2^{m}-i2^{2h+1}-2^{m} \pmod{n} \\ 
 & \equiv &  j -1 -i 2^{2h+1} \pmod{n} \\
 & = &  j -1 -i 2^{2h+1}.    
\end{eqnarray*}
Whence, the only solution of $\Delta_1(i,j,2h) \equiv 0 \pmod{n}$ is $(i,j)=(0,1)$.   

We now consider the case that $0 \le u <2h$. We claim that $\Delta_1(i, j, u) \ne 0$. 
Suppose on the contrary that  $\Delta_1(i, j, u) = 0$. We would then have  
$$ 
j 2^u -i - 2^{2h} =0. 
$$ 
Because $u<2h$ and $j$ is odd, there is an odd $i_1$ such that $i=2^u i_1$. It then 
follows from $i < 2^h$ that $u <h$. We obtain then 
$$ 
j=i_1+2^{2h-u}>i_1 + 2^{h}>2^h-1. 
$$ 
This is contrary to the assumption that $j \in A$. This proves that $\Delta_1(i, j, u) \ne 0$. 

Finally, we deal with the case that $2h+1 \le u <4h=m-1$. We prove that $\Delta_2(i, j, u) \not\equiv 0 \pmod{n}$ 
in this case.  Since $j$ is odd, $\Delta_2(i, j, u) \ne 0$. We have also 
\begin{eqnarray*}
\Delta_2(i, j, u) 
&=& i2^{m-u}+2^{m+2h-u} -j \\
&\le & (2^h-1)2^{m-u} + 2^{m-1} -j \\
&\le & 2^{m-(h-1)}+2^{m-1}-j \\
&<& n. 
\end{eqnarray*} 
Clearly, $\Delta_2(i, j, u) >-j >-n$. Hence in this case we have $\Delta_2(i, j, u) \not\equiv 0 \pmod{n}$. 

Summarizing the conclusions above proves this lemma. 
\end{proof}

\begin{lemma}\label{lem-1N2m1} 
Let $m \ge 9$ be odd. 
Let $s^{\infty}$ be the sequence of (\ref{eqn-1Nihoseq}). Then the linear span $\ls_s$ of $s^{\infty}$ is given by 
\begin{eqnarray}\label{eqn-1N2m0} 
\ls_s =\left\{ \begin{array}{l}
                   \frac{m\left(2^{(m+7)/4}+(-1)^{(m-5)/4}\right) +3}{3} \mbox{ if $m \equiv 1 \pmod{8}$} \\
                   \frac{m\left(2^{(m+7)/4}+(-1)^{(m-5)/4}-6\right) +3}{3} \mbox{ if $m \equiv 5 \pmod{8}$.}                    
\end{array}
\right. 
\end{eqnarray} 
We have also 
\begin{equation*}\label{eqn-1N2m21}
\m_s(x) = (x-1) \prod_{i=0}^{2^{\frac{m-1}{4}}-1} m_{\alpha^{-i-2^{\frac{m-1}{2}} }}(x)   
\prod_{1 \le 2j+1 \le 2^{\frac{m-1}{4}}-1 \atop \epsilon_{2j+1}^{((m-1)/4)} \bmod{2}=1} m_{\alpha^{-2j-1}}(x)               
\end{equation*} 
if $m \equiv 1 \pmod{8}$; and 
\begin{equation*}\label{eqn-1N2m31}
\m_s(x) =(x-1) \prod_{i=1}^{2^{\frac{m-1}{4}}-1} m_{\alpha^{-i-2^{\frac{m-1}{2}} }}(x)   
\prod_{3 \le 2j+1 \le 2^{\frac{m-1}{4}}-1 \atop \epsilon_{2j+1}^{((m-1)/4)} \bmod{2}=1} m_{\alpha^{-2j-1}}(x)  
\end{equation*}
if $m \equiv 5 \pmod{8}$, 
where $m_{\alpha^{-j}}(x)$ is the minimal polynomial of $\alpha^{-j}$ over $\gf(2)$ and $\epsilon_{2j+1}^{(h)}$ was 
defined in Section \ref{sec-Inverse}. 
\end{lemma}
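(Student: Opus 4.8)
The plan is to compute the Mattson--Solomon (discrete Fourier) expansion of $s^{\infty}$ explicitly and then read off $\m_s(x)$ and $\ls_s$ from Lemma \ref{lem-ls2}, in the same spirit as the proof of Lemma \ref{lem-22mm1}. Starting from (\ref{eqn-1Nihoseq}), the first step is to turn each trace into a sum over a $2$-cyclotomic coset. By Lemmas \ref{lem-1Nf261} and \ref{lem-2f261} every coset $C_j$ with $j\in B$ or $1\le j\le 2^h-1$ has size $m$, so the multiplicity $m/\ell_j$ equals $1$ in $\gf(2)$ and $\tr((\alpha^t)^j)=\sum_{i\in C_j}(\alpha^t)^i$ for all such $j$. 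This gives
$$
s_t=1+\sum_{j\in B}\sum_{i\in C_j}(\alpha^t)^i+\sum_{j=1}^{2^h-1}\sum_{i\in C_j}(\alpha^t)^i .
$$

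Next I would simplify the last sum exactly as in Lemma \ref{lem-22mm1}: grouping the exponents $j\in\{1,\dots,2^h-1\}$ by their odd part $a$, each odd $a$ contributes $|B_a^{(h)}|=\epsilon_a^{(h)}$ copies of $\sum_{i\in C_a}(\alpha^t)^i$, so over $\gf(2)$ only the odd values of $\epsilon_a^{(h)}$ remain. Lemma \ref{lem-2f262} ensures the cosets $C_a$ for distinct odd $a$ are pairwise disjoint, and Lemma \ref{lem-f264} gives $N_h=(2^h+(-1)^{h-1})/3$ such odd values. Likewise the cosets $\{C_j:j\in B\}$ are pairwise disjoint by Lemma \ref{lem-1Nf262}, and none of these cosets, nor any $C_a$, contains $0$, so the coefficient of $(\alpha^t)^0$ is forced to be $1$ and $m_{\alpha^0}(x)=x-1$ always divides $\m_s(x)$.

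The decisive step is the interaction between the two families of cosets. By Lemma \ref{lem-feb281} the only coset lying in both families is $C_{2^{2h}}=C_1$, which appears as the term $i=0$ of the $B$-sum and, when present, as $a=1$ in the reduced second sum. Since $\epsilon_1^{(h)}=\lceil\log_2(2^h-1)\rceil=h=(m-1)/4$, the term $a=1$ is present precisely when $h$ is odd, i.e. when $m\equiv5\pmod{8}$; in that case the two occurrences of $\sum_{i\in C_1}(\alpha^t)^i$ cancel modulo $2$, so $C_1$ leaves the support $I$, whereas for $m\equiv1\pmod{8}$ (so $h$ even, $\epsilon_1^{(h)}$ even) it survives once. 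Having located this single cancellation, $I$ becomes the disjoint union of $\{0\}$, the $2^h$ (resp.\ $2^h-1$) cosets from $B$, and the $N_h$ (resp.\ $N_h-1$) cosets from the odd $a$ with $\epsilon_a^{(h)}$ odd. Counting cardinalities, inserting the closed form for $N_h$ and substituting $h=(m-1)/4$, yields the two expressions for $\ls_s=|I|$ in (\ref{eqn-1N2m0}); translating the same partition of $I$ through Lemma \ref{lem-ls2}, and grouping the factors $1-\alpha^i x$ over each coset into $m_{\alpha^{-j}}(x)$, gives the two displayed formulas for $\m_s(x)$.

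I expect the main obstacle to be the bookkeeping around the single cross-overlap $C_{2^{2h}}=C_1$: one has to be sure Lemma \ref{lem-feb281} really excludes every other coincidence between the $B$-family and the odd-$a$-family, to track correctly whether the $a=1$ term is actually generated (which is exactly the $m\bmod 8$ dichotomy, governed by the parity of $h$), and to confirm that this lone cancellation is the only departure from a naive disjoint-union count. Once that is pinned down, what remains is the routine arithmetic of adding up $m\cdot(\text{number of cosets})$ and simplifying with $N_h=(2^h+(-1)^{h-1})/3$.
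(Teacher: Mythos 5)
Your proposal is correct and follows essentially the same route as the paper's proof: reduce the second sum via Lemmas \ref{lem-f264} and \ref{lem-22mm1}, use Lemmas \ref{lem-1Nf261}, \ref{lem-1Nf262} and \ref{lem-feb281} to establish that the only possible cross-cancellation is between $x^{2^{2h}}$ and $x$, and detect it via the parity of $h=(m-1)/4$ (equivalently $m \bmod 8$) before applying Lemma \ref{lem-ls2}. Your write-up actually makes explicit the step the paper leaves implicit, namely that $\epsilon_1^{(h)}=h$ so the $a=1$ term survives the reduction exactly when $h$ is odd, which is the source of the $m\equiv 5 \pmod 8$ dichotomy.
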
 

\begin{proof} 
By Lemma \ref{lem-1Nf262}, the monomials in the  function 
\begin{equation}\label{eqn-feb28111}
\tr\left(\sum_{i=0}^{2^h-1} x^{i+2^{2h}} \right)  
\end{equation}
will not cancel each other. Lemmas \ref{lem-f264} and \ref{lem-22mm1} say that after cancellation, we have 
\begin{equation}\label{eqn-feb28121}
\tr\left(\sum_{i=1}^{2^h-1} x^{i} \right) = 
\tr\left(\sum_{1 \le 2j+1 \le 2^h-1 \atop \epsilon_{2j+1}^{(h)}} x^{2j+1} \right). 
\end{equation} 

By Lemma \ref{lem-feb281}, the monomials in the function of (\ref{eqn-feb28111}) will not cancel the monomials in 
the function in the right-hand side of (\ref{eqn-feb28121}) if $m \equiv 1 \pmod{8}$, and only the term 
$x^{2^{2h}}$ in the function of (\ref{eqn-feb28111}) cancels the monomial $x$ in the function in the right-hand 
side of (\ref{eqn-feb28121}) if $m \equiv 5 \pmod{8}$.  

The desired conclusions on the linear span and the minimal polynomial $\m_s(x)$ then follow from Lemmas \ref{lem-ls2}, 
\ref{lem-1Nf261},  and Equation   
(\ref{eqn-1Niho}). 
\end{proof}

The following theorem provides information on the code $\C_{s}$.    

\begin{theorem}\label{thm-yue} 
Let $m \ge 9$ be odd. 
The binary code $\C_{s}$ defined by the sequence of (\ref{eqn-1Nihoseq}) has parameters 
$[2^m-1, 2^{m}-1-\ls_s, d]$ and generator polynomial $\m_s(x)$,  
where $\ls_s$ and $\m_s(x)$ are given in Lemma \ref{lem-1N2m1} and the minimum weight $d$ has the following 
bounds: 
\begin{eqnarray}\label{eqn-niho1b}
d \ge \left\{ \begin{array}{ll} 
 2^{(m-1)/4} + 2 & \mbox{if } m \equiv 1 \pmod{8} \\
 2^{(m-1)/4}       & \mbox{if } m \equiv 5 \pmod{8}. 
\end{array} 
\right. 
\end{eqnarray}  
\end{theorem}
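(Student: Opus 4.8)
The plan is to follow the same template that worked for Theorems in Sections \ref{sec-Gold}, \ref{sec-Welch} and \ref{sec-2hminus1}: the dimension is immediate, and the minimum-distance bound is obtained by identifying a long run of consecutive zeros in the reciprocal code and invoking a BCH-type bound, then correcting by one when an overall parity check $(x-1)$ is present. First I would note that the dimension $2^m-1-\ls_s$ is a direct consequence of Lemma \ref{lem-1N2m1} together with the definition of $\C_s$ as the code with generator polynomial $\m_s(x)$; nothing more is needed there. For the distance, recall that a cyclic code and the code generated by the reciprocal of its generator polynomial have the same weight distribution, so it suffices to locate consecutive zeros among the elements $\alpha^{i+2^{2h}}$ for $i \in A$, where $h=(m-1)/4$ and $2h=(m-1)/2$.

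The key observation is that by Lemma \ref{lem-1N2m1}, in the case $m\equiv 1\pmod 8$ the polynomial $\m_s(x)$ contains the factors $m_{\alpha^{-i-2^{2h}}}(x)$ for \emph{all} $i\in A=\{0,1,\dots,2^h-1\}$, hence the reciprocal of $\m_s(x)$ has $\alpha^{j}$ as a zero for every $j$ in the block of $2^h$ consecutive integers $\{2^{2h},\,2^{2h}+1,\,\dots,\,2^{2h}+2^h-1\}$. By the BCH bound, the code generated by this reciprocal — and therefore $\overline{\C_s}$ (the code with generator polynomial $\m_s(x)/(x-1)$) and its weight-equivalent partner — has minimum weight at least $2^h+1 = 2^{(m-1)/4}+1$. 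Since $(x-1)\mid \m_s(x)$, the code $\C_s$ is an even-weight code, so its minimum weight, being even and at least $2^{(m-1)/4}+1$, is at least $2^{(m-1)/4}+2$. In the case $m\equiv 5\pmod 8$, Lemma \ref{lem-1N2m1} only guarantees the factors $m_{\alpha^{-i-2^{2h}}}(x)$ for $i\in\{1,2,\dots,2^h-1\}$ (the $i=0$ term having been cancelled), giving a block of $2^h-1$ consecutive zeros $\{2^{2h}+1,\dots,2^{2h}+2^h-1\}$, hence minimum weight at least $2^h = 2^{(m-1)/4}$ by the BCH bound; here the even-weight correction does not improve the stated bound, so we simply record $d\ge 2^{(m-1)/4}$.

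The main obstacle I anticipate is making sure the exponents in the block $\{i+2^{2h}: i\in A\}$ really are $2^h$ (resp.\ $2^h-1$) \emph{distinct} consecutive residues modulo $n$ that are all genuine zeros of the reciprocal code, i.e.\ that no collapsing happens: this is exactly what Lemmas \ref{lem-1Nf261}, \ref{lem-1Nf262} and \ref{lem-feb281} were set up to provide, so the argument should go through by citing them, but one must be careful that the consecutive integers $2^{2h},\dots,2^{2h}+2^h-1$ do not wrap around past $n=2^m-1$ — this holds since $2^{2h}+2^h-1 < 2^{4h+1}-1 = n$ for $m=4h+1\ge 9$. A secondary point to verify is that after passing to the reciprocal the relevant zeros are $\alpha^{i+2^{2h}}$ rather than $\alpha^{-(i+2^{2h})}$; since the reciprocal operation sends $\alpha^{-e}$ to $\alpha^{e}$, the $2^h$ (resp.\ $2^h-1$) consecutive exponents land correctly and the BCH bound applies as claimed.
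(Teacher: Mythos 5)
Your proposal is correct and follows essentially the same route as the paper's own proof: dimension from Lemma \ref{lem-1N2m1}, then the BCH bound applied to the block of consecutive zeros $\alpha^{i+2^{(m-1)/2}}$ of the reciprocal generator polynomial (all $i\in\{0,\dots,2^h-1\}$ when $m\equiv 1\pmod 8$, all $i\in\{1,\dots,2^h-1\}$ when $m\equiv 5\pmod 8$), improved by $1$ via the even-weight property in the first case. The extra checks you flag (no wrap-around modulo $n$, distinctness of the exponents via Lemmas \ref{lem-1Nf261}--\ref{lem-feb281}) are sound and only make explicit what the paper leaves implicit.
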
 

\begin{proof} 
The dimension  and the generator polynomial of $\C_{s}$ follow from Lemma \ref{lem-1N2m1} and the 
definition of the code $\C_s$.  We now derive the lower bounds on the minimum weight $d$. It is well 
known that the codes generated by $\m_s(x)$ and its reciprocal have the same weight distribution. The 
reciprocal of $\m_s(x)$ has the zeros $\alpha^{i+2^{2h}}$ for all $i$ in $\{0,1,2, \cdots, 2^h-1\}$ if 
$m \equiv 1 \pmod{8}$, and for all $i$ in $\{1,2, \cdots, 2^h-1\}$ if $m \equiv 5 \pmod{8}$. 
Note that $\C_s$ is an even-weight code.  Then the desired bounds on $d$ follow from the BCH bound.  
\end{proof} 

\begin{example} 
Let $m=5$ and $\alpha$ be a generator of $\gf(2^m)^*$ with $\alpha^5 + \alpha^2 + 1=0$. Then 
the generator polynomial of the code $\C_s$ is 
\begin{eqnarray*} 
\m_s(x)=x^6 + x^3 + x^2 + 1
\end{eqnarray*}
and $\C_s$ is a $[31, 25, 4]$ binary cyclic code and  optimal. 

\end{example} 

\begin{example} 
Let $m=9$ and $\alpha$ be a generator of $\gf(2^m)^*$ with $\alpha^9 + \alpha^4 + 1=0$. Then 
the generator polynomial of the code $\C_s$ is 
\begin{eqnarray*} 
\m_s(x) &= & x^{46} + x^{45} + x^{41} + x^{40} + x^{39} + x^{36} + x^{35} + \\ 
& & x^{33} + x^{28} + x^{27} + x^{26} + x^{25} + x^{24} + x^{22} + x^{21} + \\
& & x^{20} + x^{19} + x^{14} + x^{12} + x^7 + x^4 +
    x^2 + x + 1 
\end{eqnarray*}
and $\C_s$ is a $[511, 465, d]$ binary cyclic code, where $d \ge 6$. The actual minimum weight 
may be larger than 6. 

\end{example}

\subsection{Binary cyclic codes from the Kasami APN function }\label{sec-Kasami} 

The Kasami APN function is defined by $f(x)=x^e$, where $e=2^{2h}-2^h+1$ and $\gcd(m,h)=1$. 
The case $h=1$ is covered by the Gold APN function. 
In this subsection, we have the following additional restrictions on $h$: 
\begin{eqnarray}\label{eqn-Hcondition}
2 \le h \le \left\{  \begin{array}{l}
                   \frac{m-1}{4} \mbox{ if } m \equiv 1 \pmod{4}, \\
                   \frac{m-3}{4} \mbox{ if } m \equiv 3 \pmod{4}, \\                   
                   \frac{m-4}{4} \mbox{ if } m \equiv 0 \pmod{4}, \\
                   \frac{m-2}{4} \mbox{ if } m \equiv 2 \pmod{4}.                    
\end{array} 
\right. 
\end{eqnarray}

Note that 
\begin{eqnarray}\label{eqn-Kasami}
\tr(f(x+1)) 
&=& \tr\left( (x+1) (x+1) ^{\sum_{i=0}^{h-1} 2^{h+i}}    \right) \nonumber \\
&=& \tr\left( (x+1) \prod_{i=0}^{h-1} \left(x^{2^{h+i}}+1\right)     \right) \nonumber \\
&=& \tr\left( (x+1) \sum_{i=0}^{2^h-1} x^{2^{h}i}     \right) \nonumber \\
&=& \tr\left(\sum_{i=0}^{2^h-1} x^{2^{h}i+1}  + \sum_{i=0}^{2^h-1} x^{2^{h}i}     \right)  \nonumber \\ 
&=& \tr\left(\sum_{i=0}^{2^h-1} x^{2^{h}i+1}  + \sum_{i=0}^{2^h-1} x^{i}     \right) \nonumber \\
&=& \tr\left(\sum_{i=0}^{2^h-1} x^{i+2^{m-h}}  + \sum_{i=1}^{2^h-1} x^{i}     \right) +1. 
\end{eqnarray} 

The sequence $s^{\infty}$ of (\ref{eqn-sequence}) defined by the Kasami function is then 
\begin{eqnarray}\label{eqn-Kasamiseq}
s_t= \tr\left(\sum_{i=0}^{2^h-1} (\alpha^t)^{i+2^{m-h}}  + \sum_{i=1}^{2^h-1} (\alpha^t)^{i}     \right) +1
\end{eqnarray} 
for all $t \ge 0$, where $\alpha$ is a generator of $\gf(2^m)^*$. 

In this subsection, we deal with the code $\C_s$ defined by the sequence $s^{\infty}$ of 
(\ref{eqn-Kasamiseq}). 
It is noticed that the final expression of the function of (\ref{eqn-Kasami}) is of the same format 
as that of the function of (\ref{eqn-1Niho}). The proofs of the lemmas and theorems in this subsection 
are very similar to those of Section \ref{sec-2hminus1}. Hence, we present only the main results without 
providing proofs.    

We define the following two sets for convenience: 
\begin{eqnarray*} 
A=\{0,1,2, \cdots, 2^h-1\}, \ B=2^{m-h}+A=\{i+2^{m-h}: i \in A\}.  
\end{eqnarray*} 

\begin{lemma}\label{lem-K1Nf261} 
Let $h$ satisfy the conditions of (\ref{eqn-Hcondition}). 
For any $j \in B$, the size $\ell_j=|C_j|=m$. 
\end{lemma}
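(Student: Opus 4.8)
The plan is to mimic, for the set $B = 2^{m-h} + A$ with $A = \{0,1,\dots,2^h-1\}$, the argument already used in Lemma~\ref{lem-1Nf261} for the first Niho function, since the two final trace expressions (\ref{eqn-1Niho}) and (\ref{eqn-Kasami}) have exactly the same shape with $2^{2h}$ replaced by $2^{m-h}$. Fix $j = i + 2^{m-h} \in B$ with $i \in A$. It suffices to show that $n = 2^m-1$ does not divide $j(2^u - 1)$ for any $u$ with $1 \le u \le m-1$; equivalently, writing the exponent cycle both forwards and backwards, that neither $\Delta_1(j,u) = j(2^u-1)$ nor $\Delta_2(j,u) = j(2^{m-u}-1)$ is a nonzero multiple of $n$. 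Both are nonzero because $0 < j < n$ and $1 \le u \le m-1$.

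The key step is a two-range case split on $u$, chosen so that in each range the relevant $\Delta$ stays strictly between $-n$ and $n$. For $1 \le u \le h$ one bounds $\Delta_1(j,u) = j(2^u-1) \le (2^{m-h} + 2^h - 1)(2^h - 1)$, and the conditions of (\ref{eqn-Hcondition}) on $h$ are exactly what is needed to guarantee this product is less than $2^m - 1$ (the worst case is $u = h$, and since $h \le (m-1)/4$ roughly, the exponents $m-h$ and $h$ on the two factors sum to well below $m$). Since also $\Delta_1(j,u) \ge 2^{m-h} > 0$, we get $\Delta_1(j,u) \not\equiv 0 \pmod n$. For $h+1 \le u \le m-1$ we have $m - u \le m - h - 1$, so $\Delta_2(j,u) = j(2^{m-u}-1) \le (2^{m-h} + 2^h - 1)(2^{m-h-1} - 1)$, and again the bound on $h$ forces this to be less than $n$ while $\Delta_2(j,u) \ge 2^{m-h} > 0$; hence $\Delta_2(j,u) \not\equiv 0 \pmod n$. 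Combining the two ranges shows no $u$ in $\{1,\dots,m-1\}$ can collapse the cycle of $j$, so $\ell_j = |C_j| = m$.

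I expect the only real obstacle to be bookkeeping: verifying that each of the four congruence cases for $h$ in (\ref{eqn-Hcondition}) really does make the product $(2^{m-h} + 2^h - 1)(2^{h} - 1)$ (and its mirror $(2^{m-h} + 2^h - 1)(2^{m-h-1} - 1)$) strictly smaller than $2^m - 1$. This is a routine but slightly delicate estimate: one writes $2^{m-h} + 2^h - 1 < 2^{m-h+1}$ (valid since $h \le m - h$, i.e. $h \le m/2$, which the hypotheses certainly imply), so the first product is below $2^{m-h+1}(2^h - 1) < 2^{m+1} - 2^{m-h+1}$, and then one checks the sharper inequality $2^{m-h+1}(2^h-1) \le 2^m - 1$ holds precisely when $h$ is small enough — which is what the four-way split on $m \bmod 4$ encodes. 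Since the paper explicitly says the proofs in this subsection are ``very similar'' to those of Section~\ref{sec-2hminus1} and are omitted, I would present this lemma's proof at the same level of detail as Lemma~\ref{lem-1Nf261}, i.e., state the two $\Delta$'s, give the two case bounds, and remark that the conditions on $h$ in (\ref{eqn-Hcondition}) are exactly those making the displayed inequalities valid.
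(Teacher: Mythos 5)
There is a genuine gap in your second case. For the Kasami set $B$ the offset is $2^{m-h}$, not $2^{2h}$, and under (\ref{eqn-Hcondition}) we have $4h\le m-1$, so $2^{m-h}$ is of order at least $n^{3/4}$ --- it sits near the top of the interval $[0,n)$, not near $\sqrt{n}$ as the offset $2^{2h}=2^{(m-1)/2}$ does in Lemma \ref{lem-1Nf261}. Consequently your claimed bound for $h+1\le u\le m-1$, namely $\Delta_2(j,u)\le(2^{m-h}+2^h-1)(2^{m-h-1}-1)<n$, is false: that product is roughly $2^{2(m-h)-1}\ge 2^{(3m-2)/2}$, vastly larger than $2^m-1$ (e.g.\ $m=18$, $h=4$, $u=5$, $j=2^{14}+3$ gives $\Delta_2=j(2^{13}-1)\approx 2^{27}$ while $n=2^{18}-1$). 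A magnitude argument applied directly to $j=i+2^{m-h}$ only controls $u\le h$ via $\Delta_1$ and $u\ge m-h$ via $\Delta_2$, leaving the entire middle range $h<u<m-h$ unaccounted for. Your concluding estimate for the first case is also too lossy: $2^{m-h+1}(2^h-1)=2^{m+1}-2^{m-h+1}>2^m-1$ for every $h\ge1$, so the ``sharper inequality'' you invoke never holds; the first case is nevertheless correct, but only via the exact expansion $(2^{m-h}+2^h-1)(2^h-1)=2^m-2^{m-h}+2^{2h}-2^{h+1}+1<2^m-1$, which uses $2h<m-h$.

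The repair --- and presumably the ``easy modification'' of Lemma \ref{lem-1Nf261} that the paper intends, since it leaves the details to the reader --- is to first pass to a small representative of the same cyclotomic coset: $2^h\bigl(i+2^{m-h}\bigr)=2^hi+2^m\equiv 2^hi+1\pmod{n}$, so $C_{i+2^{m-h}}=C_{j'}$ with $j':=2^hi+1\le 2^{2h}-2^h+1$. Because (\ref{eqn-Hcondition}) forces $4h\le m-1$ in every residue class of $m$ modulo $4$, this $j'$ is small enough for the two-range argument of Lemma \ref{lem-1Nf261} to go through verbatim with the split at $u\le m-2h$: there
\[
j'(2^u-1)\le(2^{2h}-2^h+1)(2^{m-2h}-1)=2^m-2^{m-h}+2^{m-2h}-2^{2h}+2^h-1<2^m-1,
\]
while for $u>m-2h$ one has $m-u\le 2h-1$ and $j'(2^{m-u}-1)<2^{2h}\cdot 2^{2h-1}=2^{4h-1}\le 2^{m-2}<n$; since both quantities are positive, neither is divisible by $n$, whence $\ell_{j}=\ell_{j'}=m$. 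Without this coset change (or some other device covering the middle range of $u$), your argument does not prove the lemma.
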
 

\begin{proof} 
The proof of Lemma \ref{lem-1Nf261} is easily modified into a proof for this lemma. 
The detail is left to the reader. 
\end{proof} 

\begin{lemma}\label{lem-K1Nf262} 
Let $h$ satisfy the conditions of (\ref{eqn-Hcondition}). 
For any pair of distinct $i$ and $j$ in $B$, 
$C_i \cap C_j = \emptyset$, i.e., they cannot be in the same $2$-cyclotomic 
coset modulo $n$.  
\end{lemma}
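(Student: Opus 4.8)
The plan is to mimic the proof of Lemma~\ref{lem-1Nf262} almost verbatim, since $B=2^{m-h}+A$ here plays exactly the role that $B=2^{2h}+A$ played there, and the constraints (\ref{eqn-Hcondition}) on $h$ are precisely what is needed to make the same size estimates go through. First I would write $i=i_1+2^{m-h}$ and $j=j_1+2^{m-h}$ with $i_1,j_1\in A=\{0,1,\dots,2^h-1\}$, assume toward a contradiction that $C_i=C_j$, and deduce that there is some $u$ with $1\le u\le m-1$ such that $n$ divides both
\begin{eqnarray*}
\Delta_1(i,j,u) &=& i2^{u}-j = (i_1+2^{m-h})2^u-(j_1+2^{m-h}), \\
\Delta_2(i,j,u) &=& j2^{m-u}-i = (j_1+2^{m-h})2^{m-u}-(i_1+2^{m-h}).
\end{eqnarray*}
As in Lemma~\ref{lem-1Nf262}, one first checks that $\Delta_1(i,j,u)\ne 0$ and $\Delta_2(i,j,u)\ne 0$: for $u=0$ we get $i_1-j_1\ne 0$ (distinctness), and for $1\le u\le m-1$ the term $2^{m-h+u}$ dominates, forcing the expression to be strictly positive.

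The heart of the argument is then the usual case split on $u$. When $u$ is small — say $u\le m-(m-h)-1=h-1$, i.e. $u\le h-1$ — the quantity $\Delta_1(i,j,u)$ is squeezed strictly between $-n$ and $n$: on one side $\Delta_1(i,j,u)\ge -j > -2^{m-h}-2^h \ge -n$ (crudely), and on the other $\Delta_1(i,j,u)\le (2^{m-h}+2^h-1)(2^{h-1}-1)+\cdots < n$, the last inequality being where the bound $h\le (m-1)/4$ (resp.\ $(m-3)/4$, $(m-4)/4$, $(m-2)/4$) is used to guarantee the product is below $2^m-1$. When $u$ is large, one instead works with $\Delta_2(i,j,u)$: writing $m-u$ in place of the small exponent, the same bound on $h$ gives $0<\Delta_2(i,j,u)<n$. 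Since small and large $u$ together cover $\{1,\dots,m-1\}$ (the cutoff being chosen so the two ranges overlap or abut), no admissible $u$ exists, contradicting $C_i=C_j$. Hence $C_i\cap C_j=\emptyset$.

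The main obstacle — really the only place requiring care — is pinning down the exact threshold between the "small $u$" and "large $u$" regimes and verifying that the four cases of (\ref{eqn-Hcondition}) each make the relevant product of the form $(2^{m-h}+2^h-1)(2^{e}-1)$ strictly less than $2^m-1$ for the appropriate exponent $e$. This is a routine but slightly fiddly calculation: one needs $2^{m-h}\cdot 2^{e} \lesssim 2^m$, i.e. roughly $e\le h$, reconciled with the requirement that the two $u$-ranges cover all of $\{1,\dots,m-1\}$, which is what dictates the precise ceiling on $h$ in each residue class of $m$ modulo $4$. Since the paper explicitly says "The proof of Lemma~\ref{lem-1Nf261} is easily modified" for the companion Lemma~\ref{lem-K1Nf261}, I would simply state that the same modification of the proof of Lemma~\ref{lem-1Nf262} yields this lemma, and leave the elementary inequality bookkeeping to the reader, exactly as the authors do for Lemma~\ref{lem-K1Nf261}.
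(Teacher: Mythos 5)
The paper itself gives no details here (it only says the proof of Lemma~\ref{lem-1Nf262} is "easily modified"), so the real question is whether your modification actually works — and as written it does not. The two-case split of Lemma~\ref{lem-1Nf262} succeeds there because the base point $2^{2h}$ of $B$ equals $2^{(m-1)/2}$, roughly $\sqrt{n}$, so the range of $u$ for which $i2^u$ stays below $n$ and the range for which $j2^{m-u}$ stays below $n$ together cover all of $\{1,\dots,m-1\}$. In the present lemma the base point is $2^{m-h}$ with $h\lesssim m/4$ by (\ref{eqn-Hcondition}), so $\Delta_1(i,j,u)=(i_1+2^{m-h})2^u-(j_1+2^{m-h})$ is confined to $(-n,n)$ only for $u\le h-1$, and symmetrically $\Delta_2$ only for $u\ge m-h+1$. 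These two ranges leave roughly $m-2h\ge m/2$ values of $u$ completely unhandled; your claim that "the cutoff being chosen so the two ranges overlap or abut" is exactly where the argument breaks, since abutting would require $h\ge(m+1)/2$, the opposite of the hypothesis. On the middle range $h\le u\le m-h$ one has $2^{m-h+u}\ge 2^m$, so $\Delta_1$ must first be reduced via $2^{m-h+u}\equiv 2^{u-h}\pmod n$, and ruling out $i_12^u+2^{u-h}=j_1+2^{m-h}$ then needs a separate argument (e.g.\ comparing $2$-adic valuations: the left side is $2^{u-h}$ times the odd number $i_12^h+1$, which forces $u\le 2h-1$ and then a size contradiction using $h\le m/4$). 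None of this is present in the proof you are copying.

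There is a cleaner route that genuinely is a verbatim transplant of Lemma~\ref{lem-1Nf262}: observe from the derivation of (\ref{eqn-Kasami}) that $2^h\bigl(i+2^{m-h}\bigr)\equiv 1+i2^h\pmod n$, so $C_{i+2^{m-h}}=C_{1+i2^h}$, and the set $B$ may be replaced by the coset-equivalent representatives $\{1+i2^h: i\in A\}$, all of which lie below $2^{2h}\le 2^{m/2}$ under (\ref{eqn-Hcondition}). With these small representatives (which are moreover all $\equiv 1\pmod{2^h}$, giving the nonvanishing of the differences) the two-case split over $u\le m-2h$ and $u>m-2h$ does cover $\{1,\dots,m-1\}$ and the inequalities close. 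Either fix is needed; the proposal as it stands asserts coverage that fails.
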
 

\begin{proof} 
The proof of Lemma \ref{lem-1Nf262} is easily modified into a proof for this lemma. 
The detail is left to the reader. 
\end{proof} 

\begin{lemma}\label{lem-Kfeb281} 
Let $h$ satisfy the conditions of (\ref{eqn-Hcondition}). 
For any $i+2^{m-h} \in B$ and odd $j \in A$, 
\begin{eqnarray}
C_{i+2^{m-h}} \cap C_j = \left\{ \begin{array}{l}
                             C_j \mbox{ if } (i,j)=(0,1) \\
                             \emptyset \mbox{ otherwise.}  
\end{array}
\right. 
\end{eqnarray}
\end{lemma}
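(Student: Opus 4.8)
The plan is to mimic the proof of Lemma \ref{lem-feb281}, since the final expression in (\ref{eqn-Kasami}) has exactly the same shape as that in (\ref{eqn-1Niho}) with the role of $2^{2h}$ replaced by $2^{m-h}$. So I would proceed exactly as in that earlier lemma, being careful about how the bounds on $h$ in (\ref{eqn-Hcondition}) replace the implicit bound $h=(m-1)/4$ there.

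First I would assume for contradiction that $C_{i+2^{m-h}} = C_j$ for some $i+2^{m-h} \in B$ and odd $j \in A$ with $(i,j) \neq (0,1)$. Then there is an integer $0 \le u \le m-1$ with both
$$
\Delta_1(i,j,u) = j2^u - (i+2^{m-h}) \equiv 0 \pmod{n}
$$
and
$$
\Delta_2(i,j,u) = (i+2^{m-h})2^{m-u} - j \equiv 0 \pmod{n}.
$$
Next I would split into the same three ranges of $u$ as in Lemma \ref{lem-feb281}. In the critical case $u = m-h$ (the analogue of $u = 2h$ there), multiplying $\Delta_1$ by a suitable power of $2$ and reducing modulo $n$ should collapse the congruence to a plain equation $j - 1 - i\,c = 0$ for an explicit constant $c$ (a power of $2$ that is large relative to $2^h$), forcing $i = 0$ and $j = 1$; this is the only surviving solution and it is the excluded pair. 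For $0 \le u < m-h$ I would show $\Delta_1(i,j,u) \neq 0$ as an integer: writing $i = 2^u i_1$ with $i_1$ odd (possible since $j$ is odd and the equation $j2^u = i + 2^{m-h}$ would force $2^u \mid i$), one deduces $u$ is small, hence $j = i_1 + 2^{m-h-u}$ exceeds $2^h - 1$, contradicting $j \in A$; and since $0 < \Delta_1(i,j,u) < n$ in this range, it is not $\equiv 0 \pmod n$. For $m-h+1 \le u \le m-1$ I would bound $\Delta_2(i,j,u)$: it is positive since $j$ is odd, and the estimate $\Delta_2(i,j,u) = i2^{m-u} + 2^{2(m-h)-u} - j < n$ (using $i \le 2^h - 1$ and $m-u \le h-1$) together with $\Delta_2 > -j > -n$ shows $\Delta_2 \not\equiv 0 \pmod n$.

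The main obstacle is verifying that the numerical estimates in all three ranges actually go through under the four separate constraints on $h$ in (\ref{eqn-Hcondition}); in particular the bound $2(m-h) - u < m$ needed in the last case, and the positivity/size bounds in the middle case, are exactly what dictates the somewhat awkward thresholds $\frac{m-1}{4}, \frac{m-3}{4}, \frac{m-4}{4}, \frac{m-2}{4}$ according to $m \bmod 4$. I would carry out the arithmetic once in the generic regime and then check that each residue class of $m$ modulo $4$ admits the claimed range, just as is implicitly done in Section \ref{sec-1Niho}. Since the remark at the end of Section \ref{sec-Kasami} already states these proofs are "very similar" and left to the reader, I expect the write-up to be short: cite Lemma \ref{lem-feb281}, indicate the substitution $2^{2h} \rightsquigarrow 2^{m-h}$, and note that the inequalities in (\ref{eqn-Hcondition}) are precisely what keep all the intermediate quantities strictly between $-n$ and $n$.
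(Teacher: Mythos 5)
Your approach matches the paper's exactly: the paper's entire proof of this lemma is the one-line remark that the proof of Lemma \ref{lem-feb281} is easily modified, and your adaptation (the critical case $u=m-h$ collapsing modulo $n$ to $j-1-i2^{h}=0$ and hence $(i,j)=(0,1)$, the divisibility argument ruling out $\Delta_1=0$ for $u<m-h$, and the size bound on $\Delta_2$ for larger $u$) is precisely the intended modification. Two small slips to correct in the write-up: in the last range the exponent is $2^{2m-h-u}$, not $2^{2(m-h)-u}$ (the hypothesis $u\ge m-h+1$ still gives $2^{2m-h-u}\le 2^{m-1}$, so the estimate survives), and in the middle range $\Delta_1$ need not be positive, so one should invoke the two-sided bound $-n<\Delta_1<n$ together with $\Delta_1\ne 0$ rather than $0<\Delta_1<n$.
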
  

\begin{proof} 
The proof of Lemma \ref{lem-feb281} is easily modified into a proof for this lemma. 
The detail is left to the reader. 
\end{proof}

\begin{lemma}\label{lem-K1N2m1} 
Let $h$ satisfy the conditions of (\ref{eqn-Hcondition}). 
Let $s^{\infty}$ be the sequence of (\ref{eqn-Kasamiseq}). Then the linear span $\ls_s$ of $s^{\infty}$ is given by 
\begin{eqnarray}\label{eqn-K1N2m0} 
\ls_s =\left\{ \begin{array}{l}
                   \frac{m\left(2^{(h+2}+(-1)^{h-1}\right) +3}{3} \mbox{ if $h$ is even} \\
                   \frac{m\left(2^{h+2}+(-1)^{h-1}-6\right) +3}{3} \mbox{ if $h$ is odd.}                    
\end{array}
\right. 
\end{eqnarray} 
We have also 
\begin{equation*}\label{eqn-K1N2m21}
\m_s(x) = (x-1) \prod_{i=0}^{2^{h}-1} m_{\alpha^{-i-2^{m-h} }}(x)   
\prod_{1 \le 2j+1 \le 2^{h}-1 \atop \epsilon_{2j+1}^{h} \bmod{2}=1} m_{\alpha^{-2j-1}}(x)               
\end{equation*} 
if $h$ is even; and 
\begin{equation*}\label{eqn-K1N2m31}
\m_s(x) =(x-1) \prod_{i=1}^{2^{h}-1} m_{\alpha^{-i-2^{m-h} }}(x)   
\prod_{3 \le 2j+1 \le 2^{h}-1 \atop \epsilon_{2j+1}^{h} \bmod{2}=1} m_{\alpha^{-2j-1}}(x)  
\end{equation*}
if $h$ is odd, 
where $m_{\alpha^{-j}}(x)$ is the minimal polynomial of $\alpha^{-j}$ over $\gf(2)$ and $\epsilon_{2j+1}^{(h)}$ was 
defined in Section \ref{sec-Inverse}. 
\end{lemma}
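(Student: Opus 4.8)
The plan is to follow the template established in Section~\ref{sec-2hminus1} and in Lemma~\ref{lem-1N2m1}, adapting each step to the new exponent $e=2^{2h}-2^h+1$. The starting point is Equation~(\ref{eqn-Kasami}), which expresses $\tr(f(x+1))$ as $1+\tr\left(\sum_{i=0}^{2^h-1} x^{i+2^{m-h}} + \sum_{i=1}^{2^h-1} x^i\right)$; this is syntactically identical to the expansion~(\ref{eqn-1Niho}) for the first Niho function, with $2^{2h}$ replaced by $2^{m-h}$ and $h=(m-1)/4$ replaced by the free parameter $h$ subject to~(\ref{eqn-Hcondition}). Because of this structural parallel, the bookkeeping done for the Niho case carries over essentially verbatim once the required disjointness and coset-size facts are in hand.

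First I would invoke Lemmas~\ref{lem-K1Nf261}, \ref{lem-K1Nf262}, and~\ref{lem-Kfeb281}: these tell us that every $j\in B$ has $\ell_j=|C_j|=m$, that the cosets $C_i$ for $i\in B$ are pairwise disjoint, and that $C_{i+2^{m-h}}\cap C_j=\emptyset$ for odd $j\in A$ unless $(i,j)=(0,1)$. Combined with Lemmas~\ref{lem-2f261}, \ref{lem-2f262} and~\ref{lem-f264}, which govern the second sum $\sum_{i=1}^{2^h-1}x^i$ over $A$, these give a complete accounting of which $\alpha^j$ appear in the reduced trace expansion of $s_t$: the terms coming from $B$ contribute $\prod_i m_{\alpha^{-i-2^{m-h}}}(x)$, the odd exponents $2j+1\le 2^h-1$ with $\epsilon_{2j+1}^{(h)}$ odd contribute $\prod m_{\alpha^{-2j-1}}(x)$ by the argument of~(\ref{eqn-22m11}), and the constant $1$ contributes the factor $(x-1)$. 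The one subtlety, exactly as in Lemma~\ref{lem-1N2m1}, is the interaction between the two sums: when $h$ is odd the exponent $2^{m-h}\in B$ collides with the exponent $1$ from the $A$-sum (the $(i,j)=(0,1)$ case of Lemma~\ref{lem-Kfeb281}), so those two monomials cancel and the index $i=0$ is dropped from the first product while $2\cdot 0+1=1$ is dropped from the second; when $h$ is even no such cancellation occurs. This dichotomy is what produces the two cases in~(\ref{eqn-K1N2m0}) and the two displayed formulas for $\m_s(x)$. The linear span is then just the total number of indices surviving in $I$, which by Lemma~\ref{lem-f264} is $m\cdot 2^h$ from the $B$-part plus $m\cdot N_h$ from the odd-$A$-part plus $1$ from $(x-1)$, minus the $2m$ lost to cancellation in the odd-$h$ case; substituting $N_h=(2^h+(-1)^{h-1})/3$ and simplifying gives the stated $\frac{m(2^{h+2}+(-1)^{h-1})+3}{3}$ and $\frac{m(2^{h+2}+(-1)^{h-1}-6)+3}{3}$ respectively. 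Finally, Lemma~\ref{lem-ls2} converts the index set $I$ into the minimal polynomial as the product of the $(1-\alpha^i x)$, i.e.\ (after passing to minimal polynomials over $\gf(2)$) the displayed expressions.

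The main obstacle I anticipate is not conceptual but the verification that the hypotheses~(\ref{eqn-Hcondition}) are exactly strong enough to make Lemmas~\ref{lem-K1Nf261}--\ref{lem-Kfeb281} hold with $2^{m-h}$ in place of $2^{2h}$; in the Niho case the relation $2^{2h}$ had size roughly $2^{(m-1)/2}$, whereas here $2^{m-h}$ is genuinely large, so the Euclidean-division range estimates (of the type $-n<\Delta<n$) need the tighter upper bounds on $h$ listed in the four congruence classes of $m\bmod 4$. Since the statement defers those proofs to the reader via Lemmas~\ref{lem-K1Nf261}, \ref{lem-K1Nf262}, \ref{lem-Kfeb281}, the proof of Lemma~\ref{lem-K1N2m1} itself reduces to assembling those inputs plus the $A$-sum lemmas and doing the arithmetic of counting $|I|$; I would simply write that the conclusions follow from Lemmas~\ref{lem-ls2}, \ref{lem-K1Nf261}, \ref{lem-K1Nf262}, \ref{lem-Kfeb281}, \ref{lem-2f262}, \ref{lem-f264} and Equation~(\ref{eqn-Kasami}), mirroring the proof of Lemma~\ref{lem-1N2m1} with the cancellation analysis keyed on the parity of $h$ rather than the residue of $m$ modulo $8$.
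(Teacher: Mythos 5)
Your proposal is correct and follows essentially the same route as the paper, which simply says the proof of Lemma~\ref{lem-1N2m1} is to be modified, i.e.\ invoke Lemmas~\ref{lem-K1Nf261}, \ref{lem-K1Nf262}, \ref{lem-Kfeb281} together with the $A$-sum reduction of Section~\ref{sec-2hminus1} and key the single $(i,j)=(0,1)$ cancellation on the parity of $h$ (equivalently, on whether $\epsilon_1^{(h)}=h$ is odd). Your accounting of the linear span, $1+m2^h+mN_h$ minus $2m$ in the odd-$h$ case, matches the stated formulas exactly; you have in fact supplied more detail than the paper does.
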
 

\begin{proof} 
The proof of Lemma \ref{lem-1N2m1} is easily modified into a proof for this lemma. 
The detail is left to the reader. 
\end{proof}

The following theorem provides information on the code $\C_{s}$.    

\begin{theorem} \label{thm-Kyue}
Let $h$ satisfy the conditions of (\ref{eqn-Hcondition}). 
The binary code $\C_{s}$ defined by the sequence of (\ref{eqn-Kasamiseq}) has parameters 
$[2^m-1, 2^{m}-1-\ls_s, d]$ and generator polynomial $\m_s(x)$, 
where $\ls_s$ and $\m_s(x)$ are given in Lemma \ref{lem-K1N2m1} and the minimum weight $d$ has the following 
bounds: 
\begin{eqnarray}\label{eqn-Kniho1b}
d \ge \left\{ \begin{array}{ll} 
 2^{h} + 2 & \mbox{if $h$ is even}  \\
 2^{h}       & \mbox{if $h$ is odd.} 
\end{array} 
\right. 
\end{eqnarray}  
\end{theorem}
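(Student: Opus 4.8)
The plan is to mirror, step by step, the argument used for Theorem \ref{thm-yue} (the first Niho case). First I would note that the dimension and the generator polynomial of $\C_s$ are immediate from Lemma \ref{lem-K1N2m1} together with the definition $\C_s = (\m_s(x))$, so the only real content is the lower bound on the minimum weight $d$. As in the previous subsections, I would invoke the well-known fact that a cyclic code and the cyclic code generated by the reciprocal of its generator polynomial have the same weight distribution; hence it suffices to locate a good run of consecutive zeros (or an HT-type configuration) among the powers $\alpha^j$ with $j$ ranging over $\bigcup_{i} C_{i}$, where the union is taken over the defining set of the \emph{reciprocal} of $\m_s(x)$.

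The key observation is that, by Lemma \ref{lem-K1N2m1}, the reciprocal of $\m_s(x)$ has among its zeros $\alpha^{\,i+2^{m-h}}$ for all $i$ in $\{0,1,\dots,2^h-1\}$ when $h$ is even, and for all $i$ in $\{1,2,\dots,2^h-1\}$ when $h$ is odd. By Lemmas \ref{lem-K1Nf261} and \ref{lem-K1Nf262} these exponents all lie in distinct full-size cyclotomic cosets, so they genuinely all appear in the defining set and none is lost to coincidences. Now $\{i+2^{m-h} : 0 \le i \le 2^h-1\}$ is a set of $2^h$ consecutive integers $2^{m-h}, 2^{m-h}+1, \dots, 2^{m-h}+2^h-1$; multiplying each exponent by the unit $2^{h}$ modulo $n=2^m-1$ (which permutes each coset and hence does not change the set of zeros), these become $1, 2, \dots, 2^h$ after reducing $2^{m} \equiv 1$, i.e.\ a block of $2^h$ consecutive integers. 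Applying the BCH bound to this run of $2^h$ consecutive zeros gives minimum weight at least $2^h+1$ for the reciprocal code, hence for $\C_s$. When $h$ is even, $\nu$-parity considerations (the factor $(x-1)$ divides $\m_s(x)$, so $\C_s$ is an even-weight code, exactly as argued in the proof of Theorem \ref{thm-38}) upgrade the bound from $2^h+1$ to $2^h+2$. When $h$ is odd, one starts from $\{1,2,\dots,2^h-1\} \cup \{2^h\}$ still giving $2^h$ consecutive values after the same shift, but the endpoint $i=0$ (i.e.\ the exponent $2^{m-h}$, which after shifting is $1$) is absent; a short check shows one still recovers a run of length $2^h-1$ of consecutive zeros plus the even-weight gain, yielding $d \ge 2^h$.

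The main obstacle I anticipate is bookkeeping the shift by $2^h$ correctly together with the absorption $x^{2^{m-h}} \leftrightarrow x$ that happens in the $h$-odd case (this is the analogue of the $m\equiv 5 \pmod 8$ subtlety in Lemma \ref{lem-1N2m1}): one must verify that after the shift the surviving exponents in the reciprocal's defining set really do form a consecutive block of the claimed length, and that the missing term at one end is exactly compensated (or not) by the even-weight property. Everything else—the cyclotomic-coset disjointness and full size—has already been delegated to Lemmas \ref{lem-K1Nf261}, \ref{lem-K1Nf262}, \ref{lem-Kfeb281} and \ref{lem-K1N2m1}, whose proofs are said to be routine modifications of the first-Niho lemmas. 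So the write-up is essentially: dimension from Lemma \ref{lem-K1N2m1}; pass to the reciprocal; identify the block of $2^h$ consecutive zeros coming from $B$; apply the BCH bound; and then split on the parity of $h$, using that $(x-1)\mid \m_s(x)$ forces even weights, to land on $2^h+2$ respectively $2^h$.
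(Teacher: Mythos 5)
Your proposal matches the paper's intended argument: the paper's own proof of Theorem \ref{thm-Kyue} simply defers to the proof of Theorem \ref{thm-yue}, which is exactly what you reproduce --- dimension and generator polynomial from Lemma \ref{lem-K1N2m1}, pass to the reciprocal, read off the block of consecutive zeros $\alpha^{i+2^{m-h}}$ coming from the set $B$, apply the BCH bound, and invoke the even-weight property (from the factor $x-1$) when $h$ is even. One correction: your auxiliary step of multiplying the exponents by $2^h$ is both unnecessary and miscomputed, since $2^h(2^{m-h}+i)\equiv 1+i\,2^h \pmod{2^m-1}$, which is an arithmetic progression of common difference $2^h$ rather than the consecutive block $1,2,\dots,2^h$; fortunately the untransformed exponents $2^{m-h},\dots,2^{m-h}+2^h-1$ are already consecutive, as you note, so the BCH bound applies to them directly and the conclusion stands. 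Likewise, in the $h$ odd case the BCH bound on the $2^h-1$ consecutive zeros already yields $d\ge 2^h$ on its own, with no even-weight upgrade needed (and no extra exponent $2^h$ enters the picture), so that part of your discussion can be simplified.
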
 

\begin{proof} 
The proof of Lemma \ref{thm-yue} is easily modified into a proof for this lemma with the helps of the 
lemmas presented in this subsection. 
The detail is left to the reader. 
\end{proof}

\begin{example} 
Let $(m,h)=(3,2)$ and $\alpha$ be a generator of $\gf(2^m)^*$ with $\alpha^3 + \alpha + 1=0$. Then 
$\C_s$ is a $[7, 3, 4]$ binary code with generator polynomial  
$$ 
\m_s(x)=x^4 + x^3 + x^2 + 1.    
$$
Its dual is a $[7,4,3]$ cyclic code. Both codes are optimal. In this example, the condition of (\ref{eqn-Hcondition}) 
is not satisfied. So the conclusions on the code of 
this example may not agree with the conclusions of Theorem \ref{thm-Kyue}. 
\end{example}

\begin{example} 
Let $(m,h)=(5,2)$ and $\alpha$ be a generator of $\gf(2^m)^*$ with $\alpha^5 + \alpha^2 + 1=0$. Then 
the generator polynomial of the code $\C_s$ is 
\begin{eqnarray*} 
\lefteqn{\m_s(x)=} \\
& x^{16} + x^{14} + x^{10} + x^{9} + x^8 + x^7 + x^5 + x^4 + x^3 + x^2 + x+ 1
\end{eqnarray*}
and $\C_s$ is a $[31, 15, 8]$ binary cyclic code. Its dual is a $[31,16,7]$ cyclic code. Both codes are optimal. 
In this example, the condition of (\ref{eqn-Hcondition}) is not satisfied. So the conclusions on the code of 
this example do not agree with the conclusions of Theorem \ref{thm-Kyue}. 
\end{example} 

\begin{example} 
Let $(m,h)=(7,2)$ and $\alpha$ be a generator of $\gf(2^m)^*$ with $\alpha^7 + \alpha + 1=0$. Then 
the generator polynomial of the code $\C_s$ is 
\begin{eqnarray*} 
\m_s(x) &= & x^{36} + x^{28} + x^{27} + x^{23} + x^{21} + x^{20} + x^{18} + \\ 
& &  x^{13} + x^{12} + x^9 + x^7 + x^6 + x^5  + 1  
\end{eqnarray*}
and $\C_s$ is a $[127, 91, 8]$ binary cyclic code. 

\end{example} 

In this subsection, we obtained interesting results on the code $\C_s$ under the conditions 
of (\ref{eqn-Hcondition}). When $h$ is outside the ranges, it may be hard to determine the 
dimension of the code $\C_s$, let alone the minimum weight $d$ of the code. Hence, 
it would be nice if the following open problem can be solved. 

\begin{open} 
Determine the dimension and the minimum weight of the code $\C_s$ defined by the Kasami 
APN power function when $h$ satisfies 
\begin{eqnarray}\label{eqn-Hcondition2} 
\left\{  \begin{array}{l}
                 \frac{m-1}{2} \ge h >                  \frac{m-1}{4} \mbox{ if } m \equiv 1 \pmod{4}, \\
                 \frac{m-3}{2} \ge h >                   \frac{m-3}{4} \mbox{ if } m \equiv 3 \pmod{4}, \\                   
                 \frac{m-4}{2} \ge h >                   \frac{m-4}{4} \mbox{ if } m \equiv 0 \pmod{4}, \\
                 \frac{m-2}{2} \ge h >                   \frac{m-2}{4} \mbox{ if } m \equiv 2 \pmod{4}.                    
\end{array} 
\right. 
\end{eqnarray}
\end{open}

\subsection{Open problems regarding binary cyclic codes from APN functions}

In the previous subsections of Section \ref{sec-binary}, we investigated binary cyclic codes from 
some APN functions. It would be good if the following open problems could be solved. 

\begin{open} 
Determine the dimension and the minimum weight of the code $\C_s$ defined by the second  
Niho APN function $x^e$, where $e=2^{(m-1)/2}+2^{(3m-1)/4}-1$ and $m \equiv 3 \pmod{4}$.  
\end{open}  

\begin{open} 
Determine the dimension and the minimum weight of the code $\C_s$ defined by the Dobbertin   
APN function $x^e$, where $e=2^{4i}+2^{3i}+2^{2i}+2^i-1$ and $m=5i$.  
\end{open}  

A number of other types of APN functions $f(x)$ were discovered in \cite{BC,BCL2,BCL1,BCP}. 
The code $\C_s$ defined by these APN functions may also have good parameters. It would be 
interesting to investigate these codes.      

\section{Nonbinary cyclic codes from APN and planar functions}\label{sec-nonbinary} 

In this section, $q$ is a power of an odd prime $p$. Note that both APN and planar 
functions on $\gf(q^m)$ exist. 

\subsection{Cyclic codes from the planar function $x^2$}\label{sec-2square} 

In this subsection we study the code $\C_s$ defined by the planar function $f(x)=x^{2}$ 
on $\gf(r)$. To this end, we need to prove the following lemma.  

\begin{lemma}\label{lem-2Gold} 
Let $s^{\infty}$ be the sequence of (\ref{eqn-sequence}), where $f(x)=x^2$. 
Then the linear span $\ls_s$ of $s^{\infty}$ is equal to $2m+\N_p(m)$ and the minimal polynomial $\m_s(x)$ 
of  $s^{\infty}$ is given by 
\begin{equation}\label{eqn-2Gold}
\m_s(x)= (x-1)^{\N_p(m)} m_{\alpha^{-1}}(x) m_{\alpha^{-2}}(x)
\end{equation} 
where $m_{\alpha^{-j}}(x)$ is the minimal polynomial of $\alpha^{-j}$ over $\gf(q)$, 
$\N_p(i) =0$ if $i \equiv 0 \pmod{p}$ and $\N_p(i) =1$ otherwise. 
\end{lemma}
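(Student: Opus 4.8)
The plan is to compute the defining sequence $s^{\infty}$ explicitly, expand it in the basis $\{t\mapsto(\alpha^t)^i\}$ used in Lemma \ref{lem-ls2}, identify the index set, and then rewrite the resulting root set as a product of minimal polynomials.

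First I would expand, for every $t\ge 0$,
\begin{eqnarray*}
s_t=\tr\left((\alpha^t+1)^2\right)=\tr(\alpha^{2t})+2\tr(\alpha^t)+\tr(1)
=\sum_{j=0}^{m-1}(\alpha^t)^{2q^j}+2\sum_{j=0}^{m-1}(\alpha^t)^{q^j}+\tr(1),
\end{eqnarray*}
all exponents read modulo $n$. Since $\gf(q)$ has characteristic $p$, $\tr(1)=m\cdot 1_{\gf(q)}$ equals zero precisely when $p\mid m$, i.e. it is nonzero iff $\N_p(m)=1$. The exponents $2q^j\bmod n$ run over the $q$-cyclotomic coset $C_2$, the exponents $q^j\bmod n$ run over $C_1$, and the exponent $0$ is the coset $C_0=\{0\}$.

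The combinatorial heart of the argument is to check that $C_0,C_1,C_2$ are pairwise disjoint and that $|C_1|=|C_2|=m$. Disjointness of $C_0$ from the others is clear. Since $q$ is a power of the odd prime $p$, the residues $1,q,q^2,\dots,q^{m-1}$ (which constitute $C_1$) are never equal to $2$, so $2\notin C_1$; as two $q$-cyclotomic cosets are either equal or disjoint, this forces $C_1\cap C_2=\emptyset$. Moreover $|C_1|=m$ is the multiplicative order of $q$ modulo $n$, and $|C_2|=m$ because $2,2q,\dots,2q^{m-1}$ are pairwise distinct residues modulo $n=q^m-1$. Therefore the displayed expansion is exactly of the form required by Lemma \ref{lem-ls2}, with $c_i=1$ for $i\in C_2$, $c_i=2$ for $i\in C_1$, $c_0=\tr(1)$, and $c_i=0$ otherwise, where $c_i=2\ne 0$ uses that $p$ is odd. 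Hence the index set is $I=C_1\cup C_2$ if $p\mid m$ and $I=\{0\}\cup C_1\cup C_2$ if $p\nmid m$, giving $\ls_s=|I|=2m+\N_p(m)$.

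For the minimal polynomial, Lemma \ref{lem-ls2} shows $\m_s(x)=\prod_{i\in I}(1-\alpha^i x)$, whose roots are $\{\alpha^{-i}:i\in I\}$; taking this in its monic form (as needed for a generator polynomial) and using that $\{\alpha^{-i}:i\in C_1\}$ and $\{\alpha^{-i}:i\in C_2\}$ are the full sets of $\gf(q)$-conjugates of $\alpha^{-1}$ and $\alpha^{-2}$ respectively, while $\alpha^{0}=1$ is the root of $x-1$, one obtains $\m_s(x)=(x-1)^{\N_p(m)}m_{\alpha^{-1}}(x)m_{\alpha^{-2}}(x)$. The three factors are pairwise coprime exactly because $C_0,C_1,C_2$ are pairwise disjoint, and their degrees sum to $\N_p(m)+m+m=\ls_s$, which serves as a built-in consistency check. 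The only genuine obstacle is the bookkeeping in the third paragraph — namely $2\notin C_1$ and $|C_2|=m$ — which is precisely where oddness of $p$ is needed; everything else is a routine trace computation followed by a direct appeal to Lemma \ref{lem-ls2}.
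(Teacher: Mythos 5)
Your proposal is correct and follows essentially the same route as the paper: expand $\tr((\alpha^t+1)^2)=\tr(1)+2\tr(\alpha^t)+\tr(\alpha^{2t})$, verify that $C_1\cap C_2=\emptyset$ and $\ell_1=\ell_2=m$, and invoke Lemma \ref{lem-ls2}. You merely fill in the details the paper labels ``easily proved,'' including the (genuinely needed) observation that the coefficient $2$ is nonzero because $p$ is odd.
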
 

\begin{proof} 
It is easily seen that 
\begin{eqnarray}\label{eqn-2Gold2}
s_t &=& \tr((\alpha^{t}+1)^{2}) \nonumber \\ 
&=& \tr(1)+  2\sum_{j=0}^{m-1} (\alpha^t)^{q^j} + \sum_{j=0}^{m-1} (\alpha^t)^{2q^j}. 
\end{eqnarray}

It can be easily proved that $\ell_1=\ell_{n-1}=\ell_{2}=\ell_{n-2}=m$ and $C_1 \cap C_2=\emptyset$. 
The desired conclusions on the linear span and the minimal polynomial $\m_s(x)$ then follow from Lemma \ref{lem-ls2} 
and (\ref{eqn-2Gold2}). 
\end{proof}

The following theorem provides information on the code $\C_{s}$.    

\begin{theorem}\label{thm-2square} 
The code $\C_{s}$ defined by the sequence of Lemma \ref{lem-2Gold} has parameters 
$[n, n-2m-\N_p(m), d]$ and generator polynomial $\m_s(x)$ of (\ref{eqn-2Gold}),  
where 
\begin{eqnarray*} 
\left\{ \begin{array}{ll}
d=4             & \mbox{ if $q=3$ and $\N_p(m)=0$,} \\
4 \le d \le 5 & \mbox{ if $q=3$ and $\N_p(m)=1$,} \\  
3 \le d \le 4 & \mbox{ if $q >3$ and $\N_p(m)=1$,} \\ 
d = 3           & \mbox{ if $q >3$ and $\N_p(m)=0$.} 
\end{array} 
\right. 
\end{eqnarray*}  
\end{theorem}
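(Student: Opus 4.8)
The plan is to establish the dimension from Lemma~\ref{lem-2Gold} and then pin down the minimum weight $d$ in each of the four cases by a combination of lower bounds from the BCH/Hartmann--Tzeng machinery and upper bounds from exhibiting explicit low-weight codewords. First I would note that $\dim \C_s = n - \ls_s = n - 2m - \N_p(m)$ is immediate from the definition of $\C_s$ together with the linear span computed in Lemma~\ref{lem-2Gold}, so the only real content is the minimum distance.

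For the lower bounds, the key observation is that $\m_s(x)$ has among its roots $\alpha^{-1}$ and $\alpha^{-2}$, hence (passing to the reciprocal code, which has the same weight distribution) the code generated by the reciprocal of $\m_s(x)$ has $\alpha^{1}$ and $\alpha^{2}$ among its zeros; these are two consecutive powers, so the BCH bound gives $d \ge 3$ in all cases. When $\N_p(m)=1$, the factor $(x-1)$ also divides $\m_s(x)$, so every codeword is even-like, i.e.\ the coordinate sum vanishes; combining this with the set of zeros $\{\alpha^0,\alpha^1,\alpha^2\}$ — three consecutive powers — the BCH bound upgrades to $d \ge 4$. This already yields the lower halves of the four asserted inequalities: $d\ge 4$ when $\N_p(m)=1$, and $d\ge 3$ when $\N_p(m)=0$. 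The slightly delicate point will be to argue that in the $q=3$, $\N_p(m)=0$ case the bound $d \ge 4$ (not merely $d\ge 3$) holds; here I would use that over $\gf(3)$ the all-ones-type relations among the zeros $\alpha^0$-free set $\{\alpha^1,\alpha^2\}$ still force any weight-$3$ codeword supported on positions $i<j<k$ to satisfy a Vandermonde-type system whose only solutions over $\gf(3)^*$ would need $\{1,\alpha,\alpha^2\}$-coefficients summing appropriately, and a short direct argument (or invoking that a weight-$3$ word with zeros at $\alpha,\alpha^2$ must also have $\alpha^0$ as a ``virtual'' zero because of the $\gf(3)$ coefficient constraint) rules it out, giving $d\ge 4$.

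For the upper bounds, I would exhibit explicit codewords. A codeword corresponds to a polynomial $c(x)$ of degree $<n$ divisible by $\m_s(x)$, equivalently vanishing at $\alpha^{-1},\alpha^{-2}$ (and at $1$ when $\N_p(m)=1$); its weight is the number of nonzero coefficients. When $q>3$ and $\N_p(m)=0$, one seeks a weight-$3$ multiple, i.e.\ scalars and exponents with $a\alpha^{-e_1}+b\alpha^{-e_2}+c\alpha^{-e_3}=0$ and the same with squared roots; for $q>3$ there is enough room in $\gf(q)^*$ to solve the resulting $2\times 3$ Vandermonde system with nonzero entries, so $d\le 3$, matching the lower bound and giving $d=3$. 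When $q=3$ and $\N_p(m)=0$ I would produce an explicit weight-$4$ word (for instance a suitable combination like $x^{0}-x^{a}-x^{b}+x^{c}$ whose exponents are chosen so that it vanishes at $\alpha^{-1}$ and $\alpha^{-2}$), establishing $d\le 4$ and hence $d=4$. In the two $\N_p(m)=1$ cases I would similarly search for a weight-$4$ (for $q>3$) or weight-$5$ (for $q=3$) multiple to get $d\le 4$ and $d\le 5$ respectively, leaving the stated ranges $3\le d\le 4$ and $4\le d\le 5$; the reason one cannot always close the gap is that the existence of such low-weight words depends on fine arithmetic of $\alpha$ and $m$.

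The main obstacle I anticipate is the exact-value claim $d=4$ for $q=3$, $\N_p(m)=0$: the lower bound argument must genuinely exploit the ternary structure (the generic BCH bound from two consecutive zeros only gives $d\ge 3$), so I would need either a Hartmann--Tzeng-style argument using that $\{\alpha^1,\alpha^2\}$ together with the coefficient field $\gf(3)$ behaves like three consecutive zeros, or a direct case analysis showing no weight-$3$ ternary combination of three distinct powers of $\alpha$ can simultaneously annihilate $\alpha^{\pm1}$ and $\alpha^{\pm2}$. Constructing the matching weight-$4$ codeword to certify $d\le 4$ in this case is the complementary hard part and may require invoking the explicit structure of the planar function or a counting argument guaranteeing such a word exists for all admissible $m$.
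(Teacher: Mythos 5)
Your route is genuinely different from the paper's. The paper does not prove the distance claims from first principles: it observes that the generator polynomial of $\C_s$ is the reciprocal of that of the dual code in Theorem 11 of \cite{CDY05} (resp.\ of its even-like subcode when $\N_p(m)=1$), so the two codes share a weight distribution, and it imports the known minimum weights from there. Your lower-bound half can be made rigorous along the lines you sketch: the BCH bound from the consecutive zeros $\alpha^{-2},\alpha^{-1}$ (together with $\alpha^{0}$ when $\N_p(m)=1$) gives $d\ge 3$, resp.\ $d\ge 4$, and for $q=3$ a short characteristic-$3$ computation really does exclude weight-$3$ words: if $au+bv+cw=0=au^{2}+bv^{2}+cw^{2}$ with $a,b,c\in\{\pm 1\}$ and $u,v,w$ distinct and nonzero, then the sign pattern $(+,+,+)$ forces $u,v,w$ to be roots of $t^{3}-uvw=(t-(uvw)^{1/3})^{3}$, hence equal, while $(+,+,-)$ gives $w=u+v$ and then $u^{2}+v^{2}-(u+v)^{2}=uv=0$; both are absurd. (Your BCH argument in fact yields $d\ge 4$ whenever $\N_p(m)=1$, which is stronger than the stated $3\le d$ for $q>3$.)

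The genuine gap is the entire upper-bound half: no low-weight codeword is ever produced, and the one concrete candidate you name for $q=3$, a word of the shape $x^{0}-x^{a}-x^{b}+x^{c}$, is provably impossible. Its sign pattern is $(+,+,-,-)$, and the two conditions force $\{1,\alpha^{-c}\}$ and $\{\alpha^{-a},\alpha^{-b}\}$ to have equal first and second elementary symmetric functions, hence to coincide; so any weight-$4$ ternary codeword must have pattern $(+,+,+,+)$ or $(+,+,+,-)$, and exhibiting one (and a weight-$5$ word when $q=3$, $\N_p(m)=1$) needs a real existence argument over $\gf(3^{m})$. Note also that for $q=3$, $\N_p(m)=1$ the sphere-packing bound only gives $d\le 6$, so $d\le 5$ cannot be obtained by dimension counting. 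For $q>3$ the upper bounds are easy but you should say how: take distinct elements $u,v,w$ (resp.\ four distinct elements when $\N_p(m)=1$) of the subfield $\gf(q)^{*}$, which exist since $q-1\ge 4$, and use the Vandermonde cofactors as coefficients; these lie in $\gf(q)^{*}$ and give weight-$3$ (resp.\ weight-$4$) codewords. That trick is unavailable for $q=3$ because $|\gf(3)^{*}|=2$, which is exactly where your proposal stalls; as written, the exact values $d=4$ ($q=3$, $\N_p(m)=0$) and the upper bounds in the remaining cases are not established.
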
 

\begin{proof} 
The dimension of $\C_{s}$ follows from Lemma \ref{lem-2Gold} and the definition of the 
code $\C_s$.  We need to prove the conclusion on the minimum weight $d$ of $\C_{s}$. 
The code $\C_s$ of this theorem has the same weight distribution as the dual code in Theorem 
11 in \cite{CDY05} when $m \equiv 0 \pmod{p}$ and has the same weight distribution as the 
even-like subcode of  the dual code in Theorem 11 in \cite{CDY05} when $m \not\equiv 0 \pmod{p}$.  
This is because the generator polynomial of the code $\C_s$ of this theorem and that of the  
dual code in Theorem 11 in \cite{CDY05} are reciprocals of each other when $m \equiv 0 \pmod{p}$, 
and the generator polynomial of the code $\C_s$ of this theorem and that of the even-like subcode 
of the 
dual code in Theorem 11 in \cite{CDY05} are reciprocals of each other when $m \not\equiv 0 \pmod{p}$.  
The conclusions on $d$ then follow from those on the minimum weight of the dual code in 
Theorem 11 in \cite{CDY05}.  
\end{proof} 

\begin{remark} 
Both this paper and \cite{CDY05} employ planar functions in constructing codes. The two 
approaches are different in general. The approach of this paper always produces cyclic 
codes. The approach of \cite{CDY05} gives cyclic codes only when the planar function 
$f(x)$ is a power function. Another difference is that the two approaches produce codes 
with different dimensions. However, for this quadratic planar function $f(x)=x^2$, the 
codes obtained with the two approaches are closely related. The relation between the two 
codes is made clear in the proof of  Theorem \ref{thm-2square}.  
\end{remark}

\begin{example} 
Let $(m, q)=(2,3)$ and $\alpha$ be a generator of $\gf(r)^*$ with $\alpha^2 +2\alpha+2=0$. Then 
$\C_s$ is a $[8, 3, 5]$ ternary code with generator polynomial  
$$ 
\m_s(x)=x^5 + 2x^3 + x^2 + x + 1.    
$$
This cyclic code is an optimal linear code.
\end{example} 

\begin{example} 
Let $(m, q)=(3,3)$ and $\alpha$ be a generator of $\gf(r)^*$ with $\alpha^3 +2\alpha+1=0$. Then 
$\C_s$ is a $[26, 20, 4]$ ternary code with generator polynomial  
$$ 
\m_s(x)=x^6 + x^5 + x^3 + 2x + 2.    
$$
This cyclic code is an optimal linear code.
\end{example} 

\begin{example} 
Let $(m, q)=(4,3)$ and $\alpha$ be a generator of $\gf(r)^*$ with $\alpha^4 +2\alpha^3+2=0$. Then 
$\C_s$ is a $[80, 71, 5]$ ternary code with generator polynomial  
$$ 
\m_s(x)=  x^9 + 2x^8 + x^7 + 2x^6 + x^4 + x^2 + 1.    
$$
This cyclic code is an optimal linear code. 
\end{example} 

\begin{example} 
Let $(m, q)=(2,5)$ and $\alpha$ be a generator of $\gf(r)^*$ with $\alpha^2 +4\alpha+2=0$. Then 
$\C_s$ is a $[24, 19, 4]$ cyclic code over $\gf(5)$ with generator polynomial  
$$ 
\m_s(x)=x^5 + 3x^4 + 2x^3 + 3x^2 + 3x + 3.    
$$
This cyclic code is an optimal linear code. 
\end{example} 

\begin{example} 
Let $(m, q)=(3,5)$ and $\alpha$ be a generator of $\gf(r)^*$ with $\alpha^3 +3\alpha+3=0$. Then 
$\C_s$ is a $[124, 117, 4]$ cyclic code over $\gf(5)$ with generator polynomial  
$$ 
\m_s(x)=x^7 + 4x^6 + 4x^4 + 3x^2 + 3.    
$$
This cyclic code is an optimal linear code. 
\end{example}

\subsection{Cyclic codes from the Dembowski-Ostrom planar function}\label{sec-DOGold} 

The Dembowski-Ostrom planar function is given by $f(x)=x^{q^\kappa+1}$, where 
$m/\gcd(m,\kappa)$ and $q$ are odd. In this subsection we study the code $\C_s$ 
defined by this planar function. To this end, we need to prove the following lemma.  

\begin{lemma}\label{lem-DOGold} 
Let $m$ be odd. 
Let $s^{\infty}$ be the sequence of (\ref{eqn-sequence}), where $f(x)$ is the Dembowski-Ostrom planar function. 
Then the linear span $\ls_s$ of $s^{\infty}$ is equal to $2m+\N_p(m)$ and the minimal polynomial $\m_s$ 
of  $s^{\infty}$ are given by 
\begin{equation}\label{eqn-DOGold}
\m_s(x)= (x-1)^{\N_p(m)} m_{\alpha^{-1}}(x) m_{\alpha^{-(p^\kappa+1)}}(x)
\end{equation} 
where $m_{\alpha^{-j}}(x)$ is the minimal polynomial of $\alpha^{-j}$ over $\gf(q)$, 
$\N_p(i) =0$ if $i \equiv 0 \pmod{p}$ and $\N_p(i) =1$ otherwise. 
\end{lemma}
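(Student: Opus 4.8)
Following the proof of Lemma~\ref{lem-2Gold}, the plan is to expand the defining sequence into a short trace sum, identify the (at most) three $q$-cyclotomic cosets that occur in its spectrum, and then read off $\ls_s$ and $\m_s(x)$ from Lemma~\ref{lem-ls2}. Reducing $\kappa$ modulo $m$ if necessary, we may assume $1\le\kappa\le m-1$.

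First I would expand $(\alpha^t+1)^{q^\kappa+1}$. Since $q^\kappa$ is a power of $p$, we have $(\alpha^t+1)^{q^\kappa+1}=\bigl((\alpha^t)^{q^\kappa}+1\bigr)(\alpha^t+1)=(\alpha^t)^{q^\kappa+1}+(\alpha^t)^{q^\kappa}+\alpha^t+1$, and because the trace from $\gf(r)$ to $\gf(q)$ satisfies $\tr(y^{q^\kappa})=\tr(y)$, applying $\tr$ gives
\begin{equation*}
s_t=\tr\bigl((\alpha^{q^\kappa+1})^t\bigr)+2\,\tr(\alpha^t)+\tr(1)\qquad(t\ge 0).
\end{equation*}
Since $q$ is odd, the scalar $2$ is nonzero in $\gf(q)$, so the middle term genuinely contributes; and $\tr(1)=m\cdot 1_{\gf(q)}$ is nonzero precisely when $p\nmid m$, i.e.\ when $\N_p(m)=1$, which is what yields the factor $(x-1)^{\N_p(m)}$.

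Next I would write each trace term in the form of Lemma~\ref{lem-ls2}: $\tr(\alpha^t)=\sum_{i\in C_1}\alpha^{it}$ and $\tr\bigl((\alpha^{q^\kappa+1})^t\bigr)=\sum_{i\in C_{q^\kappa+1}}\alpha^{it}$, so that in the expansion $s_t=\sum_i c_i\alpha^{it}$ the index set $I$ is the union of $\{0\}$ (present iff $\N_p(m)=1$), $C_1$, and $C_{q^\kappa+1}$. To conclude, I need two facts about these cosets. (i) $C_1\cap C_{q^\kappa+1}=\emptyset$: if they coincided, then $q^\kappa+1\equiv q^j\pmod n$ for some $0\le j\le m-1$; but for $1\le\kappa\le m-1$ both $q^\kappa+1$ and $q^j$ lie strictly between $0$ and $n=q^m-1$, forcing the integer identity $q^\kappa+1=q^j$, which is impossible. (ii) $|C_1|=|C_{q^\kappa+1}|=m$: the first is immediate, and for the second I would use that $m$ is odd (hence $m/\gcd(m,\kappa)$ is odd) to compute $\gcd(q^\kappa+1,q^m-1)=2$, via $q^\kappa+1\mid q^{2\kappa}-1$, $\gcd(q^{2\kappa}-1,q^m-1)=q^{\gcd(2\kappa,m)}-1=q^{\gcd(m,\kappa)}-1$ (here $\gcd(2\kappa,m)=\gcd(m,\kappa)$ because $m/\gcd(m,\kappa)$ is odd), and $\gcd(q^\kappa+1,q^{\gcd(m,\kappa)}-1)=\gcd(2,q^{\gcd(m,\kappa)}-1)=2$ since $q$ is odd. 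Then $\ell_{q^\kappa+1}$ is the multiplicative order of $q$ modulo $(q^m-1)/2$, and since no proper divisor $\ell$ of $m$ satisfies $(q^m-1)/2\mid q^\ell-1$ for size reasons, $\ell_{q^\kappa+1}=m$, so $\deg m_{\alpha^{-(q^\kappa+1)}}(x)=m$.

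With (i) and (ii) in hand, Lemma~\ref{lem-ls2} gives $\ls_s=|I|=2m+\N_p(m)$ and $\m_s(x)=\prod_{i\in I}(1-\alpha^i x)=(x-1)^{\N_p(m)}\,m_{\alpha^{-1}}(x)\,m_{\alpha^{-(q^\kappa+1)}}(x)$, identifying $\prod_{i\in C_j}(1-\alpha^i x)$ with the minimal polynomial $m_{\alpha^{-j}}(x)$. I expect the gcd/coset-size argument in step (ii) to be the only real obstacle, and it is exactly the place where the Dembowski--Ostrom hypothesis (here, $m$ odd, equivalently $m/\gcd(m,\kappa)$ odd) is used: without it $\gcd(q^\kappa+1,q^m-1)$ — and hence $|C_{q^\kappa+1}|$ — could be larger, and the displayed formulas for $\ls_s$ and $\m_s(x)$ would fail. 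Everything else runs in parallel with Lemma~\ref{lem-2Gold}.
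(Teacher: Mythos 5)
Your proposal is correct and follows the same overall route as the paper: expand $(\alpha^t+1)^{q^\kappa+1}$ via the Frobenius, use $\tr(y^{q^\kappa})=\tr(y)$ to get $s_t=\tr\bigl((\alpha^t)^{q^\kappa+1}\bigr)+2\tr(\alpha^t)+\tr(1)$, check that $C_1$ and $C_{q^\kappa+1}$ are disjoint cosets of size $m$, and read off $\ls_s$ and $\m_s(x)$ from Lemma~\ref{lem-ls2}. The one place where you genuinely diverge is the gcd computation, and there your version is the more careful one. The paper asserts that $\gcd(q^\kappa-1,q^\kappa+1)=1$ because $q$ is odd and concludes $\gcd(q^\kappa+1,q^m-1)=1$; in fact for odd $q$ both $q^\kappa\pm1$ are even, so that gcd is $2$, and $\gcd(q^\kappa+1,q^m-1)=2$ exactly as you compute. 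Since $\gcd(j,n)\ne 1$ no longer immediately gives $|C_j|=m$, you correctly add the missing step: $\ell_{q^\kappa+1}$ is the order of $q$ modulo $(q^m-1)/2$, which must be $m$ because $q^\ell-1<(q^m-1)/2$ for all $\ell<m$. Your disjointness argument (parity of $q^\kappa+1$ versus $q^j$) also substitutes cleanly for the paper's ``clearly''; alternatively one can note that all elements of a single $q$-cyclotomic coset share the same gcd with $n$, and $\gcd(1,n)=1\ne 2$. The only cosmetic remark is that the lemma's displayed formula writes $p^\kappa+1$ where the construction (and your proof) uses $q^\kappa+1$; this is an inconsistency in the paper's statement, not in your argument.
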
 

\begin{proof} 
It is easily seen that 
\begin{eqnarray}\label{eqn-DOGold2}
s_t &=& \tr((\alpha^{t}+1)^{q^\kappa+1}) \nonumber \\ 
&=& \tr(1)+  2\sum_{j=0}^{m-1} (\alpha^t)^{q^j} + \sum_{j=0}^{m-1} (\alpha^t)^{(q^\kappa+1)q^j}. 
\end{eqnarray}

Since $m/\gcd(m,\kappa)$ is odd,  $\gcd(2\kappa, m)=\gcd(\kappa, m)$. 
Because $q$ is odd, $\gcd(q^\kappa-1, q^\kappa +1)=1$. 
It then follows that 
\begin{eqnarray*} 
\gcd(q^\kappa+1, q^m-1) 
&=&\frac{\gcd(q^{2\kappa}-1, q^m-1)}{\gcd(q^{\kappa}-1, q^m-1)} \\
&=& \frac{q^{\gcd(2\kappa, m)}-1}{q^{\gcd(\kappa, m)}-1} \\
&=& 1. 
\end{eqnarray*} 
Therefore, the size of the q-cyclotomic coset contating $q^\kappa+1$ is $m$. 
Clearly, $C_1 \cap C_{q^\kappa+1}=\emptyset$. 
The desired conclusions on the linear span and the minimal polynomial $\m_s(x)$ then follow from Lemma \ref{lem-ls2} 
and (\ref{eqn-DOGold2}). 
\end{proof}

The following theorem provides information on the code $\C_{s}$.    

\begin{theorem}\label{thm-DO2Gold} 
The code $\C_{s}$ defined by the sequence of Lemma \ref{lem-DOGold} has parameters 
$[n, n-2m-\N_p(m), d]$ and generator polynomial $\m_s(x)$ of (\ref{eqn-DOGold}), where 
\begin{eqnarray*} 
\left\{ \begin{array}{ll}
d =4             & \mbox{ if $q=3$ and $m \equiv 0 \pmod{p}$,}  \\
4 \le d \le 5  & \mbox{ if $q=3$ and $m \not\equiv 0 \pmod{p}$,} \\ 
d=3              & \mbox{ if $q >3$ and $m \equiv 0 \pmod{p}$,} \\ 
3 \le d  \le 4 & \mbox{ if $q >3$ and $m \not\equiv 0 \pmod{p}$.} 
\end{array} 
\right. 
\end{eqnarray*}   
\end{theorem}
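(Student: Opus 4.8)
The plan is to follow the proof of Theorem~\ref{thm-2square} almost verbatim, exploiting the fact that the generator polynomial here, $(x-1)^{\N_p(m)}m_{\alpha^{-1}}(x)m_{\alpha^{-(q^\kappa+1)}}(x)$, has exactly the same shape as there with the quadratic exponent $2$ replaced by $q^\kappa+1$. The dimension is immediate from Lemma~\ref{lem-DOGold}: the factors $(x-1)$, $m_{\alpha^{-1}}(x)$, $m_{\alpha^{-(q^\kappa+1)}}(x)$ are pairwise coprime (since $C_1\cap C_{q^\kappa+1}=\emptyset$), the last two having degree $m$, so $\deg\m_s(x)=2m+\N_p(m)$ and $\dim\C_s=n-2m-\N_p(m)$. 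Everything else is the determination of the minimum weight $d$.

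First I would reduce to the reciprocal code: $\C_s$ and the cyclic code generated by the reciprocal of $\m_s(x)$ have the same weight distribution, and the reciprocal of $m_{\alpha^{-1}}(x)m_{\alpha^{-(q^\kappa+1)}}(x)$ is $m_{\alpha}(x)m_{\alpha^{q^\kappa+1}}(x)$, so the relevant code is $\widetilde{\C}$ with defining zero set $\{\alpha^j:j\in C_1\cup C_{q^\kappa+1}\}$, together with its even-like subcode. The weight distribution of $\widetilde{\C}$ is governed entirely by the distribution of $|\{x\in\gf(r):\tr(ax+bx^{q^\kappa+1})=0\}|$ as $(a,b)$ ranges over $\gf(r)^2$; since $x^{q^\kappa+1}$ is planar, the quadratic form $\tr(bx^{q^\kappa+1})$ has full rank $m$ for every $b\ne0$, exactly as $\tr(bx^2)$ does, so this distribution, and hence the weight distribution of $\widetilde{\C}$, coincides with the one occurring in Theorem~\ref{thm-2square}, i.e.\ the one worked out via \cite{CDY05}. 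When $\N_p(m)=0$ (that is, $m\equiv0\pmod p$), $\C_s$ is simply the reciprocal of $\widetilde{\C}$, giving $d=4$ for $q=3$ and $d=3$ for $q>3$. When $\N_p(m)=1$, the factor $(x-1)$ forces every word of $\C_s$ to be even-like, so $\C_s$ is the even-like subcode of the code with generator polynomial $\m_s(x)/(x-1)$, whose weight distribution is that of $\widetilde{\C}$; reading off the even-like part then yields $4\le d\le5$ for $q=3$ and $3\le d\le4$ for $q>3$, the slack reflecting whether a minimum-weight word of $\widetilde{\C}$ can be chosen even-like.

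As a self-contained lower bound not relying on \cite{CDY05}, I would also record that $\widetilde{\C}$ has no word of weight $\le2$: a word $ax^{j_1}+bx^{j_2}$ with $a,b\ne0$ and $j_1\ne j_2$ would force $a\alpha^{j_1}+b\alpha^{j_2}=0$ and $a\alpha^{(q^\kappa+1)j_1}+b\alpha^{(q^\kappa+1)j_2}=0$, hence $\alpha^{q^\kappa j_1}=\alpha^{q^\kappa j_2}$ and then $j_1\equiv j_2\pmod n$ since $\gcd(q^\kappa,n)=1$, a contradiction; this already gives $d\ge3$ for all $q$. The part I expect to be the genuine obstacle is everything beyond this crude bound, namely the exact values $d=4,3$ and the tight ranges, since these require the precise weight enumerator of $\widetilde{\C}$, equivalently the full planar-function / quadratic-form count of $|\{x:\tr(ax+bx^{q^\kappa+1})=0\}|$; the efficient route is to invoke \cite{CDY05} and the already-proved Theorem~\ref{thm-2square} rather than to reproduce that computation.
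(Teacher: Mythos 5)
Your proposal is correct and follows essentially the same route as the paper: the dimension is read off from Lemma \ref{lem-DOGold}, and the minimum distance is obtained by identifying $\C_s$ (via reciprocal generator polynomials, and via the even-like subcode when $\N_p(m)=1$) with the dual code of the planar-function construction in \cite{CDY05}, whose weight distribution supplies the stated values and ranges of $d$. Your added observations --- that planarity forces the quadratic form $\tr(bx^{q^\kappa+1})$ to have full rank so the relevant weight distribution matches the $x^2$ case, and the elementary check that $\widetilde{\C}$ has no word of weight at most $2$ --- are correct embellishments but do not change the argument, which in both versions ultimately rests on the weight enumerator computed in \cite{CDY05}.
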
 

\begin{proof} 
The dimension of $\C_{s}$ follows from Lemma \ref{lem-DOGold} and the definition of the 
code $\C_s$.  We need to prove the conclusion on the minimum distance $d$ of $\C_{s}$. 
The code $\C_s$ of this theorem has the same weight distribution as the dual code in Theorem 
15 in \cite{CDY05} when $m \equiv 0 \pmod{p}$ and has the same weight distribution as the 
even-like subcode of the dual code in Theorem 15 in \cite{CDY05} when $m \not\equiv 0 \pmod{p}$.  
This is because the generator polynomial of the code $\C_s$ of this theorem and that of the  
dual code in Theorem 15 in \cite{CDY05} are reciprocals of each other when $m \equiv 0 \pmod{p}$, 
and the generator polynomial of the code $\C_s$ of this theorem and that of the even-like subcode 
of the 
dual code in Theorem 15 in \cite{CDY05} are reciprocals of each other when $m \not\equiv 0 \pmod{p}$.  
The conclusions on $d$ then follow from those on the minimum weight of the dual code in 
Theorem 15 in \cite{CDY05}.  
\end{proof} 

\begin{remark} 
Although the approach of this paper and that of \cite{CDY05} to the construction of linear 
codes  with planar functions are different, for the Dembowski-Ostrom planar function, the 
codes obtained with the two approaches are closely related. The relation between the two 
codes is made clear in the proof of  Theorem \ref{thm-DO2Gold}.  
\end{remark}

\begin{example} 
Let $(m,\kappa, q)=(3,1, 3)$ and $\alpha$ be a generator of $\gf(r)^*$ with $\alpha^3 + 2\alpha + 1=0$. Then 
$\C_s$ is a $[26, 20, 4]$ ternary code with generator polynomial  
$$ 
\m_s(x)= x^6 + 2x^5 + 2x^4+ x^3 + x^2 + 2x+ 1.    
$$
This cyclic code is an optimal linear code. 
\end{example}

\begin{example} 
Let $(m,\kappa, q)=(4,4, 3)$ and $\alpha$ be a generator of $\gf(r)^*$ with $\alpha^4 + 2\alpha^3 + 2=0$. Then 
$\C_s$ is a $[80, 71, 5]$ ternary code with generator polynomial  
$$ 
\m_s(x)= x^9 + 2x^8 + x^7 + 2x^6 + x^4 + x^2 + 1.    
$$
This cyclic code is an optimal linear code. 
\end{example}

\subsection{Cyclic codes from the planar functions $x^{10}-ux^6-u^2x^2$ over $\gf(3^m)$}\label{sec-DY} 

Throughout this subsection, let $q=3$ and let $m$ be odd. A family of planar functions 
$f(x)=x^{10}-ux^6-u^2x^2$ on $\gf(r)$ was discovered in \cite{CM,DY06}, where $u \in \gf(r)$. 
In this subsection we study the code $\C_s$ defined by these planar functions. To this end, we need to 
prove the following lemma.  

\begin{lemma}\label{lem-DY} 
Let $s^{\infty}$ be the sequence of (\ref{eqn-sequence}), where $f(x)=x^{10}-ux^6-u^2x^2$. 
Then the linear span $\ls_s$ of $s^{\infty}$ is given by 
\begin{eqnarray*}
\ls_s=\left\{ \begin{array}{l}
2m+\delta_u \mbox{ if } u^6+u=0 \\
3m+\delta_u \mbox{ otherwise,}  
\end{array}
\right. 
\end{eqnarray*}
and the minimal polynomial $\m_s(x)$ 
of  $s^{\infty}$ is given by 
\begin{eqnarray*}
\m_s(x)= \left\{ \begin{array}{l} 
(x-1)^{\delta_u} m_{\alpha^{-1}}(x) m_{\alpha^{-10}}(x)  \mbox{ if } u^6+u=0 \\
(x-1)^{\delta_u} m_{\alpha^{-1}}(x) m_{\alpha^{-2}}(x) m_{\alpha^{-10}}(x)  \mbox{ otherwise,}  
\end{array}
\right.  
\end{eqnarray*} 
where $m_{\alpha^{-j}}(x)$ is the minimal polynomial of $\alpha^{-j}$ over $\gf(q)$, 
$\delta_u=0$ if $\tr(u^2+u-1)=0$ and $\delta_u =1$ otherwise. 
\end{lemma}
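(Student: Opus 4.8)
The plan is to expand $s_t = \tr(f(\alpha^t+1))$ directly, where $f(x)=x^{10}-ux^6-u^2x^2$ and $q=3$, and then read off the nonzero coefficients in the expansion using the exponent set viewpoint of Lemma \ref{lem-ls2}. First I would compute $(\alpha^t+1)^{10}$, $(\alpha^t+1)^6$ and $(\alpha^t+1)^2$ via the binomial theorem over $\gf(3)$; since $10 = (101)_3$, $6 = (20)_3$ and $2 = (2)_3$ base $3$, Lucas' theorem keeps the expansions short. Writing $y=\alpha^t$, one gets $(y+1)^2 = y^2+2y+1$, $(y+1)^6 = (y^3+1)^2 = y^6+2y^3+1$, and $(y+1)^{10} = (y+1)^9(y+1) = (y^9+1)(y+1) = y^{10}+y^9+y+1$. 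Hence
\begin{eqnarray*}
f(y+1) = y^{10}+y^9+y+1 - u(y^6+2y^3+1) - u^2(y^2+2y+1).
\end{eqnarray*}
So $f(\alpha^t+1) = \alpha^{10t}+\alpha^{9t}-u\alpha^{6t}-u\alpha^{3t}+\alpha^t-u^2\alpha^{2t}-2u\alpha^t \cdot 0 \ldots$ — more carefully collecting the $\alpha^t$ terms, the coefficient of $y$ is $1 - 2u^2 = 1+u^2$, the constant term is $1-u-u^2$, and the remaining monomials are $y^{10}, y^9, -u y^6, -u y^3, -u^2 y^2$.

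Next I would apply the trace. Using $\tr(x^3)=\tr(x)$ and linearity, each monomial $c\,\alpha^{et}$ contributes $\tr(c\,\alpha^{et})=\sum_{j=0}^{m-1} c^{q^j}\alpha^{eq^jt}$, and since the $q$-cyclotomic coset of $e$ has size $m$ for $e\in\{1,2,10\}$ when $\gcd$ conditions hold (this needs the standard computation $\gcd(10,3^m-1)$; note $3^9\equiv 1$ inside so $\alpha^{9t}$ folds onto $\alpha^t$, and $\alpha^{3t},\alpha^{6t}$ fold onto $\alpha^t,\alpha^{2t}$ respectively after raising $u$ to appropriate powers), the whole sum collapses onto the cyclotomic cosets $C_0$, $C_1$, $C_2$, $C_{10}$. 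The contribution to $C_0$ is $\tr(1-u-u^2) = \tr(u^2+u-1)$ up to sign, which is nonzero precisely when $\delta_u=1$; this explains the $(x-1)^{\delta_u}$ factor. The contribution to $C_{10}$ comes from the $y^{10}$ term (coefficient $1$, always nonzero since $u\mapsto$ nothing cancels it), giving the factor $m_{\alpha^{-10}}(x)$. The contribution to $C_1$ bundles together $y$, $y^9 \to y$, and $-uy^3 \to$ (after a Frobenius twist) a term on $C_1$; the contribution to $C_2$ bundles $-u^2 y^2$ and $-uy^6 \to$ a $C_2$-term. The condition $u^6+u=0$ is exactly the condition that the net coefficient on $C_2$ vanishes: tracking through, the $C_2$-coefficient is a fixed $\gf(3)$-linear-algebraic expression in $u$ and $u^3$ (coming from $-u^2$ and from $-u$ raised to the Frobenius power that sends $6t\mapsto 2t$), which simplifies to a scalar multiple of $u^6+u$. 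When $u^6+u=0$ the $C_2$ monomials cancel and the index set is $I=\{0^{\delta_u}\}\cup C_1\cup C_{10}$ of size $2m+\delta_u$; otherwise $I$ also contains $C_2$ and has size $3m+\delta_u$. Lemma \ref{lem-ls2} then yields both the linear span and $\m_s(x)$.

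The main obstacle I anticipate is the bookkeeping in identifying exactly which Frobenius power $q^j$ carries the exponent $3$ (resp.\ $6$) onto $1$ (resp.\ $2$) modulo $3^m-1$, and consequently which power of $u$ appears in the residual $C_1$- and $C_2$-coefficients. Concretely: $3\cdot 3^{m-1}\equiv 1$, so $\alpha^{3t}$ appearing with coefficient $-u$ is, after applying Frobenius $m-1$ times to bring it to $\alpha^t$, the same as $\alpha^t$ appearing with coefficient $(-u)^{3^{m-1}}$; similarly $6t$ maps to $2t$ and $-u$ becomes $(-u)^{3^{m-1}} = -u^{3^{m-1}}$. Since $m$ is odd one checks $3^{m-1}\equiv$ something convenient mod the relevant order of $u$, and the net $C_2$-coefficient becomes proportional to $u^2 + u^{3^{m-1}}$; raising to the third power (harmless for vanishing) gives $u^6 + u^{3^m} = u^6 + u$ because $u\in\gf(3^m)$. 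That last identity $u^{3^m}=u$ is the key simplification that turns the messy Frobenius-twisted expression into the clean criterion $u^6+u=0$. Once that is pinned down, the nonvanishing of the $C_1$-coefficient (which should be a nonzero constant plus a $u$-term that does not identically cancel, ensuring $\ell_1 = m$ genuinely contributes) and the coset-size computations are routine, and the two cases of the lemma follow immediately from Lemma \ref{lem-ls2}.
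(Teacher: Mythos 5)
Your overall route coincides with the paper's: expand $f(\alpha^t+1)$ using the base-$3$ digit structure of the exponents, fold $\alpha^{9t}$, $\alpha^{3t}$, $\alpha^{6t}$ back onto the coset representatives $1$ and $2$ via Frobenius (twisting $u$ into $u^{3^{m-1}}$), observe that the resulting $C_2$-coefficient is $-(u^2+u^{3^{m-1}})$ whose cube is $-(u^6+u)$, identify $\delta_u$ with the constant term $\tr(1-u-u^2)=-\tr(u^2+u-1)$, and finish with Lemma~\ref{lem-ls2} together with the routine facts that $C_1$, $C_2$, $C_{10}$ are pairwise disjoint of size $m$. All of this matches the paper's proof step for step.

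There is, however, one genuine gap. The lemma claims that $m_{\alpha^{-1}}(x)$ divides $\m_s(x)$ and that $C_1$ contributes $m$ to $\ls_s$ for \emph{every} $u\in\gf(3^m)$; this requires the folded coefficient of $\alpha^t$ to be nonzero for every $u$. You dismiss this as routine on the grounds that the coefficient is ``a nonzero constant plus a $u$-term that does not identically cancel,'' but not being identically zero as a polynomial in $u$ is the wrong criterion: a nonconstant polynomial can still vanish at particular field elements, and for any such $u$ the factor $m_{\alpha^{-1}}(x)$ and the corresponding summand $m$ in $\ls_s$ would disappear, falsifying the stated formulas. This is exactly the step on which the paper spends the bulk of its proof: it shows $v=u^2+u-1\ne 0$ for all $u\in\gf(3^m)$ by deriving $(u-1)^3+(u-1)=0$ from $u^2+u-1=0$ and then using that $m$ is odd, so $-1$ is a non-square and $y^3+y=y(y^2+1)$ has only the root $y=0$, forcing $u=1$, which contradicts $u^2+u-1=0$. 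Note that the standing hypothesis ``$m$ odd'' of this subsection is used precisely here (and in $u^{3^{m-1}}+u^2=0\Leftrightarrow u^6+u=0$); your proposal never identifies where it enters. Two smaller slips: the coefficient of $y^3$ in $-u(y+1)^6=-u(y^6+2y^3+1)$ is $-2u=+u$, not $-u$; and you never actually write down the final $C_1$-coefficient (after folding it is $-1+u^2+u^{3^{m-1}}$, with cube $u^6+u-1$), which you would need in hand before any nonvanishing argument could even be attempted.
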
 

\begin{proof} 
By definition, we have 
$$ 
f(x+1)=x^{10}+x^9-ux^6-u^2x^2 + (1+u+u^2)x+(1-u-u^2). 
$$
It then follows that 
\begin{eqnarray*}
\lefteqn{\tr(f(x+1))=} \\
&\tr\left(x^{10}-(u^{3^{m-1}}+u^2)x^2 +(u^2+u-1)x\right)-\tr(u^2+u-1). 
\end{eqnarray*}
By definition, 
\begin{eqnarray}\label{eqn-DY2Gold2}
s_t = \tr\left((\alpha^t)^{10}-(u^{3^{m-1}}+u^2)(\alpha^t)^2 +v\alpha^t\right)-\tr(v),  
\end{eqnarray}
where $v=u^2+u-1$. 

It can be easily proved that 
$$
\ell_1=\ell_{n-1}=\ell_{2}=\ell_{n-2}=\ell_{10}=\ell_{n-10}=m 
$$ 
and that the three $q$-cyclotomic cosets $C_1, C_{2}$ and $C_{10}$ are pairwise disjoint. 

We now prove that $v =u^2+u-1\ne 0$ for all $u \in \gf(r)$. Suppose 
\begin{eqnarray}\label{eqn-351}
u^2+u-1=0. 
\end{eqnarray} 
Multiplying both sides of (\ref{eqn-351}) by $u$ yields 
\begin{eqnarray}\label{eqn-352}
u^3+u^2-u=0. 
\end{eqnarray} 
Combining (\ref{eqn-351}) and (\ref{eqn-352}) gives 
\begin{eqnarray}\label{eqn-353}
u^3+u+1=(u-1)^3 +(u-1)=0. 
\end{eqnarray}  
Since $m$ is odd, $-1$ is not a quadratic residue in $\gf(r)$. The only solution of $y^3+y=0$ 
is $y=0$. Hence the only solution of (\ref{eqn-353}) is $u=1$. However, $u=1$ is not a solution 
of (\ref{eqn-351}). Hence $v \ne 0$ for all $u$. 

Finally $u^{3^{m-1}}+u^2=0$ if and only if $u^6+u=0$.  
The desired conclusions on the linear span and the minimal polynomial $\m_s(x)$ then follow from Lemma \ref{lem-ls2},  
Equation (\ref{eqn-DY2Gold2}) and the conclusions on the cyclotomic cosets and their lengths. 
\end{proof}

The following theorem provides information on the code $\C_{s}$.    

\begin{theorem}\label{thm-DYY} 
The code $\C_{s}$ defined by the sequence of Lemma \ref{lem-DY} has parameters 
$[n, n-\ls_s, d]$ and generator polynomial $\m_s(x)$, where 
$\ls_s$ and $\m_s(x)$ are given in  Lemma \ref{lem-DY},  and 
\begin{eqnarray*} 
\left\{ \begin{array}{l}
5 \le d \le 8 \mbox{ if } u^6+u \ne 0 \mbox{ and } \delta_u=1; \\
4 \le d \le 6 \mbox{ if } u^6+u \ne 0 \mbox{ and } \delta_u=0; \\
3 \le d \le 6 \mbox{ if } u^6+u = 0 \mbox{ and } \delta_u=1; \\
3 \le d \le 4 \mbox{ if } u^6+u = 0 \mbox{ and } \delta_u=0. 
\end{array} 
\right. 
\end{eqnarray*} 
\end{theorem}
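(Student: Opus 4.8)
The plan is to settle the dimension and the generator polynomial first, and then bound $d$ from below and from above by separate arguments. By the definition of $\C_s$ together with Lemma~\ref{lem-DY}, the generator polynomial of $\C_s$ is $\m_s(x)$, equal to $(x-1)^{\delta_u}$ times the pairwise coprime minimal polynomials $m_{\alpha^{-1}}(x)$, $m_{\alpha^{-10}}(x)$ (and $m_{\alpha^{-2}}(x)$ when $u^6+u\ne 0$); the needed coprimality is precisely the pairwise disjointness of the $q$-cyclotomic cosets $C_0,C_1,C_2,C_{10}$, which is already established inside the proof of Lemma~\ref{lem-DY}. Hence $\m_s(x)\mid x^n-1$ and $\dim\C_s=n-\deg\m_s(x)=n-\ls_s$ with $\ls_s$ as in Lemma~\ref{lem-DY}; this part is routine.

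For the lower bounds on $d$ I would pass to the reciprocal code $\widetilde{\C_s}$ generated by the reciprocal polynomial of $\m_s(x)$, which has the same weight distribution as $\C_s$ and whose set of zeros is $\{\alpha^i : i\in Z\}$, where $Z=C_1\cup C_{10}$, enlarged by $C_2$ when $u^6+u\ne 0$ and by $\{0\}$ when $\delta_u=1$. The observation that makes the argument work, and the place where $q=3$ is used, is that $C_1$ then contains $1$, $3$ and $9$, that $C_2$ contains $2$, and that $C_{10}$ contains $10$. Consequently, when $u^6+u\ne 0$ one has $\{1,2,3\}\subseteq Z$, so the BCH bound gives $d\ge 4$, improved to $d\ge 5$ when $\delta_u=1$ via the consecutive block $\{0,1,2,3\}$; when $u^6+u=0$ and $\delta_u=1$ the pair $\{0,1\}\subseteq Z$ gives $d\ge 3$; and when $u^6+u=0$ and $\delta_u=0$, where neither $0$ nor $2$ lies in $Z$, one instead invokes the consecutive pair $\{9,10\}\subseteq C_1\cup C_{10}\subseteq Z$ to get $d\ge 3$ from the BCH bound. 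One checks along the way that $m\ge 3$, so that $1,2,3,9,10$ are distinct residues modulo $n=3^m-1$ (the cases of very small $m$ being degenerate).

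For the upper bounds on $d$ the plan is to exhibit explicit low-weight codewords of $\C_s$, that is, sparse multiples of $\m_s(x)$ of degree less than $n$. Using Frobenius invariance, a vector $\sum_{k=1}^{w}c_kx^{a_k}$ with the exponents $a_k$ distinct modulo $n$ and coefficients $c_k\in\gf(3)^*$ lies in $\C_s$ if and only if $\sum_{k}c_k\alpha^{-ja_k}=0$ for each $j\in\{1,2,10\}$ whose factor occurs in $\m_s(x)$, together with $\sum_k c_k\equiv 0\pmod 3$ when $\delta_u=1$; writing $X_k=\alpha^{-a_k}$ this is a system of at most three power-sum equations $\sum_k c_kX_k^{j}=0$ over $\gf(r)$ plus at most one $\gf(3)$-linear condition on the signs. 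Choosing $w$ equal to $8$, $6$, $6$, $4$ in the four cases of the theorem, taken in order, leaves enough freedom to solve this system with the $X_k$ distinct and the resulting word of weight exactly $w$, which yields the stated upper bounds. The exact minimum distances are not pinned down — whence the ranges — and I expect the genuinely delicate step to be precisely the uniform solvability of this power-sum system for every admissible $m$: one would want either a closed-form sparse multiple of $\m_s(x)$ exploiting the shape $10=3^2+1$ of the exponent together with the planarity of $f$, or a Weil-type count guaranteeing a solution in each case. Everything else — the dimension, the generator polynomial and the BCH lower bounds — is routine bookkeeping with cyclotomic cosets.
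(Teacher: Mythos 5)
Your treatment of the dimension, the generator polynomial, and the lower bounds on $d$ is correct and coincides with the paper's own argument: the paper likewise passes to the code generated by the reciprocal of $\m_s(x)$, notes that for $q=3$ the zeros include $\alpha^i$ for $i\in\{1,2,3\}$ (and $i=0$ when $\delta_u=1$) in the case $u^6+u\ne 0$, and the consecutive pair $\alpha^9,\alpha^{10}$ in the case $u^6+u=0$, and then applies the BCH bound to get $d\ge 5,4,3,3$ in the four cases. Your variant of using $\{0,1\}$ instead of $\{9,10\}$ when $u^6+u=0$ and $\delta_u=1$ is an immaterial difference.

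The genuine gap is in the upper bounds. You propose to exhibit sparse multiples of $\m_s(x)$ of weights $8,6,6,4$ by solving a system of power-sum equations $\sum_k c_kX_k^{j}=0$ for $j$ ranging over $\{1,2,10\}$ (intersected with the actual zero set) plus a sign condition, and you explicitly flag the uniform solvability of that system for all admissible $m$ as the ``genuinely delicate step'' that you have not carried out. As it stands, none of the four upper bounds is proved. The paper gets all four essentially for free from the sphere-packing bound: since $\dim\C_s=n-\ls_s$ with $\ls_s\in\{2m,2m+1,3m,3m+1\}$ known exactly from Lemma~\ref{lem-DY}, assuming $d$ exceeds the stated upper bound forces $\sum_{i=0}^{\lfloor (d-1)/2\rfloor}\binom{n}{i}2^i\le 3^{\ls_s}$, which fails already for $m=3$ (e.g.\ in the case $\ls_s=2m$, ruling out $d\ge 5$ amounts to $2n^2+1\le (n+1)^2$, false for $n>2$) and a fortiori for larger $m$. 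You should replace your explicit-codeword plan by this counting argument; alternatively, if you insist on constructive upper bounds, you must actually produce the sparse multiples or supply the Weil-type estimate you allude to, neither of which is in the proposal.
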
 

\begin{proof} 
The dimension of $\C_{s}$ follows from Lemma \ref{lem-DY} and the definition of the 
code $\C_s$.  We need to prove the conclusions on the minimum distance $d$ of $\C_{s}$. 
It is known that the codes generated by any polynomial $g(x)$ and its reciprocal have the same 
weight distribution if $g(0) \ne 0$. 

When $u^6+u =0$, the reciprocal of $M_s(x)$ has root $\alpha^9$ and $\alpha^{10}$. In this case 
$d \ge 3$ by the BCH bound. The upper bounds on $d$ follow from the sphere-packing bound.   

When $u^6+u \ne 0$, the reciprocal of $M_s(x)$ has roots $\alpha^i$ for $i \in \{1,2,3\}$ 
and for all $i \in \{0,1,2,3\}$ if $\delta_u=1$.  The lower bounds on $d$ then come from 
the BCH bound. The upper bounds on $d$ follow from the sphere-packing bound. 
\end{proof} 

\begin{remark} 
The Coulter-Mathews planar function $x^{10}+x^6-x^2$ was employed in \cite{CDY05} 
to construct linear codes, which are not cyclic. However, the codes of this subsection are cyclic. 
On the other hand, the dimension of the codes of this subsection has two different possible 
values, as more planar functions are employed in this subsection.  This demonstrates that the 
two construction approaches of this paper and \cite{CDY05} are different. 

It is interesting to note that the dimensions of the codes defined by $x^{10}+x^6-x^2$ 
and $x^{10}-x^6-x^2$ are different though both are quadratic planar trinomials.    
\end{remark}

\begin{example} 
Let $(m, q, u)=(3,3,1)$ and $\alpha$ be a generator of $\gf(r)^*$ with $\alpha^3 +2\alpha+1=0$. Then 
$\C_s$ is a $[26, 17, 5]$ ternary code with generator polynomial  
$$ 
\m_s(x)=x^9 + x^8 + 2x^7 + 2x^6 + 2x^5 + x^4 + x^3 + x^2 + 2x + 1.    
$$
This cyclic code is an optimal linear code.
\end{example} 

\begin{example} 
Let $(m, q, u)=(3,3,-1)$ and $\alpha$ be a generator of $\gf(r)^*$ with $\alpha^3 +2\alpha+1=0$. Then 
$\C_s$ is a $[26, 20, 4]$ ternary code with generator polynomial  
$$ 
\m_s(x)=x^6 + 2x^5 + 2x^4 + x^3 + x^2 + 2x + 2.    
$$
This cyclic code is an optimal linear code.
\end{example} 

\begin{example} 
Let $(m, q, u)=(3,3,\alpha)$ and $\alpha$ be a generator of $\gf(r)^*$ with $\alpha^3 +2\alpha+1=0$. Then 
$\C_s$ is a $[26, 16, 6]$ ternary code with generator polynomial  
$$ 
\m_s(x)=x^{10} + x^8 + 2x^5 + x^2 + 2x + 2.    
$$
This cyclic code is an optimal linear code.
\end{example}

\subsection{Cyclic codes from $f(x)=x^{(q^h-1)/(q-1)}$}\label{sec-qhminus1} 

Let $h$ be a positive integer satistying the following condition: 
\begin{eqnarray}\label{eqn-qm1cond} 
\begin{array}{l} 
1 \le h \le \left\{ \begin{array}{l} 
                      (m-1)/2 \mbox{ if $m$ is odd and} \\
                      m/2 \mbox{ if $m$ is even.}   
                           \end{array} 
                           \right. 
\end{array} 
\end{eqnarray}

In this subsection, we deal with the code $\C_s$ defined by the sequence $s^{\infty}$ of 
(\ref{eqn-sequence}), where $f(x)=x^{(q^h-1)/(q-1)}$. When $h=1$, the code $\C_s$ 
has parameters $[n,n-m-\N_p(m), d]$, where $d=3$ if $\N_p(m)=1$ and $d=2$ if $\N_p(m)=0$. 
When $h=2$, the code 
$\C_s$ become a special case of the code in Section \ref{sec-DOGold}.  Therefore, we 
assume that $h \ge 3$ in this subsection.   

In order to study the code $\C_s$ of this subsection, we need to prove 
a number of auxiliary results on 
$q$-cyclotomic cosets. 

\begin{lemma}\label{lem-f261} 
Let $h$ satisfy the condition of (\ref{eqn-qm1cond}).  
For any $(i_1, i_2, \cdots, i_t)$ with $0 < i_1 < i_2 < \cdots < i_t \le h-1$, 
the size $\ell_i=|C_j| =\ell_{n-i}=m$, where $i=q^0+\sum_{j=1}^t q^{i_j}$. 
\end{lemma}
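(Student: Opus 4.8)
The plan is to show that an element $i = q^0 + \sum_{j=1}^t q^{i_j}$ with $0 < i_1 < \cdots < i_t \le h-1$ always has a full-size $q$-cyclotomic coset, i.e. $q^u i \not\equiv i \pmod n$ for $1 \le u \le m-1$. Since $n = q^m-1$ and multiplication by $q$ acts as a cyclic shift on the base-$q$ digit representation of residues modulo $n$, the coset size $\ell_i$ is exactly the period of the digit string of $i$. So it suffices to prove that the base-$q$ expansion of $i$, viewed as a cyclic string of length $m$, has no proper period. First I would record that $i < q^h \le q^{(m-1)/2} < q^m - 1$ (using the bound on $h$ from (\ref{eqn-qm1cond})), so $i$ is a genuine residue in $\Z_n$ with all its nonzero base-$q$ digits sitting in positions $0, i_1, \dots, i_t$, all of which lie in $\{0,1,\dots,h-1\}$; in particular digit position $0$ is nonzero (it equals $1$) and every digit is $0$ or $1$.

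The key step is the periodicity argument. Suppose for contradiction that $\ell_i = u < m$ with $u \mid m$; then the length-$m$ digit string of $i$ is the $(m/u)$-fold repetition of its length-$u$ prefix. The nonzero digits occupy the $t+1$ positions $\{0, i_1, \dots, i_t\} \subseteq \{0,\dots,h-1\}$. Because $h \le (m-1)/2 < m$, and in fact $h \le m/2$ in all cases, the window $\{0,\dots,h-1\}$ is strictly shorter than $m$, yet a $u$-periodic nonzero string of length $m$ with $u < m$ must have its nonzero digits recur in every block of length $u$; so there is a nonzero digit in position $\ge u$ unless $u > h-1$, i.e. $u \ge h$. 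But $u \mid m$ and $u \ge h$ forces (since $u < m$) that $u \le m/2$, hence $2u \le m$, while the nonzero digits repeated at offset $u$ would land in positions $u, u+i_1, \dots$, all of which must still be $\le h-1 < u$ — contradiction, since $u \ge h$ means position $u$ already exceeds $h-1$. I would clean this up by arguing directly: the smallest nonzero digit position is $0$, so periodicity with period $u$ puts a nonzero digit at position $u$; but $u < m$ and $u \mid m$ give $u \le m/2$, and combined with $h \le (m-1)/2$ (or $h \le m/2$) one checks position $u$ cannot lie in $\{0,\dots,h-1\}$ unless $u$ is impossibly small, and $u < h \le m/2$ together with $u \mid m$ is easily excluded by the explicit digit pattern. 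This yields $\ell_i = m$.

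For the claim $\ell_{n-i} = m$: note $n - i \equiv -i \pmod n$, and $\ell_{n-i} = \ell_{-i}$; since $q^u(-i) \equiv -i$ iff $q^u i \equiv i$, the coset of $-i$ has the same size as that of $i$, so $\ell_{n-i} = \ell_i = m$. (Equivalently, $n-i$ has base-$q$ digit string obtained from that of $i$ by the complementation $d \mapsto q-1-d$ together with a borrow adjustment, which preserves the period.)

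The main obstacle I expect is making the digit-periodicity bookkeeping airtight: one must be careful that $i < q^m - 1$ so the digit string genuinely has length $\le h \le m/2$ and position $0$ is occupied, and then handle the divisor $u \mid m$ with $h \le u < m$ cleanly — the inequality $h \le (m-1)/2$ (odd $m$) or $h \le m/2$ (even $m$) is exactly what guarantees the repeated nonzero digit at position $u$ escapes the allowed window $\{0,\dots,h-1\}$. Once that is set up the contradiction is immediate, and the rest is routine.
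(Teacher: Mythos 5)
Your proposal is correct in substance but takes a genuinely different route from the paper's. The paper argues by explicit inequalities: for each $u$ with $1\le u\le m-1$ it forms $\Delta_1=i(q^{u}-1)$ and $\Delta_2=i(q^{m-u}-1)$, notes both are nonzero, and shows that (according as $u\le m/2$ or $u>m/2$) the relevant $\Delta$ lies strictly between $-n$ and $n$ because $i<q^{h}\le q^{m/2}$; hence $q^{u}i\not\equiv i\pmod{n}$ for every $u$, so $\ell_i=m$. You instead exploit the fact that multiplication by $q$ modulo $q^{m}-1$ cyclically shifts the length-$m$ base-$q$ digit string, so $\ell_i$ is $m$ divided by the least period of that string; since $i$ has digit $1$ exactly at the $t+1$ positions $\{0,i_1,\dots,i_t\}\subseteq\{0,\dots,h-1\}$ and digit $0$ elsewhere, a proper period $u\mid m$ with $u<m$ would copy the nonzero digit at position $0$ to every multiple of $u$, in particular to position $m-u\ge m-\frac{m}{2}\ge h$, which lies outside the allowed window --- a contradiction. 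One point to tighten: in your sketch the contradiction is drawn from the digit at position $u$ itself, which only escapes the window when $u\ge h$; for $u<h$ you should point at position $m-u$ (or some multiple of $u$ that is $\ge h$) instead, which handles all divisors $u$ uniformly and is exactly the ``explicit digit pattern'' step you allude to but do not write out. Your reduction of $\ell_{n-i}$ to $\ell_i$ via $q^{u}(-i)\equiv -i \iff q^{u}i\equiv i$ is the same as the paper's. On balance, your digit-string argument is more conceptual and transfers directly to the analogous claims in Lemma \ref{lem-CMf261}, while the paper's inequality chase is the template it reuses verbatim for Lemmas \ref{lem-1Nf261} and \ref{lem-K1Nf261}, where the exponents are no longer confined to a short initial window and the cyclic-string picture is less immediate.
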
 

\begin{proof} 
We prove the conclusion of this lemma only for the case that $m$ is even. The conclusion for 
$m$ being odd can be similarly proved. 

Let $u$ be any integer with $m-1 \ge u \ge 1$. 
Define 
$$ 
\Delta_1 = i(q^{u}-1), \ \Delta_2 = i(q^{m-u}-1).  
$$ 
Clearly $\Delta_i  \ne 0$ for both $i$ as $m-1 \ge u \ge 1$.

If $u \le m/2$, we have 
\begin{eqnarray*}
\Delta_1 &=& i(q^{u}-1) \\
& \le & (q^0+q^{\frac{m-2t}{2}} + q^{\frac{m-2t+2}{2}} + \cdots + q^{\frac{m-2}{2}}) (q^{\frac{m}{2}}-1) \\
& < & n. 
\end{eqnarray*} 
On the other hand, we have obviously that $\Delta_1>-i > -n$. Hence we have 
$\Delta_1 \not\equiv 0 \pmod{n}$ when  $u \le m/2$. 

 If $u > m/2$, then $m-u < m/2$. In this case one can similarly prove that 
 $-n < \Delta_2 < n$. So we have 
$\Delta_2 \not\equiv 0 \pmod{n}$ when  $u > m/2$. 

Summarizing the conclusions above proves the desired conclusions for $m$ being even. 
\end{proof} 

\begin{lemma}\label{lem-f262} 
Let $h$ satisfy the condition of (\ref{eqn-qm1cond}).  
For any pair of distinct $i=q^0+\sum_{l=1}^t q^{i_l}$ and $j=q^0+\sum_{l=1}^t q^{j_l}$ with 
$$ 
0 < i_1<i_2< \cdots i_t \le h-1 \mbox{ and } 0 < j_1<j_2< \cdots j_t \le h-1, 
$$ 
$C_i \cap C_j = \emptyset$, i.e., $i$ and $j$ cannot be in the same $q$-cyclotomic 
coset modulo $n$.  
\end{lemma}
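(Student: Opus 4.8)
The plan is to prove $C_i \cap C_j = \emptyset$ by contradiction, in exactly the same style as Lemma~\ref{lem-1Nf262} and Lemma~\ref{lem-f261}. Suppose $C_i = C_j$; then there is an integer $u$ with $0 \le u \le m-1$ such that $iq^u \equiv j \pmod{n}$, and simultaneously $jq^{m-u} \equiv i \pmod{n}$. The two auxiliary quantities to examine are
\begin{eqnarray*}
\Delta_1 = iq^u - j \quad \mbox{and} \quad \Delta_2 = jq^{m-u} - i,
\end{eqnarray*}
and the goal is to show that at least one of them is a nonzero integer with absolute value strictly less than $n = q^m-1$, contradicting $n \mid \Delta_1$ and $n \mid \Delta_2$. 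First I would observe that $i$ and $j$ both lie in the range $[1, (q^h-1)/(q-1)] \subseteq [1, q^{h-1}+q^{h-2}+\cdots+1]$, and in particular $i, j < q^h \le q^{(m-1)/2}$ when $m$ is odd (and $< q^{m/2}$ when $m$ is even), which is the key size estimate inherited from the hypothesis~(\ref{eqn-qm1cond}).

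The main step is a case split on the size of $u$, mirroring the Niho argument. When $u$ is small, say $u \le m/2$ (or the appropriate threshold depending on the parity of $m$ and on $h$), one bounds
\begin{eqnarray*}
-n < -j \le \Delta_1 = iq^u - j \le i q^{m/2} - 1 < q^{h} q^{m/2} - 1 \le n,
\end{eqnarray*}
using $h \le (m-1)/2$ (resp.\ $m/2$); combined with $\Delta_1 \ne 0$, which holds because $i \ne j$ would force $q^u \ne 1$ and a base-$q$ digit comparison rules out $iq^u = j$ when $0 < u$, this gives the contradiction. When $u$ is large, the symmetric estimate on $\Delta_2 = jq^{m-u} - i$ with $m-u$ small does the same job. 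The slightly delicate point, and the one I expect to be the main obstacle, is handling the boundary exponent $u \approx m/2$ cleanly in both parities of $m$ and confirming $\Delta_1 \ne 0$ (or $\Delta_2 \ne 0$) in every sub-case: since $i = q^0 + \sum_l q^{i_l}$ and $j = q^0 + \sum_l q^{j_l}$ are sums of distinct powers of $q$ (each coefficient $0$ or $1$, since the exponents are strictly increasing and bounded by $h-1 < m$), the equation $iq^u = j$ would equate two base-$q$ expansions; comparing the lowest nonzero digit (the constant term $1$ of $j$ versus $q^u$ dividing $iq^u$) shows $u = 0$, and then $i = j$, contradicting distinctness. So $\Delta_1 \ne 0$ whenever $u \ge 1$, and the case $u = 0$ is immediately excluded; similarly $\Delta_2 \ne 0$ whenever $u \le m-1$, i.e.\ $m - u \ge 1$.

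Finally I would assemble the two cases: for every $u \in \{0, 1, \ldots, m-1\}$, either $u$ falls in the small range (bound $\Delta_1$) or $m-u$ falls in the small range (bound $\Delta_2$), and in each case we contradict divisibility by $n$. Hence no such $u$ exists, so $C_i \ne C_j$, and since distinct $q$-cyclotomic cosets are disjoint, $C_i \cap C_j = \emptyset$. Throughout, the only inputs are the size bound from~(\ref{eqn-qm1cond}), the base-$q$ digit structure of $i$ and $j$, and elementary inequalities; no deeper machinery is needed, which is consistent with the proof being presented (as with the earlier disjointness lemmas) in a few lines.
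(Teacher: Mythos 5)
Your proposal is correct and follows essentially the same route as the paper's proof: the same quantities $\Delta_1=iq^u-j$ and $\Delta_2=jq^{m-u}-i$, the same case split on $u$ versus roughly $m/2$ with size bounds coming from $h\le (m-1)/2$ (resp.\ $m/2$), and the same observation that $i\equiv j\equiv 1\pmod q$ rules out $\Delta_1=\Delta_2=0$. The paper only writes out the odd-$m$ case and states the bound on $i$ via its explicit maximal $q$-ary expansion rather than the cruder $i<q^h$, but these are cosmetic differences.
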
 

\begin{proof} 
Let $u$ be any integer with $0 \le u \le m -1$. 
Define 
$$
\Delta_1=i q^u -j \mbox{ and } \Delta_2=j q^{m-u} -i.  
$$ 

Notice that $i \equiv 1 \pmod{q}$ and $j \equiv 1 \pmod{q}$. We have  
that $ \Delta_i \ne 0$ for both $i$ as $i \ne j$. 

Suppose that $i$ and $j$ were in the same cyclotomic coset. Then $n$ would divide both 
$\Delta_1$ and $\Delta_2$. 

We prove the desired conclusion only for the case that $m$ is odd. The conclusions 
for $m$ being even can be similarly proved. 

We distinguish between the following two cases. When $u \le (m-1)/2$, we have 
\begin{eqnarray*}
\Delta_1 &=& iq^{u}-j \\
& \le & (q^0+q^{\frac{m-2t+1}{2}} + q^{\frac{m-2t+3}{2}} + \cdots + q^{\frac{m-1}{2}}) q^{\frac{m-1}{2}}-j \\
& < & n. 
\end{eqnarray*} 
On the other hand, we have obviously that $\Delta_1>-j> -n$. Hence we have 
$\Delta_1 \not\equiv 0 \pmod{n}$ when  $u \le (m-1)/2$. 

When $u > (m-1)/2$, we have $m-u \le (m-1)/2$. In this case one can similarly prove that 
$  
-n< \Delta_2 < n. 
$  
In this case $\Delta_2 \not\equiv 0 \pmod{n}$. 

Summarizing the conclusions of the two cases above proves the desired conclusion 
for $m$ being odd. 
\end{proof}

We need to do more preparations before presenting and proving the main results 
of this subsection. 

Let $J \ge t \ge 2$, and let $\N(J, t)$ denote the total number of vectors 
$(i_1, i_2, \cdots, i_{t-1})$ wth $1 \le i_1<i_2<\cdots < i_{t-1}<J$. 
By definition, we have the following recursive formula: 
\begin{eqnarray}\label{eqn-recur}
\N(J, t)=\sum_{j=t-1}^{J-1} \N(j, t-1). 
\end{eqnarray} 
By definition, we have 
\begin{eqnarray}\label{eqn-J2}
\N(J,2)=J-1 \mbox{ for all } J \ge 2  
\end{eqnarray} 
and 
\begin{eqnarray}\label{eqn-J3}
\N(J,3)= \frac{(J-1)(J-2)}{2} \mbox{ for all } J \ge 3.   
\end{eqnarray} 
It the follows from (\ref{eqn-recur}),  (\ref{eqn-J2}) and (\ref{eqn-J3}) that 
\begin{eqnarray}\label{eqn-J4}
\N(J, 4) 
&=& \sum_{j=3}^{J-1} \N(j, 3) \nonumber \\
&=& \sum_{j=3}^{J-1} \frac{(J-1)(J-2)}{2}  \nonumber \\ 
&=& \frac{J^3-6J^2+11J-6}{6}. 
\end{eqnarray} 
By definition, we have 
\begin{eqnarray}\label{eqn-JJ}
\N(t,t)=1 \mbox{ for all } t \ge 2.   
\end{eqnarray} 
For convenience, we define $\N(J,1)=1$ for all $J \ge 1$.

\begin{lemma}\label{lem-2mm1} 
Let $h$ satisfy the condition of (\ref{eqn-qm1cond}).  
Let $s^{\infty}$ be the sequence of (\ref{eqn-sequence}), where $f(x)=x^{(q^h-1)/(q-1)}$. Then the linear 
span $\ls_s$ and minimal polynomial $\m_s(x)$ of $s^{\infty}$ are given by 
\begin{eqnarray*}\label{eqn-2m0} 
\ls_s = \left(\N_p(h)+ \sum_{t=1}^{h-1} \sum_{u=1}^{h-1} \N_p(h-u) \N(u, t) \right) m + \N_p(m)  
\end{eqnarray*} 
and 
\begin{eqnarray*}\label{eqn-2m21}
\m_s(x) 
&=& (x-1)^{\N_p(m)}  m_{\alpha^{-1}}(x)^{\N_p(h)}  \prod_{1 \le u \le h-1 \atop \N_p(h-u)=1} m_{\alpha^{-(q^0+q^u)}}(x)  \nonumber \\
&\times & \prod_{t=2}^{h-1}   \prod_{t \le u \le h-1 \atop \N_p(h-u)=1} 
\prod_{1 \le i_1<\cdots < i_{t-1} < u} m_{\alpha^{-(q^0+\sum_{j=1}^{t-1} q^{i_j}+ q^u)}}(x)   \nonumber   \\          
\end{eqnarray*} 
\end{lemma}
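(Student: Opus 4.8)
The plan is to expand $f(x+1)$ into a sum of monomials, push the trace through, and then track how the resulting exponents are distributed among the $q$-cyclotomic cosets modulo $n$, invoking the combinatorial Lemmas~\ref{lem-f261} and \ref{lem-f262}. Since each $q^{i}$ is a power of $p$, one has in $\gf(r)[x]$
\[
f(x+1)=(x+1)^{\sum_{i=0}^{h-1}q^{i}}=\prod_{i=0}^{h-1}\left(x^{q^{i}}+1\right)=\sum_{S\subseteq\{0,1,\dots,h-1\}}x^{e_{S}},\qquad e_{S}:=\sum_{i\in S}q^{i}.
\]
Evaluating at $x=\alpha^{t}$ and applying $\tr$ gives $s_{t}=\tr(1)+\sum_{\emptyset\ne S}\tr\!\left((\alpha^{t})^{e_{S}}\right)$, where $\tr(1)$ is the image of the integer $m$ in $\gf(q)$, hence nonzero precisely when $\N_{p}(m)=1$, which will account for the factor $(x-1)^{\N_{p}(m)}$. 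For $S\ne\emptyset$ I would set $S^{*}:=S-\min S$; then $S^{*}\subseteq\{0,\dots,h-1\}$, $0\in S^{*}$, $|S^{*}|=|S|$, and $e_{S}=q^{\min S}e_{S^{*}}$, so $C_{e_{S}}=C_{e_{S^{*}}}$. By Lemma~\ref{lem-f261} every such coset $C_{e_{S^{*}}}$ has size $m$, whence $\tr((\alpha^{t})^{e_{S}})=\sum_{i\in C_{e_{S^{*}}}}(\alpha^{t})^{i}$.

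Next I would regroup the sum by cosets. The number of nonzero base-$q$ digits of an integer in $\{0,\dots,n-1\}$ is invariant along a $q$-cyclotomic coset (multiplication by $q$ cyclically permutes the $m$ base-$q$ digits, and the bound~(\ref{eqn-qm1cond}) on $h$ rules out carries and wrap-around overlaps), so cosets produced by subsets of different sizes are distinct, while Lemma~\ref{lem-f262} gives the same conclusion for distinct canonical subsets of one fixed size. Hence every coset contains a unique canonical subset $S^{*}$, and the subsets $S\subseteq\{0,\dots,h-1\}$ producing that coset are precisely the translates $S^{*}+c$ with $0\le c\le h-1-\max S^{*}$; there are $h-\max S^{*}$ of them, and they all contribute the same term. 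Singletons $S=\{i\}$ all satisfy $e_{S}=q^{i}\in C_{1}$, contributing $h$ copies of $\sum_{i\in C_{1}}(\alpha^{t})^{i}$. Therefore
\[
s_{t}=\tr(1)+(h\bmod p)\sum_{i\in C_{1}}(\alpha^{t})^{i}+\sum_{\substack{S^{*}\text{ canonical}\\|S^{*}|\ge2}}\bigl((h-\max S^{*})\bmod p\bigr)\sum_{i\in C_{e_{S^{*}}}}(\alpha^{t})^{i}.
\]

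Now I would invoke Lemma~\ref{lem-ls2}: the index set $I$ of nonzero coefficients consists of $\{0\}$ when $\N_{p}(m)=1$, of $C_{1}$ when $\N_{p}(h)=1$, and of $C_{e_{S^{*}}}$ for each canonical $S^{*}$ with $|S^{*}|\ge2$ and $\N_{p}(h-\max S^{*})=1$; by the distinctness just established these cosets are pairwise disjoint, so no further cancellation modulo $p$ occurs. Writing $|S^{*}|=t+1$ and $\max S^{*}=u$, the canonical subsets with these parameters are in bijection with the choices $1\le i_{1}<\dots<i_{t-1}<u$ of their remaining nonzero entries, of which there are $\N(u,t)$; summing the coset size $m$ over all such $S^{*}$ with $\N_{p}(h-u)=1$, together with $m\N_{p}(h)$ for $C_{1}$ and $\N_{p}(m)$ for $\{0\}$, yields exactly the stated value of $\ls_{s}=|I|$. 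Dually, $\m_{s}(x)=\prod_{i\in I}(1-\alpha^{i}x)$ factors coset by coset, and in the reciprocal convention of Lemma~\ref{lem-ls2} each factor is $m_{\alpha^{-\ell}}(x)$; collecting $(x-1)^{\N_{p}(m)}$, $m_{\alpha^{-1}}(x)^{\N_{p}(h)}$ and the factors $m_{\alpha^{-e_{S^{*}}}}(x)$ with $e_{S^{*}}=q^{0}+\sum_{j=1}^{t-1}q^{i_{j}}+q^{u}$ reproduces the displayed product.

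The main obstacle is the combinatorial bookkeeping of this regrouping: one must be sure that distinct canonical subsets always lie in distinct cosets — so that the multiplicities are exactly $h-\max S^{*}$ and nothing cancels unexpectedly modulo $p$ — and that no ``wrapped'' translate sneaks an extra subset into some coset. This is precisely what Lemmas~\ref{lem-f261} and \ref{lem-f262}, and hence the hypothesis~(\ref{eqn-qm1cond}) restricting the range of $h$, are there to supply; the rest (the count $\N(u,t)$ and the reconciliation of the two displayed formulas) is routine arithmetic.
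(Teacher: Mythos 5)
Your proof follows essentially the same route as the paper: expand $\prod_{i=0}^{h-1}(x^{q^i}+1)$ over subsets of $\{0,\dots,h-1\}$, normalize each subset by its minimum using $\tr(y^q)=\tr(y)$ so that the canonical subset with maximum $u$ acquires multiplicity $h-u$ (and the singletons give $h\,\tr(x)$), and then invoke Lemmas~\ref{lem-f261}, \ref{lem-f262} and \ref{lem-ls2} to read off the linear span and minimal polynomial. Your base-$q$ digit-weight argument for why canonical subsets of \emph{different} sizes lie in distinct cosets is a detail the paper leaves implicit (Lemma~\ref{lem-f262} only treats pairs of equal size), and it is a correct and welcome addition.
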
 

\begin{proof} 
Define 
$$
x=\alpha^t. 
$$  
Then we have 
\begin{eqnarray}\label{eqn-2m11}
s_t  
&=& \tr\left(  (x+1)^{\sum_{i=0}^{h-1} q^{i}}    \right) \nonumber \\
&=& \tr\left( \prod_{i=0}^{h-1} \left(x^{q^{i}}+1\right)     \right) \nonumber \\
&=& \tr(1) + \tr\left[ \sum_{t=1}^{h}  \sum_{0 \le i_1 < \cdots i_t \le h-1} x^{\sum_{j=1}^t q^{i_j} }     \right] \nonumber \\
&=& \tr(1) + h\tr(x) + \tr\left[ \sum_{i_1=1}^{h-1} (h-i_1)  x^{q^0+q^{i_1}}     \right] + \nonumber \\
& & 
\tr\left[ \sum_{t=2}^{h-1}  \sum_{i_1=1}^{h-t} \sum_{i_2=i_1+1}^{h-t+1} \cdots  \sum_{i_t=i_{t-1}+1}^{h-1} (h-i_t) 
  x^{q^0 + \sum_{j=1}^t q^{i_j} }     \right] \nonumber \\
&=& \tr(1) + h\tr(x) + \tr\left[ \sum_{i_1=1}^{h-1} (h-i_1)  x^{q^0+q^{i_1}}     \right] + \nonumber \\
& & 
\tr\left[ \sum_{t=2}^{h-1} \sum_{i_t=t}^{h-1} (h-i_t) \sum_{1 \le i_1 < \cdots < i_{t-1} < i_t} x^{q^0+\sum_{j=1}^t q^{i_j} }     \right]. 
\end{eqnarray} 

The desired conclusions on the linear span and the minimal polynomial $\m_s(x)$ then follow from Lemmas \ref{lem-ls2}, 
\ref{lem-f261}, \ref{lem-f262},  and Equation   
(\ref{eqn-2m11}). 
\end{proof}

The following theorem provides information on the code $\C_{s}$.    

\begin{theorem}\label{thm-qh} 
The code $\C_{s}$ defined by the sequence of Lemma \ref{lem-2mm1} has parameters 
$[n, n-\ls_s, d]$ and generator polynomial $\m_s(x)$, 
 where $\ls_s$ and $\m_s(x)$ are given in Lemma \ref{lem-2mm1}. 
\end{theorem}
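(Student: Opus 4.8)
The plan is to deduce this theorem directly from the definition of the code $\C_s$ together with Lemma~\ref{lem-ls1} and Lemma~\ref{lem-2mm1}; since the statement asserts only the length, the dimension and the generator polynomial (the minimum distance being left as the unspecified parameter $d$), no new minimum-weight estimate is needed.

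First I would unwind the definitions. By construction $\C_s$ is the cyclic code over $\gf(q)$ of length $n=q^m-1$ whose generator polynomial is $(x^n-1)/\gcd(S(x),x^n-1)$, where $S(x)=\sum_{i=0}^{n-1}s_ix^i$ and $s^\infty$ is the sequence of~(\ref{eqn-sequence}) attached to $f(x)=x^{(q^h-1)/(q-1)}$; note $s^\infty$ has period $n$ because $\alpha^n=1$. Invoking Lemma~\ref{lem-ls1} with $L=n$ shows that this quotient is precisely the minimal polynomial $\m_s(x)$ of $s^\infty$, and that $n-\deg\gcd(x^n-1,S(x))$ equals the linear span $\ls_s$. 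Hence the generator polynomial of $\C_s$ is $\m_s(x)$, and its explicit factorization into the polynomials $m_{\alpha^{-j}}(x)$, as well as the value of $\ls_s$, are exactly those supplied by Lemma~\ref{lem-2mm1}.

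Then I would read off the parameters: a cyclic code of length $n$ with generator polynomial $g(x)$ has dimension $n-\deg g(x)$, and $\deg\m_s(x)=\ls_s$ (the degree of the minimal polynomial of a periodic sequence coincides with its linear span, as also recorded in Lemma~\ref{lem-ls2}); so $\C_s$ has parameters $[n,\,n-\ls_s,\,d]$, which is the claim.

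The substantive content of this section has in fact already been spent in the preparatory lemmas, and that is where any real obstacle lies: proving that the exponents $q^0+\sum_j q^{i_j}$ occurring after cancellation in the expansion~(\ref{eqn-2m11}) lie in pairwise-disjoint $q$-cyclotomic cosets, each of full size $m$ (Lemmas~\ref{lem-f261} and~\ref{lem-f262}), and then carrying out the combinatorial bookkeeping via the counting numbers $\N(J,t)$ to obtain the closed form of $\ls_s$. Granted those inputs, Theorem~\ref{thm-qh} is a formality.
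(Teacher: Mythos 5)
Your proposal is correct and follows the same route as the paper, which likewise derives the dimension and generator polynomial directly from Lemma \ref{lem-2mm1} and the definition of $\C_s$ (via Lemma \ref{lem-ls1}); the paper's proof is just a one-line version of your argument. Your observation that the substantive work resides in the preparatory lemmas matches the structure of the paper exactly.
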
 

\begin{proof} 
The dimension of $\C_{s}$ follows from Lemma \ref{lem-2mm1} and the definition of the 
code $\C_s$. 
\end{proof} 

As a corollary of Theorem \ref{thm-qh}, we have the following.  
\begin{corollary}\label{cor-qh1} 
Let $h =3$. 
The code $\C_{s}$ of Theorem \ref{thm-qh} has parameters  
$[n, n-\ls_s, d]$ and generator polynomial $\m_s(x)$ given by 
\begin{eqnarray*}
\m_s(x) = 
(x-1)^{\N_p(m)} m_{\alpha^{-1}}(x) m_{\alpha^{-1-q}}(x) m_{\alpha^{-1-q^2}}(x)  m_{\alpha^{-1-q-q^2}}(x)   
\end{eqnarray*}
if $p \ne 3$; and 
\begin{eqnarray*}
\m_s(x) = 
(x-1)^{\N_p(m)} m_{\alpha^{-1-q}}(x) m_{\alpha^{-1-q^2}}(x)  m_{\alpha^{-1-q-q^2}}(x)   
\end{eqnarray*}
if $p=3$, where 
\begin{eqnarray}
\ls_s=\left\{ \begin{array}{l}
4m+\N_p(m) \mbox{ if } p \ne 3 \\
3m+\N_p(m) \mbox{ if } p = 3. 
\end{array}
\right. 
\end{eqnarray}
In addition,  
\begin{eqnarray*} 
\left\{ \begin{array}{ll}
3 \le d \le 8  & \mbox{if $p=3$ and $\N_p(m)=1$} \\
3 \le d \le 6  & \mbox{if $p=3$ and $\N_p(m)=0$} \\
3 \le d \le 8  & \mbox{if $p>3$.}  
\end{array}
\right. 
\end{eqnarray*}
\end{corollary}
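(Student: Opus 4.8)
The plan is to read the statement off Lemma~\ref{lem-2mm1} (equivalently Theorem~\ref{thm-qh}) by specializing to $h=3$, and then to pin down the minimum weight with the reciprocal/BCH and sphere-packing arguments already used in this section, e.g.\ in the proof of Theorem~\ref{thm-DYY}.

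First I would carry out the bookkeeping in the formulas of Lemma~\ref{lem-2mm1} at $h=3$. Since $p$ is odd, $\N_p(1)=\N_p(2)=1$, while $\N_p(3)=0$ exactly when $p=3$; and the only combinatorial counts that occur are $\N(1,1)=\N(2,1)=1$, $\N(1,2)=0$, $\N(2,2)=1$. Substituting into the displayed sum gives
\[
\ls_s=\bigl(\N_p(3)+2+1\bigr)m+\N_p(m)=(\N_p(3)+3)m+\N_p(m),
\]
which is $4m+\N_p(m)$ for $p\neq 3$ and $3m+\N_p(m)$ for $p=3$. Substituting into the product formula for $\m_s(x)$, the exponents that survive the cancellation in~(\ref{eqn-2m11}) are $1$ (coefficient $h=3$, hence present only when $\N_p(3)=1$, i.e.\ $p\neq 3$), $1+q$ (coefficient $h-1=2$), $1+q^2$ (coefficient $h-2=1$) and $1+q+q^2$ (coefficient $h-2=1$), together with the factor $(x-1)^{\N_p(m)}$; this is exactly the claimed $\m_s(x)$. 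The dimension $n-\ls_s$ then follows once $\deg\m_s(x)=\ls_s$, which in turn needs the cosets $C_1,C_{1+q},C_{1+q^2},C_{1+q+q^2}$ to have size $m$ and be pairwise distinct and distinct from $C_0$: $|C_1|=m$ is immediate from $n=q^m-1$, and the rest are Lemmas~\ref{lem-f261} and~\ref{lem-f262} applied with $h=3$ (admissible under~(\ref{eqn-qm1cond}) only for $m\geq 6$, which is used below).

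For the lower bound $d\geq 3$: since $\m_s(0)\neq 0$, $\C_s$ has the same weight distribution as the code generated by the reciprocal $\m_s^*(x)$, whose defining set contains $C_{1+q}\cup C_{1+q+q^2}$ in both cases. Using $q^m\equiv 1\pmod n$ one checks that $(1+q)q^{m-2}=q^{m-2}+q^{m-1}$ and $(1+q+q^2)q^{m-2}\equiv q^{m-2}+q^{m-1}+1\pmod n$, and that both of these lie in $\{0,1,\dots,n-1\}$ for $q\geq 3$; hence the defining set contains two consecutive integers, and the BCH bound gives $d\geq 3$.

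For the upper bounds I would apply the sphere-packing bound to the $[n,n-\ls_s,d]$ code over $\gf(q)$ with $n=q^m-1$, i.e.\ $q^{\ls_s}\geq\sum_{i=0}^{\lfloor (d-1)/2\rfloor}\binom{n}{i}(q-1)^i$. When $p\neq 3$ and $\ls_s\in\{4m,4m+1\}$, the $i=4$ term already exceeds $q^{4m+1}$ for $q\geq 5$ and $m\geq 6$, so $d\leq 8$; when $p=3$ and $\ls_s\in\{3m,3m+1\}$, the $i=3$ term exceeds $q^{3m+1}$ because $(q-1)^3/6>1$ already at $q=3$, which forces $d\leq 6$ when $\N_p(m)=0$ and still permits $d\leq 8$ when $\N_p(m)=1$; combined with $d\geq 3$ this yields the four stated cases. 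The only genuinely laborious step is the first paragraph: checking that the nested sums and products of Lemma~\ref{lem-2mm1} collapse to precisely these four exponents, with no further cancellation and no coincidence of cyclotomic cosets. Everything afterward is the routine reciprocal/BCH/sphere-packing argument.
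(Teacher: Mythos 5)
Your proposal is correct and follows essentially the same route as the paper: the generator polynomial and linear span are read off by specializing Lemma~\ref{lem-2mm1} to $h=3$ (the bookkeeping you spell out is left implicit in the paper), the lower bound $d\ge 3$ comes from two consecutive exponents in the defining set of the reciprocal code plus the BCH bound (the paper uses the pair $q+q^2$ and $1+q+q^2$, which is your pair up to multiplication by a power of $q$ within the same cyclotomic cosets $C_{1+q}$ and $C_{1+q+q^2}$), and the upper bounds come from the sphere-packing bound. The only blemish is a slip in the $p=3$ sphere-packing discussion (the $i=3$ term exceeds $q^{3m}$, not $q^{3m+1}$, at $q=3$, and the $d\le 8$ case with $\N_p(m)=1$ needs the $i=4$ term as in your $p\ne 3$ computation), but this does not affect the conclusions.
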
 

\begin{proof} 
We need to prove only the bounds on the minimum weight of this code. 
The upper bounds on $d$ follow from the sphere-packing bound and the dimension of the code. 
In both cases, the reciprocal of $\m_s(x)$ has the roots $\alpha^{q+q^2}$ and  $\alpha^{1+q+q^2}$. 
It then follows from the BCH bound that $d \ge 3$. 
\end{proof}

\begin{open} 
For the code $\C_s$ of Corollary \ref{cor-qh1}, do the following lower bounds hold?  
\begin{eqnarray*} 
d \ge \left\{ \begin{array}{ll}
5   & \mbox{when $p=3$ and $\N_p(m)=1$} \\
4   & \mbox{when $p=3$ and $\N_p(m)=0$} \\ 
6   & \mbox{when $p>3$ and $\N_p(m)=1$} \\
5   & \mbox{when $p>3$ and $\N_p(m)=0$.} 
\end{array}
\right. 
\end{eqnarray*}
\end{open}

\begin{example} 
Let $(m,h, q)=(2,3,3)$ and $\alpha$ be a generator of $\gf(r)^*$ with $\alpha^2 +2\alpha+2=0$. Then 
$\C_s$ is a $[8, 2, 6]$ ternary code with generator polynomial  
$$ 
\m_s(x)=x^6 + 2x^5 + 2x^4 + 2x^2 + x + 1.    
$$
This cyclic code is optimal. Notice that $h >m/2$. Hence, the parameters of this code do not 
agree with those of the code in Corollary \ref{cor-qh1}. 
In this case $f(x)$ is a 
permutation.  
\end{example} 

\begin{example} 
Let $(m,h, q)=(3,3,3)$ and $\alpha$ be a generator of $\gf(r)^*$ with $\alpha^3 +2\alpha+1=0$. Then 
$\C_s$ is a $[26, 26, 1]$ ternary code with generator polynomial  
$  
\m_s(x)=1.    
$
This cyclic code is optimal as it is MDS. Notice that $h >(m-1)/2$. Hence, the parameters of this code do not 
agree with those of the code in Corollary \ref{cor-qh1}. 
In this case $f(x)$ is not a 
permutation. In fact, $\gcd((q^h-1)/(q-1), q^m-1)=13$.   
\end{example} 

\begin{example} 
Let $(m,h, q)=(4,3,3)$ and $\alpha$ be a generator of $\gf(r)^*$ with $\alpha^4 +2\alpha^3+2=0$. Then 
$\C_s$ is a $[80, 69, 5]$ ternary code with generator polynomial  
$$ 
\m_s(x)= x^{11} + 2x^8 + 2x^6 + 2x^5 + 2x^4 + x^3 + 2x^2 + x + 2.    
$$
This is an almost optimal linear code. The known optimal linear code has parameters $[80, 69, 6]$ 
which is not cyclic. Notice that $h >m/2$. Hence, the parameters of this code do not 
agree with those of the code in Corollary \ref{cor-qh1}. 
In this case $f(x)$ is a 
permutation.  
\end{example} 

\begin{example} 
Let $(m,h, q)=(5,3,3)$ and $\alpha$ be a generator of $\gf(r)^*$ with $\alpha^5 +2\alpha+1=0$. Then 
$\C_s$ is a $[242, 226, d]$ ternary code with generator polynomial  
$$ 
\m_s(x)= x^{16} + 2x^{14} + 2x^{12} + 2x^{11} + x^{10} + x^9 + x^6 + x^3 + 2x^2 + 2.    
$$
Notice that $h >m/2$. However, the parameters of this code do  
agree with those of the code in Corollary \ref{cor-qh1}. 
In this case $f(x)$ is a 
permutation.  
\end{example} 

\begin{example} 
Let $(m,h, q)=(6,3,3)$ and $\alpha$ be a generator of $\gf(r)^*$ with $\alpha^6 + 2\alpha^4 + \alpha^2 + 2\alpha + 2=0$. Then 
$\C_s$ is a $[728, 710, d]$ ternary code with generator polynomial  
\begin{eqnarray*}
\m_s(x) &=& x^{18} + 2x^{15} + 2x^{14} + 2x^{13} + 2x^{11} + x^{10} \\
& & + 2x^9 + x^8 + x^6 + 2x^4 + x^3 + x^2 + 2.    
\end{eqnarray*} 
Notice that $h = m/2$. Hence, the parameters of this code  
agree with those of the code in Corollary \ref{cor-qh1}. 
In this case $f(x)$ is not a 
permutation. In fact, $\gcd((q^h-1)/(q-1), q^m-1)=13$.  
\end{example}

\begin{example} 
Let $(m,h, q)=(2,3,5)$ and $\alpha$ be a generator of $\gf(r)^*$ with $\alpha^2 +4\alpha+2=0$. Then 
$\C_s$ is a $[24, 16, 5]$ cyclic code over $\gf(5)$ with generator polynomial  
$$ 
\m_s(x)= x^8 + x^7 + 2x^4 + 2x^3 + 3x^2 + 4x + 2.    
$$
The best linear code known has parameters $[24, 16, 6]$ 
which is not cyclic.
Notice that $h >m/2$. Hence, the parameters of this code do not 
agree with those of the code in Corollary \ref{cor-qh1}. 
\end{example}

\begin{example} 
Let $(m,h, q)=(6,3,5)$ and $\alpha$ be a generator of $\gf(r)^*$ with $\alpha^6 + \alpha^4 + 4\alpha^3 + \alpha^2 + 2=0$. Then 
$\C_s$ is a $[15624, 15599, d]$ cyclic code over $\gf(5)$ with generator polynomial  
\begin{eqnarray*}
\m_s(x) &=&  x^{25} + x^{24} + 3x^{23} + 2x^{22} + 3x^{21} + x^{20} + 2x^{19} + \\ 
& & 4x^{18} + 4x^{17} + x^{16} + 2x^{14} + 4x^{12} + 2x^{11} + 3x^{10} + \\
& & 4x^8 + 4x^6 + 4x^5 + x^4 + 4x^3 + x^2 + 4.    
\end{eqnarray*} 
Notice that $h = m/2$. Hence, the parameters of this code   
agree with those of the code in Corollary \ref{cor-qh1}. 
\end{example}

\subsection{Cyclic codes from the Coulter-Mathews planar function }\label{sec-CM} 

Throughout this subsection, let $q=3$, and $r=q^m$ as before. 
Let $h$ be a positive integer satistying the following conditions: 
\begin{eqnarray}\label{eqn-CM2m1cond} 
\left\{ 
\begin{array}{l}  
h \mbox{ is odd,} \\
\gcd(m, h)=1, \\
3 \le h \le \left\{ \begin{array}{l} 
                      (m-1)/2 \mbox{ if $m$ is odd and} \\
                      m/2 \mbox{ if $m$ is even.}   
                           \end{array} 
                           \right. 
\end{array} 
\right. 
\end{eqnarray}

In this subsection, we deal with the code $\C_s$ defined by the sequence $s^{\infty}$ of 
(\ref{eqn-sequence}), where $f(x)=x^{(3^h+1)/2}$. When $h=1$,  the code 
$\C_s$ becomes the code of  Section \ref{sec-2square}.  Therefore, we consider the 
case  
$h \ge 3$ in this subsection.  The code of this subsection is related to the code of Section 
\ref{sec-qhminus1}, so we need to use some of the notations, symbols and lemmas 
of Section \ref{sec-qhminus1}.   

In order to study the code $\C_s$ of this subsection, we need to prove 
a number of auxiliary results on 
$q$-cyclotomic cosets. 

\begin{lemma}\label{lem-CMf261} 
Let $h$ satisfy the third condition of (\ref{eqn-CM2m1cond}).  
For any $(i_1, i_2, \cdots, i_t)$ with $0 < i_1 < i_2 < \cdots i_t \le h-1$, 
the size $\ell_i=|C_j| =\ell_{n-i}=m$, where $i=2+\sum_{j=1}^t 3^{i_j}$. 
In addition, $\ell_2=\ell_{n-2}=m$.  
\end{lemma}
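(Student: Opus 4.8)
The plan is to follow the proof of Lemma~\ref{lem-f261} almost verbatim, with $q$ replaced by $3$ and the exponent shape $q^0+\sum q^{i_j}$ replaced by $2+\sum 3^{i_j}$. First note that $\ell_{n-i}=\ell_i$ for every residue $i$, since $3^u(n-i)\equiv n-i\pmod n$ is equivalent to $3^u i\equiv i\pmod n$; hence it suffices to prove $\ell_i=m$, both for $i=2+\sum_{j=1}^t 3^{i_j}$ and for $i=2$. Since $\ell_i$ divides $m$, it is enough to show that $3^u i\not\equiv i\pmod n$ whenever $1\le u\le\lfloor m/2\rfloor$; equivalently, letting $u$ range over $1,\dots,m-1$ and setting $\Delta_1=i(3^u-1)$ and $\Delta_2=i(3^{m-u}-1)$ (so that $\Delta_2$ at $u$ equals $\Delta_1$ at $m-u$), it suffices that for each such $u$ at least one of $\Delta_1,\Delta_2$ is not divisible by $n$.

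Since $i\equiv 2\pmod 3$ we have $i\neq 0$, and $3^u-1\neq 0$, $3^{m-u}-1\neq 0$ for $1\le u\le m-1$, so $\Delta_1\neq 0$ and $\Delta_2\neq 0$. The largest power of $3$ appearing in $i$ is $3^{i_t}\le 3^{h-1}$, so $i\le 2+3+\cdots+3^{h-1}=(3^h+1)/2$, and the third condition of (\ref{eqn-CM2m1cond}) gives $h\le\lfloor m/2\rfloor$, whence $i\le (3^{\lfloor m/2\rfloor}+1)/2$. Therefore, when $u\le m/2$,
\[
0<\Delta_1=i(3^u-1)\le\frac{3^{\lfloor m/2\rfloor}+1}{2}\bigl(3^{\lfloor m/2\rfloor}-1\bigr)=\frac{3^{2\lfloor m/2\rfloor}-1}{2}\le\frac{3^m-1}{2}<n,
\]
so $\Delta_1\not\equiv 0\pmod n$; and when $u>m/2$ we have $m-u<m/2$, so the same estimate applied to $\Delta_2$ gives $0<\Delta_2<n$ and $\Delta_2\not\equiv 0\pmod n$. (When $m$ is odd the inequalities are even slacker; the two parities are handled exactly as in Lemma~\ref{lem-f261}.) This proves $\ell_i=m$, hence $\ell_{n-i}=m$. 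The case $i=2$ is the same computation with $2$ in place of the sum: $0<2(3^u-1)<3^m-1=n$ for $1\le u\le\lfloor m/2\rfloor$, and likewise for $\Delta_2$ when $u>m/2$, so $\ell_2=\ell_{n-2}=m$.

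The argument presents no genuine difficulty; the only point needing care is the size estimate, i.e. that $\frac{3^h+1}{2}\bigl(3^{\lfloor m/2\rfloor}-1\bigr)<3^m-1$, which is precisely where the bound on $h$ in (\ref{eqn-CM2m1cond}) is used. As in Lemma~\ref{lem-f261} it goes through with room to spare, because $\frac{3^{\lfloor m/2\rfloor}+1}{2}\bigl(3^{\lfloor m/2\rfloor}-1\bigr)=\frac{3^{2\lfloor m/2\rfloor}-1}{2}\le\frac{3^m-1}{2}$.
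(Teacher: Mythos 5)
Your proposal is correct and is exactly the modification of the proof of Lemma~\ref{lem-f261} that the paper itself invokes (the paper omits the details, saying only that the earlier proof is easily adapted): the same $\Delta_1,\Delta_2$ dichotomy, with the size estimate $i\le(3^h+1)/2\le(3^{\lfloor m/2\rfloor}+1)/2$ giving $0<\Delta_j<n$ in the relevant range of $u$. Nothing further is needed.
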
 

\begin{proof} 
The proof of Lemma \ref{lem-f261} is easily modified into a proof of this lemma.  
We omit the details here. 
\end{proof} 

\begin{lemma}\label{lem-CMf262} 
Let $h$ satisfy the third condition of (\ref{eqn-CM2m1cond}).  
For any pair of distinct $i=2+\sum_{u=1}^t 3^{i_u}$ and $j=2+\sum_{u=1}^t 3^{j_u}$ with 
$$ 
0 < i_1<i_2< \cdots i_t \le h-1 \mbox{ and } 0 < j_1<j_2< \cdots j_t \le h-1, 
$$ 
$C_i \cap C_j = \emptyset$, i.e., $i$ and $j$ cannot be in the same $q$-cyclotomic 
coset modulo $n$.  In addition, 
\begin{itemize} 
\item $C_2 \cap C_1 = \emptyset.$ 
\item $C_2 \cap C_{1+\sum_{u=1}^t 3^{i_u}} = \emptyset$, where  $1 \le i_1 < \cdots < i_t \le h-1$.  
\item $C_2 \cap C_{2+\sum_{u=1}^t 3^{i_u}} = \emptyset$, where  $1 \le i_1 < \cdots < i_t \le h-1$. 
\item $C_{2+\sum_{u=1}^{t_1} 3^{i_u}}  \cap C_{1+\sum_{u=1}^{t_2} 3^{i_u}} = \emptyset$, where 
$$ 
1 \le i_1 < \cdots < i_{t_1} \le h-1 \mbox{ and } 1 \le i_1 < \cdots < i_{t_2} \le h-1.  
$$ 
\end{itemize} 
\end{lemma}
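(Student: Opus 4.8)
The plan is to adapt the method used for Lemma~\ref{lem-f262}. Recall that $C_a = C_b$ precisely when $b \equiv 3^u a \pmod{n}$ for some $u \in \{0,1,\dots,m-1\}$, and that this also forces $a \equiv 3^{m-u} b \pmod{n}$. So for each pair $(a,b)$ appearing in the statement I would assume $C_a = C_b$, fix such a $u$, and put
\begin{eqnarray*}
\Delta_1 = 3^u a - b, \qquad \Delta_2 = 3^{m-u} b - a ,
\end{eqnarray*}
so that $n \mid \Delta_1$ and $n \mid \Delta_2$. The whole argument then reduces to showing that $\Delta_1 \ne 0$ and $\Delta_2 \ne 0$, and that for $u$ in the lower half of its range $|\Delta_1| < n$, while for $u$ in the upper half $|\Delta_2| < n$; either way $n$ divides neither $\Delta_1$ nor $\Delta_2$, a contradiction.

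For the nonvanishing I would use residues modulo $3$. Every integer of the form $1 + \sum_{l=1}^{s} 3^{i_l}$ with $1 \le i_1 < \cdots < i_s$ is congruent to $1 \pmod 3$, and every integer of the form $2 + \sum_{l=1}^{s} 3^{i_l}$ is congruent to $2 \pmod 3$ (the case $s=0$ giving $1$ and $2$ themselves). Hence $3^u a \equiv 0 \pmod 3$ whenever $u \ge 1$, so $\Delta_1 \not\equiv 0 \pmod 3$, and in particular $\Delta_1 \ne 0$, for every $u \ge 1$; likewise $\Delta_2 \ne 0$ for every $u \le m-1$. The remaining case $u = 0$ gives $a = b$, which in the first bullet contradicts $i \ne j$, and in each of the other bullets is impossible because the two numbers being compared either lie in distinct residue classes modulo $3$ (e.g.\ $C_2$ versus $C_{1+\sum 3^{i_l}}$) or are unequal for size reasons ($2 \ne 1$, $2 < 2 + \sum 3^{i_l}$, and so on).

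For the size estimates, the third condition in (\ref{eqn-CM2m1cond}) ensures every exponent present is at most $h-1$, so each of $a$ and $b$ is at most $2 + (3 + 3^2 + \cdots + 3^{h-1}) = (3^h + 1)/2$, and $3^h \le 3^{\lfloor m/2 \rfloor}$. If $u \le \lfloor m/2 \rfloor$, then
\begin{eqnarray*}
0 < 3^u a \le 3^{\lfloor m/2 \rfloor}\cdot\frac{3^h + 1}{2} \le \frac{3^m + 3^{\lfloor m/2 \rfloor}}{2} < 3^m - 1 = n ,
\end{eqnarray*}
while $\Delta_1 = 3^u a - b > -b > -n$, so $0 < |\Delta_1| < n$ and $n \nmid \Delta_1$. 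If $u > \lfloor m/2 \rfloor$, then $m - u \le \lfloor m/2 \rfloor$ and the identical estimate applied to $\Delta_2 = 3^{m-u} b - a$ gives $0 < |\Delta_2| < n$ and $n \nmid \Delta_2$. This contradicts $C_a = C_b$, so $C_a \cap C_b = \emptyset$, and each of the five items is this scheme with the appropriate choice of $(a,b)$. (The item $C_2 \cap C_1 = \emptyset$ is even more direct, since $2$ is not congruent to any power $3^u$, $0 \le u \le m-1$, modulo $n$, hence $2 \notin C_1$. For the items in which one of the two cosets is $C_2$ only $\Delta_1$ is needed, together with the crude bound $3^u\cdot 2 \le 2\cdot 3^{m-1} < n$.)

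The only genuinely delicate point — and the place I would be most careful — is the bookkeeping in the size estimates: checking that the bounds survive the tight case $q = 3$, where $2\cdot 3^{m-1} < 3^m - 1$ holds only barely, and that the split of the range of $u$ is correctly matched to $\Delta_1$ versus $\Delta_2$ for both parities of $m$. Beyond this the proof is mechanical and, as the paper already signals for the analogous statements, parallels Lemmas~\ref{lem-f261} and \ref{lem-f262}.
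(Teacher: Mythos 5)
Your proposal is correct and is essentially the argument the paper intends: the paper's proof of this lemma is just the remark that the proof of Lemma~\ref{lem-f262} can be modified, and your write-up is exactly that modification (the $\Delta_1=3^ua-b$, $\Delta_2=3^{m-u}b-a$ scheme, with non-vanishing now read off from the residues $1$ and $2$ modulo $3$ rather than all exponents being $\equiv 1$, and the size estimates controlled by $h\le\lfloor m/2\rfloor$ from (\ref{eqn-CM2m1cond})). The bookkeeping you flag does go through, since $a,b\le(3^h+1)/2$ and $3^{\lfloor m/2\rfloor}(3^h+1)/2<3^m-1$ under that condition.
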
 

\begin{proof} 
The proof of Lemma \ref{lem-f262} can be modified into a proof of this lemma. 
We omit the details here.  
\end{proof}

\begin{lemma}\label{lem-CM2mm1} 
Let $h$ satisfy the third condition of (\ref{eqn-CM2m1cond}).  
Let $s^{\infty}$ be the sequence of (\ref{eqn-sequence}), where $f(x)=x^{(3^h+1)/2}$. Then the linear 
span $\ls_s$ and minimal polynomial $\m_s(x)$ of $s^{\infty}$ are given by 
\begin{eqnarray*}\label{eqn-CM2m0} 
\ls_s &=& \N_3(m) + \left(\sum_{i=0}^h\N_3(h-i+1) \right)m + \\
        &  & \left( \sum_{t=2}^h \N(h,t) + \sum_{t=2}^{h-1} \sum_{i_t=t}^{h-1} \N_3(h-i_t+1) \N(i_t, t) \right) m   
\end{eqnarray*} 
and 
\begin{eqnarray*}\label{eqn-CM2m21}
\m_s(x) 
&=& (x-1)^{ \N_3(m)}  m_{\alpha^{-1}}(x)^{\N_3(h+1)} m_{\alpha^{-2}}(x) \times \\
& & \prod_{t=1}^{h-1} \prod_{1 \le i_1 < \cdots < i_t \le h-1} m_{\alpha^{-(2+\sum_{j=1}^t 3^{i_j})}}(x) \times \\  
& & \prod_{1 \le u \le h-1 \atop \N_3(h-u+1)=1} m_{\alpha^{-(1+3^u)}}(x)  \times \\
& & \prod_{t=2}^{h-1}   \prod_{t \le i_t \le h-1 \atop \N_3(h-i_t+1)=1}   
      \prod_{1 \le i_1<\cdots < i_{t-1} < i_t} m_{\alpha^{-(1+\sum_{j=1}^{t} 3^{i_j})}}(x),            
\end{eqnarray*} 
where $\N_3(j)$ and $\N(j, t)$ were defined in Sections \ref{sec-notations} and \ref{sec-qhminus1} respectively.  
\end{lemma}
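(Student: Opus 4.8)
The plan is to follow the pattern of Lemma~\ref{lem-2mm1}: expand $f(x+1)$ explicitly, substitute $x=\alpha^t$, take the trace, collect terms, and read off the index set $I$ via Lemma~\ref{lem-ls2}. The starting point is that $(3^h+1)/2$ is an integer whose base-$3$ representation has digit $2$ in position $0$ and digit $1$ in positions $1,\dots,h-1$; that is, $(3^h+1)/2 = \frac{3^h-1}{2}+1 = 2+3+3^2+\cdots+3^{h-1}$. Hence, in $\gf(3^m)$,
\[
f(x+1)=(x+1)^{(3^h+1)/2}=(x+1)^2\prod_{i=1}^{h-1}(x+1)^{3^i}=(x^2+2x+1)\prod_{i=1}^{h-1}\bigl(x^{3^i}+1\bigr)=(x+1)\prod_{i=0}^{h-1}\bigl(x^{3^i}+1\bigr),
\]
the last form exhibiting $f(x+1)$ as $(x+1)$ times the polynomial occurring in the proof of Lemma~\ref{lem-2mm1} with $q=3$, which explains why the lemmas and counting functions of Section~\ref{sec-qhminus1} are needed here. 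Expanding the product over the subsets $\{i_1<\cdots<i_t\}$ of $\{1,\dots,h-1\}$ and distributing $x^2+2x+1$, the monomials of $f(x+1)$ split into three families that do not cancel one another (their base-$3$ digit in position $0$ is $2$, $1$, $0$ respectively): an ``$A$-family'' $x^{\,2+\sum 3^{i_j}}$ with coefficient $1$, a ``$B$-family'' $x^{\,1+\sum 3^{i_j}}$ with coefficient $2$, and a ``$C$-family'' $x^{\,\sum 3^{i_j}}$ with coefficient $1$, the empty tuple $t=0$ giving $x^2$, $x$ and $1$ respectively.

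Next I would put $x=\alpha^t$ and apply $\tr$. By Lemma~\ref{lem-CMf261} together with Lemma~\ref{lem-f261} applied with $q=3$, every cyclotomic coset attached to an $A$-exponent, a $B$-exponent, or $C_2$ has size $m$, so $\tr((\alpha^t)^e)=\sum_{i\in C_e}(\alpha^t)^i$ for each such $e$. The constant term $\tr(1)$ is $0$ if $3\mid m$ and nonzero otherwise, contributing the factor $(x-1)^{\N_3(m)}$ to $\m_s(x)$. By Lemma~\ref{lem-CMf262}, distinct $A$-exponents lie in pairwise distinct cosets, all disjoint from $C_0$, $C_2$ and the $B$-type cosets, so the $A$-family (with $t=0$ giving $C_2$) contributes $m_{\alpha^{-2}}(x)\prod_{t=1}^{h-1}\prod_{1\le i_1<\cdots<i_t\le h-1}m_{\alpha^{-(2+\sum 3^{i_j})}}(x)$, each factor appearing with coefficient $1\neq 0$. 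The only delicate point is therefore which $B$-type cosets survive, and this is where the $C$-family feeds into the $B$-family.

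The hard part is this $C$-to-$B$ reduction. For a $C$-term $3^{i_1}+\cdots+3^{i_t}$ with $t\ge 1$, dividing out the common factor $3^{i_1}$ (a bijection of $\Z_n$ fixing cyclotomic cosets) shows it lies in the coset of the $B$-type exponent $1+3^{i_2-i_1}+\cdots+3^{i_t-i_1}$, whose indices lie in $\{1,\dots,h-2\}$ (and which is $C_1$ when $t=1$). Since each coset contains at most one exponent of $B$-form (Lemma~\ref{lem-f262} with $q=3$), the map sending a $C$-term to the pair consisting of its $B$-coset and the value $i_1$ is injective; running over the admissible $i_1$ one checks that exactly $h-1-j$ of the $C$-terms land in the coset whose $B$-exponent has largest index $j$, with $j=0$ for $C_1$. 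Hence the total coefficient of that coset in $s_t$ is $2+(h-1-j)=h+1-j$ in $\gf(3)$, which is nonzero exactly when $h-j+1\not\equiv 0\pmod{3}$, i.e.\ $\N_3(h-j+1)=1$. Reading this coset by coset produces the factor $m_{\alpha^{-1}}(x)^{\N_3(h+1)}$ (the case $j=0$) and the two products of $m_{\alpha^{-(1+\sum 3^{i_j})}}(x)$ over $B$-exponents with largest index $i_t$ satisfying $\N_3(h-i_t+1)=1$. Finally, $\ls_s=|I|$ is the sum of the sizes of the surviving cosets ($1$ for $C_0$, $m$ for each surviving $A$- or $B$-coset and for $C_2$); the combinatorial rewriting uses that a $t$-summand $A$-exponent is one of $\N(h,t+1)$ tuples, that a $t$-summand $B$-exponent with largest index $i_t$ is one of $\N(i_t,t)$ tuples for $t\ge 2$, and that the remaining surviving cosets — $C_2$, $C_1$, and the one-summand $B$-cosets — are collected into the term $\bigl(\sum_{i=0}^{h}\N_3(h-i+1)\bigr)m$, which yields the stated formula for $\ls_s$.

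The main obstacle is exactly this $C$-to-$B$ reduction and the accompanying mod-$3$ count: one must verify that no $C$-term ever reduces into $C_0$, $C_2$ or an $A$-coset (true because the $3$-free part of $3^{i_1}+\cdots+3^{i_t}$ is $\equiv 1\pmod{3}$ and is at least $1$, hence of $B$-form), that two $C$-terms with the same minimal index $i_1$ produce the same $B$-exponent only when they coincide, and that the resulting tallies are precisely the $\N(\cdot,\cdot)$ of the statement. A secondary point is that, once Lemmas~\ref{lem-CMf261} and~\ref{lem-CMf262} are available, the whole argument is uniform in the parity of $m$, so no separate treatment of $m$ even versus $m$ odd is needed here.
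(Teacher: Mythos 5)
Your proposal is correct and follows essentially the same route as the paper: both expand $(x+1)^{(3^h+1)/2}$ via the product $\prod_i(x^{3^i}+1)$, collapse the exponents onto cyclotomic-coset representatives of the forms $2+\sum 3^{i_j}$ and $1+\sum 3^{i_j}$ using Lemmas \ref{lem-CMf261}, \ref{lem-f261}, \ref{lem-CMf262} and \ref{lem-f262}, and read off the answer from Lemma \ref{lem-ls2} once the multiplicity of the coset with largest index $i_t$ is found to be $h+1-i_t \pmod 3$. The only (cosmetic) difference is that the paper obtains this multiplicity by reusing the already-reduced expansion (\ref{eqn-2m11}) of $f_h(x)=\tr\bigl((x+1)^{(3^h-1)/2}\bigr)$ and merging it with the $x\,(x+1)^{(3^h-1)/2}$ part, whereas you rederive it directly from the factorization $(x+1)^2\prod_{i=1}^{h-1}(x^{3^i}+1)$ via your explicit $C$-to-$B$ count $2+(h-1-i_t)$.
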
 

\begin{proof} 
Note that 
$$ 
\frac{3^h+1}{2}=1+\sum_{i=0}^{h-1} 3^i. 
$$
Define 
$
x=\alpha^t
$   
and 
$$ 
f_h(x) =\tr\left(  (x+1)^{\sum_{i=0}^{h-1} 3^{i}} \right). 
$$
Then we have 
\begin{eqnarray}\label{eqn-CM2m111}
s_t  
&=& \tr\left(  (x+1)^{1+\sum_{i=0}^{h-1} 3^{i}}    \right) \nonumber \\
&=& \tr\left(  x(x+1)^{\sum_{i=0}^{h-1} 3^{i} }    \right)  + f_h(x)\nonumber \\
&=& \tr(x) + \tr\left[ \sum_{t=1}^{h} \left[ \sum_{0 \le i_1 < \cdots < i_t \le h-1} x^{1+\sum_{j=1}^t 3^{i_j} } \right] \right] + f_h(x) \nonumber \\
&=& f_h(x) + \tr(x) + \tr\left[ \sum_{i_1=0}^{h-1}   x^{1+3^{i_1}}     \right] + \nonumber \\
& & \tr\left[ \sum_{t=2}^{h}  \left(\sum_{0 \le i_1 < \cdots < i_t \le h-1} x^{1+\sum_{j=1}^t 3^{i_j} }  \right)   \right] \nonumber \\
&=& f_h(x) + \tr(x) + \tr(x^2)+\tr\left[ \sum_{i_1=1}^{h-1}   x^{1+3^{i_1}}     \right] + \nonumber \\
& & \tr\left[ \sum_{t=2}^{h} \left( \sum_{1 \le i_2 < \cdots < i_t \le h-1} x^{2+\sum_{j=1}^t 3^{i_j} }    \right) \right] +  \nonumber \\
& & \tr\left[ \sum_{t=2}^{h-1} \left( \sum_{1 \le i_1 < \cdots < i_t \le h-1} x^{1+\sum_{j=1}^t 3^{i_j} }  \right)   \right] 
\end{eqnarray} 

Using the expression of (\ref{eqn-2m11}) for $f_h(x)$ and merging terms in (\ref{eqn-CM2m111}), we obtain 
\begin{eqnarray}\label{eqn-CM2m11}
\lefteqn{s_t=} \nonumber \\  
&& \tr(1) + (h+1)\tr(x) + \tr(x^2)+ \nonumber \\
&& \tr\left[ \sum_{i_1=1}^{h-1} (h-i_1+1)  x^{1+3^{i_1}}     \right] + \nonumber \\
&& \tr\left[ \sum_{t=2}^{h}  \left(\sum_{1 \le i_2 < \cdots < i_t \le h-1} x^{2+\sum_{j=1}^t 3^{i_j} } \right)   \right] +  \nonumber \\
&& \tr\left[ \sum_{t=2}^{h-1} \sum_{i_t=t}^{h-1} (h-i_t+1)\sum_{1 \le i_1 < \cdots < i_{t-1} \le i_t} x^{1+\sum_{j=1}^t 3^{i_j} }  \right] 
\end{eqnarray} 
 
The desired conclusions on the linear span and the minimal polynomial $\m_s(x)$ then follow from Lemmas \ref{lem-ls2}, 
\ref{lem-f261}, \ref{lem-CMf261}, \ref{lem-f262},  \ref{lem-CMf262},  and Equation   
(\ref{eqn-CM2m11}). 
\end{proof}

The following theorem provides information on the code $\C_{s}$.    

\begin{theorem}\label{thm-CMqh2} 
The code $\C_{s}$ defined by the sequence of Lemma \ref{lem-CM2mm1} has parameters 
$[n, n-\ls_s, d]$ and generator polynomial $\m_s(x)$, 
 where $\ls_s$ and $\m_s(x)$ are given in Lemma \ref{lem-CM2mm1}. 
\end{theorem}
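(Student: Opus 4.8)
The plan is to obtain the theorem as an immediate bookkeeping consequence of Lemma \ref{lem-CM2mm1} together with the definition of the code $\C_s$ and the standard correspondence between cyclic codes and ideals of $\gf(q)[x]/(x^n-1)$. First I would recall that, by definition, $\C_s$ is the cyclic code of length $n$ over $\gf(q)$ with generator polynomial
$$
\frac{x^n-1}{\gcd(S(x),\, x^n-1)},
$$
where $S(x)=\sum_{i=0}^{n-1} s_i x^i$ and $s^\infty$ is the sequence of (\ref{eqn-sequence}) attached to $f(x)=x^{(3^h+1)/2}$. Since $n=q^m-1$ we have $\gcd(n,q)=1$, so the cyclic-code machinery of the introduction applies, and by Lemma \ref{lem-ls1} this generator polynomial is exactly the minimal polynomial $\m_s(x)$ of the periodic sequence $s^\infty$. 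Lemma \ref{lem-CM2mm1} already provides the explicit factorization of $\m_s(x)$ into the minimal polynomials $m_{\alpha^{-j}}(x)$, which identifies the generator polynomial claimed in the theorem.

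Next I would read off the length and dimension. The length is $n$ by construction. For the dimension, it is standard that a cyclic code of length $n$ with generator polynomial $g(x)$ has dimension $n-\deg(g(x))$; applying this with $g(x)=\m_s(x)$ and using the second assertion of Lemma \ref{lem-ls1}, namely $\deg(\m_s(x))=\ls_s$ (the linear span of $s^\infty$), gives dimension $n-\ls_s$. Since Lemma \ref{lem-CM2mm1} supplies the closed-form expression for $\ls_s$, the parameters $[n,\,n-\ls_s,\,d]$ follow. The minimum distance $d$ is left unspecified in the statement, so nothing further is required.

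The only point needing a little care is the consistency of the two descriptions of $\m_s(x)$ in Lemma \ref{lem-CM2mm1}: one must check that the degree of the product of minimal polynomials appearing there equals the stated formula for $\ls_s$, i.e., that the $q$-cyclotomic cosets indexing those factors are pairwise disjoint and each of full size $m$ (apart from the $(x-1)$ factor and the size-$m$ factor $m_{\alpha^{-2}}(x)$, which are handled explicitly). This is precisely what Lemmas \ref{lem-CMf261} and \ref{lem-CMf262}, together with Lemmas \ref{lem-f261} and \ref{lem-f262}, establish, so no new argument is needed; the genuine work was already done in proving Lemma \ref{lem-CM2mm1}, and the theorem itself is essentially immediate.
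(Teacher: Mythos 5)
Your proposal is correct and follows the same route as the paper, which likewise obtains the theorem as an immediate consequence of Lemma \ref{lem-CM2mm1} and the definition of the code $\C_s$; you merely spell out the intermediate steps (Lemma \ref{lem-ls1} identifying the generator polynomial with the minimal polynomial of the sequence, and the dimension formula $n-\deg(\m_s(x))$) that the paper leaves implicit.
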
 

\begin{proof} 
The dimension of $\C_{s}$ follows from Lemma \ref{lem-CM2mm1} and the definition of the 
code $\C_s$. 
\end{proof} 

\begin{remark} 
The Coulter-Mathews planar function $f(x)=x^{(3^h+1)/2}$ was employed in \cite{CDY05} 
to construct cyclic codes whose dual codes have dimension $n-2m$, which is independent 
of $h$. However, the dimension of the code of Theorem \ref{thm-CMqh2} depends on $h$. 
This once again shows the difference of the construction approach of this paper and that of 
\cite{CDY05}.    
\end{remark} 

As a corollary of Theorem \ref{thm-CMqh2}, we have the following.  
\begin{corollary}\label{cor-CMqh1} 
Let $h =3$. 
The code $\C_{s}$ of Theorem \ref{thm-CMqh2} has parameters  
$[n, n-\ls_s, d]$ and the  generator polynomial $\m_s(x)$ given by 
\begin{eqnarray*}
\m_s(x) 
&=& (x-1)^{\N_3(m)} m_{\alpha^{-1}}(x) m_{\alpha^{-2}}(x)  m_{\alpha^{-5}}(x) \times \\
&  &  m_{\alpha^{-10}}(x) m_{\alpha^{-11}}(x)  m_{\alpha^{-13}}(x)  m_{\alpha^{-14}}(x)      
\end{eqnarray*}
where 
\begin{eqnarray*}
\ls_s=7m+\N_3(m). 
\end{eqnarray*}
In addition, 
\begin{eqnarray*} 
\left\{ \begin{array}{ll}
5 \le d \le 16  & \mbox{if $\N_3(m)=1$} \\
4 \le d \le 16  & \mbox{if $\N_3(m)=0$.} 
\end{array}
\right. 
\end{eqnarray*}
\end{corollary}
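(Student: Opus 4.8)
The plan is to obtain $\m_s(x)$ and $\ls_s$ by substituting $h=3$ into Lemma~\ref{lem-CM2mm1}, and then to bound the minimum weight $d$ from below by the BCH bound and from above by the sphere-packing bound, exactly in the manner of the proof of Corollary~\ref{cor-qh1}.

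First I would specialise Lemma~\ref{lem-CM2mm1} to $h=3$. Evaluating the combinatorial quantities in its statement gives $\N_3(h+1)=\N_3(4)=1$, $\sum_{i=0}^{3}\N_3(4-i)=1+0+1+1=3$, $\sum_{t=2}^{3}\N(3,t)=\N(3,2)+\N(3,3)=2+1=3$, and $\N_3(h-i_2+1)\,\N(i_2,2)=\N_3(2)\,\N(2,2)=1$ at the only relevant index $i_2=2$; hence $\ls_s=\N_3(m)+3m+(3+1)m=7m+\N_3(m)$, and the product in Lemma~\ref{lem-CM2mm1} collapses to the $\m_s(x)$ displayed in the corollary, the exponent $1+3=4$ being excluded because $\N_3(h-1+1)=\N_3(3)=0$. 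One checks this against the direct computation $(x+1)^{14}=(x+1)^{2}(x^{3}+1)(x^{9}+1)$ over $\gf(3)$: after grouping exponents into $3$-cyclotomic cosets (using $\tr(y^{3})=\tr(y)$) the only cancellation is that the coefficient attached to the coset $C_{4}=C_{1+3}$ vanishes modulo $3$, so the surviving exponents are $0$, with multiplicity $\N_3(m)$ from $\tr(1)$, together with $1,2,5,10,11,13,14$. By Lemmas~\ref{lem-CMf261}--\ref{lem-CMf262} (and Lemmas~\ref{lem-f261}--\ref{lem-f262}) the seven cosets $C_{1},C_{2},C_{5},C_{10},C_{11},C_{13},C_{14}$ are pairwise disjoint and each has size $m$, so $\deg\m_s(x)=\ls_s=7m+\N_3(m)$ and, by Theorem~\ref{thm-CMqh2}, $\C_s$ has parameters $[n,n-\ls_s,d]$ and generator polynomial $\m_s(x)$.

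For the lower bound on $d$ I would replace $\m_s(x)$ by its reciprocal $\m_s^{*}(x)$, which generates a cyclic code equivalent to $\C_s$ (hence of the same weight distribution) and whose roots are the $\alpha^{i}$ with $i$ in $C_{1}\cup C_{2}\cup C_{5}\cup C_{10}\cup C_{11}\cup C_{13}\cup C_{14}$, together with $\alpha^{0}$ when $\N_3(m)=1$. Since $3\in C_{1}$, the exponents $1,2,3$ all index roots of $\m_s^{*}(x)$, whereas $\alpha^{4}$ is not a root because $m_{\alpha^{-4}}(x)$ does not divide $\m_s(x)$; the BCH bound applied to the three consecutive roots $\alpha^{1},\alpha^{2},\alpha^{3}$ gives $d\ge 4$, and when $\N_3(m)=1$ the run extends to the four consecutive roots $\alpha^{0},\alpha^{1},\alpha^{2},\alpha^{3}$, whence $d\ge 5$. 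For the upper bound I would apply the sphere-packing bound to the $[3^{m}-1,\,3^{m}-1-7m-\N_3(m)]$ code $\C_s$: if $d\ge 17$ then the Hamming balls of radius $8$ around codewords are pairwise disjoint, forcing $\binom{n}{8}2^{8}\le 3^{7m+\N_3(m)}\le 3^{7m+1}$ with $n=3^{m}-1$. The conditions~(\ref{eqn-CM2m1cond}) with $h=3$ force $m\ge 7$; already at $m=7$ one has $\binom{2186}{8}2^{8}>3^{50}$, and for $m>7$ the left side grows like $3^{8m}$ against $3^{7m}$ on the right, so the inequality fails for every admissible $m$ and therefore $d\le 16$.

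The result is, in essence, a specialisation of Lemma~\ref{lem-CM2mm1} followed by two textbook estimates, so I do not expect a conceptual obstacle. The two points that need genuine care are: tracking the single coefficient cancellation that deletes the coset $C_{4}$, since this is exactly what halts the string of consecutive root exponents at $\{1,2,3\}$ (respectively $\{0,1,2,3\}$ when $\N_3(m)=1$) and so pins the BCH conclusion at $d\ge 4$ (respectively $d\ge 5$) rather than at a larger value; and checking the numerical inequality $\binom{n}{8}2^{8}>3^{7m+1}$ at the smallest admissible value $m=7$ and confirming its monotone persistence for $m>7$.
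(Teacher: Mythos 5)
Your proposal is correct and follows essentially the same route as the paper: the generator polynomial and linear span come from specialising Lemma \ref{lem-CM2mm1} to $h=3$, the lower bounds on $d$ come from the BCH bound applied to the consecutive roots $\alpha^1,\alpha^2,\alpha^3$ of the reciprocal of $\m_s(x)$ (extended by $\alpha^0$ when $\N_3(m)=1$), and the upper bound $d\le 16$ comes from the sphere-packing bound. Your version merely spells out the coset bookkeeping, the cancellation at $C_4$, and the numerical check of the packing inequality, which the paper leaves implicit.
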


\begin{proof} 
We need to prove only the bounds on the minimum weight of this code. 
The upper bound on $d$ follows from the sphere-packing bound and the dimension of the code. 
In both cases, the reciprocal of $\m_s(x)$ has the roots $\alpha^{i}$ for all   $i \in \{1,2,3 \}$. 
When $\N_3(m)=1$,  the reciprocal of $\m_s(x)$ has the additional root $\alpha^0$. 
The lower bounds then follow from the BCH bound. 
\end{proof}

\begin{open} 
For the code $\C_s$ of Corollary \ref{cor-CMqh1}, do the following lower bounds hold?  
\begin{eqnarray*} 
d \ge \left\{ \begin{array}{ll}
9   & \mbox{when $\N_p(m)=1$} \\
8   & \mbox{when $\N_p(m)=0$.}  
\end{array}
\right. 
\end{eqnarray*}
\end{open}

\begin{example} 
Let $(m,h, q)=(2,3,3)$ and $\alpha$ be a generator of $\gf(r)^*$ with $\alpha^2 +2\alpha+2=0$. Then 
$\C_s$ is a $[8, 3, 5]$ ternary code with generator polynomial  
$$ 
\m_s(x)= x^5 + 2x^3 + x^2 + x + 1.    
$$
This cyclic code is optimal. Notice that $h >m/2$. Hence, the parameters of this code do not 
agree with those of the code in Corollary \ref{cor-CMqh1}. 
\end{example}

\begin{example} 
Let $(m,h, q)=(4,3,3)$ and $\alpha$ be a generator of $\gf(r)^*$ with $\alpha^4 +2\alpha^3+2=0$. Then 
$\C_s$ is a $[80, 69, 5]$ ternary code with generator polynomial  
$$ 
\m_s(x)= x^{11} + 2x^8 + 2x^6 + 2x^5 + 2x^4 + x^3 + 2x^2 + x + 2.    
$$
This is an almost optimal linear code. The known optimal linear code has parameters $[80, 69, 6]$ 
which is not cyclic. Notice that $h >m/2$. Hence, the parameters of this code do not 
agree with those of the code in Corollary \ref{cor-CMqh1}. In this case $f(x)$ is a 
permutation.  
\end{example}

\begin{example} 
Let $(m,h, q)=(7,3,5)$ and $\alpha$ be a generator of $\gf(r)^*$ with $\alpha^7 + 2 \alpha^2 + 1=0$. Then 
$\C_s$ is a $[2186, 2136, d]$  cyclic code over $\gf(5)$ with generator polynomial  
\begin{eqnarray*}
\m_s(x) &=& x^{50} + x^{49} + x^{48} + 2x^{47} + 2x^{46} + x^{45} + 2x^{44} +\\
             & &  2x^{43} + x^{42} + x^{41} + 2x^{40} + 2x^{39} + 2x^{38} + 2x^{37} + \\ 
             & &  x^{36} + 2x^{35} + 2x^{34} + 2x^{33} + x^{31} + 2x^{30} + x^{29} + \\ 
             & & 2x^{28} + 2x^{27} + 2x^{26} + 2x^{25} + x^{24} + x^{23} +x^{22} + \\
             & & 2x^{21} + 2x^{20} + x^{18} + x^{16} + x^{15} + x^{14} + x^{13} + \\ 
             & & 2x^{12} + x^{11} +2x^{10} + 2x^9 + 2x^4 + 1.   
\end{eqnarray*} 
Notice that $h = (m-1)/2$. Hence, the parameters of this code   
agree with those of the code in Corollary \ref{cor-CMqh1}. 
\end{example}

\subsection{Cyclic codes from the APN function $x^3$}\label{sec-x3} 

In this subsection we consider the code $\C_s$ defined by the APN function $f(x)=x^3$ over $\gf(r)$, where 
$p>3$. We present the results without providing proofs as this case is simple.

\begin{lemma}\label{lem-cubic} 
Let $s^{\infty}$ be the sequence of (\ref{eqn-sequence}), where $f(x)=x^3$ and $p>3$. Then the linear span $\ls_s$
of $s^{\infty}$ is equal to $3m+\N_p(m)$ and the minimal polynomial $\m_s(x)$ of  $s^{\infty}$ is given by 
\begin{equation}\label{eqn-gpcubic}
\m_s(x)=(x-1)^{\N_p(m)} m_{\alpha^{-1}}(x) m_{\alpha^{-2}}(x) m_{\alpha^{-3}}(x)
\end{equation} 
where $m_{\alpha^j}(x)$ is the minimal polynomial of $\alpha^j$ over $\gf(q)$. 
\end{lemma}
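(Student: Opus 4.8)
The plan is to follow the template of the proofs of Lemmas~\ref{lem-2Gold} and~\ref{lem-DOGold}: expand $(\alpha^t+1)^3$ by the binomial theorem, use $\gf(q)$-linearity of the trace, identify the $q$-cyclotomic cosets that occur, and then invoke Lemma~\ref{lem-ls2}.

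First I would write, for all $t\ge 0$,
\begin{eqnarray*}
s_t=\tr\left((\alpha^t+1)^3\right)=\tr(1)+3\,\tr(\alpha^t)+3\,\tr(\alpha^{2t})+\tr(\alpha^{3t}),
\end{eqnarray*}
which is legitimate because $p>3$, so $3$ is a unit in $\gf(q)$. Here $\tr(1)=m\cdot 1\in\gf(q)$, hence $\tr(1)\ne 0$ precisely when $\N_p(m)=1$. Expanding each remaining term as $\tr(\alpha^{jt})=\sum_{k=0}^{m-1}(\alpha^t)^{jq^k}$ for $j\in\{1,2,3\}$ exhibits $s_t$ as a $\gf(q^m)$-linear combination of powers $(\alpha^i)^t$.

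Next I would record the facts needed about the $q$-cyclotomic cosets: $\ell_1=\ell_2=\ell_3=m$, the cosets $C_1,C_2,C_3$ are pairwise disjoint, and $0\notin C_1\cup C_2\cup C_3$. Since $q$ is a power of an odd prime exceeding $3$, we have $q\ge 5$, hence $3q^{m-1}<q^m-1=n$. Consequently, for $j\in\{1,2,3\}$ and $1\le u\le m-1$ one has $0<j(q^u-1)<n$, so $jq^u\not\equiv j\pmod n$ and therefore $\ell_j=m$. Moreover all the integers $jq^k$ with $j\in\{1,2,3\}$ and $0\le k\le m-1$ lie in $[1,n)$ without reduction, and they are pairwise distinct because $q\ge 5$ forces $q\nmid j$; this gives the disjointness of $C_1,C_2,C_3$ and shows $0$ lies in none of them. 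In particular $\tr(\alpha^{jt})=\sum_{i\in C_j}(\alpha^t)^i$ with $|C_j|=m$ distinct exponents, for each $j\in\{1,2,3\}$.

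Finally I would combine these. Because $C_1,C_2,C_3$ and $\{0\}$ are pairwise disjoint and the coefficients attached to them, namely $3$, $3$, $1$ and $\tr(1)$, are all nonzero in $\gf(q)$ when $\N_p(m)=1$ (and all but the last are nonzero when $\N_p(m)=0$), the index set $I$ appearing in Lemma~\ref{lem-ls2} is $C_1\cup C_2\cup C_3$ if $\N_p(m)=0$ and $\{0\}\cup C_1\cup C_2\cup C_3$ if $\N_p(m)=1$. Hence $\ls_s=|I|=3m+\N_p(m)$, and $\m_s(x)=\prod_{i\in I}(1-\alpha^i x)$. Since $\{-i\bmod n : i\in C_j\}$ is exactly the $q$-cyclotomic coset of $-j$, the factor $\prod_{i\in C_j}(1-\alpha^i x)$ agrees up to a nonzero scalar with $m_{\alpha^{-j}}(x)$, and $1-\alpha^0x$ is associate to $x-1$; this yields~(\ref{eqn-gpcubic}). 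The whole argument is routine; the only step requiring a little attention is the coset bookkeeping in the third paragraph, and even there the estimates are immediate once one uses $q\ge 5$.
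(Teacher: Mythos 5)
Your proof is correct and is essentially the argument the paper intends: the paper omits the proof of Lemma~\ref{lem-cubic} as ``simple,'' and your binomial expansion, coset bookkeeping (using $q\ge 5$ to see that $\ell_1=\ell_2=\ell_3=m$ and that $C_1,C_2,C_3$ are disjoint), and appeal to Lemma~\ref{lem-ls2} exactly mirror the template of the proofs of Lemmas~\ref{lem-2Gold} and~\ref{lem-DOGold}. No gaps.
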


The following theorem provides information on the code $\C_{s}$ and its dual.    

\begin{theorem} 
The code $\C_{s}$ defined by the sequence of Lemma \ref{lem-cubic} has parameters 
$[n, n-3m-\N_p(m), d]$ and generator polynomial $\m_s(x)$ of (\ref{eqn-gpcubic}), where  
$4 \le d \le 8$. When $\N_p(m)=1$, $5 \le d \le 6$. 
\end{theorem}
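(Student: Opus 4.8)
The plan is to read off the length, dimension and generator polynomial directly from Lemma \ref{lem-cubic}, and then to sandwich the minimum distance $d$ between the BCH bound and the sphere-packing bound. First I would note that, by the definition of $\C_s$, its generator polynomial is $\m_s(x)$ and its dimension equals $n-\deg(\m_s(x))$. Lemma \ref{lem-cubic} gives $\deg(\m_s(x))=\ls_s=3m+\N_p(m)$ — equivalently, the three $q$-cyclotomic cosets $C_1,C_2,C_3$ are pairwise disjoint and each of size $m$ — so $\C_s$ has parameters $[n,\,n-3m-\N_p(m),\,d]$ with generator polynomial $\m_s(x)$ of (\ref{eqn-gpcubic}).

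For the lower bound on $d$, I would observe that the zeros of $\C_s$ contain $\alpha^{-1},\alpha^{-2},\alpha^{-3}$, that is, the three consecutive powers $\alpha^{n-3},\alpha^{n-2},\alpha^{n-1}$; and when $\N_p(m)=1$ the zero $\alpha^{0}=\alpha^{n}$ is present as well, extending the chain to the four consecutive powers $\alpha^{n-3},\alpha^{n-2},\alpha^{n-1},\alpha^{0}$. The BCH bound then gives $d\ge 4$ in general and $d\ge 5$ when $\N_p(m)=1$. (Alternatively, one may pass to the cyclic code generated by the reciprocal of $\m_s(x)$, which has the same weight distribution and zeros $\alpha^{1},\alpha^{2},\alpha^{3}$, plus $\alpha^{0}$ when $\N_p(m)=1$; this is the formulation used elsewhere in the paper.)

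For the upper bound, I would apply the sphere-packing bound to $\C_s$: with $k=n-3m-\N_p(m)$ and $t=\lfloor (d-1)/2\rfloor$, one must have
$$
\sum_{i=0}^{t}\binom{n}{i}(q-1)^{i}\ \le\ q^{\,n-k}\ =\ q^{\,3m+\N_p(m)} .
$$
If $d\ge 9$ then $t\ge 4$, and since $n=q^{m}-1$ and $q\ge 5$ (as $p>3$), already $\binom{n}{4}(q-1)^{4}>q^{3m+1}\ge q^{\,3m+\N_p(m)}$, a contradiction; hence $d\le 8$. If moreover $\N_p(m)=1$, the right-hand side is $q^{3m+1}$, and $d\ge 7$ would force $t\ge 3$ with $\binom{n}{3}(q-1)^{3}>q^{3m+1}$, again a contradiction, so $d\le 6$. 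Putting these together gives $4\le d\le 8$ in general and $5\le d\le 6$ when $\N_p(m)=1$.

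The one step that needs care — and the main, though mild, obstacle — is verifying the two numerical inequalities for the smallest admissible parameters. I would use that $\N_p(m)=0$ forces $p\mid m$, hence $m\ge p\ge 5$, so that the potentially tight small-$m$ situations occur only when $\N_p(m)=1$; there a direct computation confirms $\binom{q^{m}-1}{3}(q-1)^{3}>q^{3m+1}$ for all $m\ge 2$ and $q\ge 5$, while the Singleton bound $d\le n-k+1$ disposes of the few remaining degenerate short codes. No tool beyond these two classical bounds is required.
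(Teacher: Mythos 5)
The paper gives no proof of this theorem (Section \ref{sec-x3} explicitly states the results are presented without proofs), but your argument is correct and follows exactly the pattern the paper uses for the analogous results (e.g.\ Theorem \ref{thm-DYY} and Corollary \ref{cor-qh1}): dimension and generator polynomial read off from the lemma, the BCH bound on the three consecutive zeros $\alpha^{-3},\alpha^{-2},\alpha^{-1}$ (four when $\N_p(m)=1$) for the lower bound, and the sphere-packing bound for the upper bound. Your numerical verifications, including the observation that $\N_p(m)=0$ forces $m\ge p\ge 5$, check out.
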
 

\begin{example} 
Let $(m,q)=(2,5)$ and $\alpha$ be a generator of $\gf(q^m)^*$ with $\alpha^2 + 4\alpha + 2=0$. 
Then generator polynomial of the code $\C_s$ is 
$$ 
\m_s(x)=x^7 + 3x^6 + 4x^5 + 4x^4 + 2x^3 + 4x^2 + x + 1   
$$
and $\C_s$ is a $[24, 17, 5]$ cyclic code over $\gf(5)$. The upper bound on the minimum weight 
of any linear code of length 24 and dimension 17 over $\gf(5)$ is 6. The record linear code with 
parameters $[24,17,6]$ reported in the database maintained by Markus Grassl is not cyclic.  
\end{example} 

\begin{example} 
Let $(m,q)=(3,5)$ and $\alpha$ be a generator of $\gf(q^m)^*$ with $\alpha^3 + 3\alpha + 3=0$. 
Then generator polynomial of the code $\C_s$ is 
$$ 
\m_s(x)=x^{10} + x^9 + x^5 + 3x^4 + 4x^3 + x + 4   
$$
and $\C_s$ is a $[124, 114, 5]$ cyclic code over $\gf(5)$. The upper bound on the minimum weight 
of any linear code of length 125 and dimension 114 over $\gf(5)$ is 6. The record linear code with 
parameters $[124,114,6]$ reported in the database maintained by Markus Grassl is not cyclic.  
\end{example}

\subsection{Open problems regarding the nonbinary cyclic codes from APN and planar dunctions}

In previous subsections of Section \ref{sec-nonbinary}, some nonbinary cyclic codes from APN and 
planar functions were studied.  It would be nice if the following open problems could be solved.   

\begin{open}\label{open-11} 
Determine the dimension and the generator polynomial of the code $\C_s$ defined by the APN function 
$f(x)=x^{(3^m-3)/2}$ over $\gf(3)$.  Develop 
tight lower bounds on the minimum weight of this code.   
\end{open}

The following examples demonstrate that the code described in Open Problem \ref{open-11} 
looks promising. 

\begin{example} 
Let $(m,q)=(3,3)$ and $\alpha$ be a generator of $\gf(q^m)^*$ with $\alpha^3 + 2\alpha + 1=0$. 
Then generator polynomial of the code $\C_s$ is 
$$ 
\m_s(x)=x^6 + 2x^5 + 2x^4 + x^3 + x^2 + 2x + 2   
$$
and $\C_s$ is a $[26, 20, 4]$ cyclic code over $\gf(3)$, and is an optimal linear code. 
\end{example} 

\begin{example} 
Let $(m,q)=(4,3)$ and $\alpha$ be a generator of $\gf(q^m)^*$ with $\alpha^4 + 2\alpha^3 + 2=0$. 
Then generator polynomial of the code $\C_s$ is 
$$ 
\m_s(x)=x^{11} + 2x^8 + 2x^6 + 2x^5 + 2x^4 + x^3 + 2x^2 + x + 2   
$$
and $\C_s$ is a $[80, 69, 5]$ cyclic code over $\gf(3)$. The upper bound on the minimum weight 
of any linear code of length 80 and dimension 69 over $\gf(3)$ is 6. The record linear code with 
parameters $[80,69,6]$ reported in the database maintained by Markus Grassl is not cyclic.  
\end{example} 

\begin{open} 
Determine the dimension and the generator polynomial of the code $\C_s$ defined by the inverse  
APN function $f(x)=x^{q^m-2}$ over $\gf(q)$, where $q$ is odd. Develop tight lower bounds on 
the minimum weight of this code.   
\end{open}  

Regarding the cyclic code defined by the inverse APN function, the case for $q=2$ was settled in 
Section \ref{sec-Inverse}. The case $q$ being odd is more complicated. The code is interesting in 
the binary case. 

\begin{open} 
Determine the dimension and the generator polynomial of the code $\C_s$ defined by the    
APN function $f(x)=x^{(5^h+1)/2}$ over $\gf(5^m)$, where $\gcd(2m,h)=1$.  Develop tight 
lower bounds on the minimum weight of this code.   
\end{open}

\begin{open} 
Determine the dimension and the generator polynomial of the ternary code $\C_s$ defined by 
the  APN function $f(x)=x^{e}$ over $\gf(3^m)$, where  
\begin{eqnarray*} 
e=\left\{ \begin{array}{ll} 
               \frac{3^{(m+1)/2}-1}{2} & \mbox{if } m \equiv 3 \pmod{4} \\
               \frac{3^{(m+1)/2}-1}{2} + \frac{3^m-1}{2} & \mbox{if } m \equiv 1 \pmod{4}.                 
               \end{array} 
\right.                
\end{eqnarray*}   
Develop 
tight lower bounds on the minimum weight of this code.   
\end{open}  

\begin{open} 
Determine the dimension and the generator polynomial of the ternary code $\C_s$ defined by 
the  APN function $f(x)=x^{e}$ over $\gf(3^m)$, where  
\begin{eqnarray*} 
e=\left\{ \begin{array}{ll} 
               \frac{3^{m+1}-1}{8} & \mbox{if } m \equiv 3 \pmod{4} \\
               \frac{3^{m+1}-1}{8} + \frac{3^m-1}{2} & \mbox{if } m \equiv 1 \pmod{4}.                 
               \end{array} 
\right.                
\end{eqnarray*}   
Develop 
tight lower bounds on the minimum weight of this code.   
\end{open}  

\begin{open} 
Determine the dimension and the generator polynomial of the ternary code $\C_s$ defined by 
the  APN function $f(x)=x^{e}$ over $\gf(3^m)$, where  
$$ 
e=\left(3^{(m+1)/4}-1\right)\left(3^{(m+1)/2}+1\right),  \ m \equiv 3 \pmod{4}.
$$  
Develop 
tight lower bounds on the minimum weight of this code.   
\end{open}

\section{A related construction of cyclic codes from highly nonlinear functions}\label{sec-newconstruct} 

Given any nonlinear function $f(x)$ on $\gf(r)$, we define its differential sequence 
$\seq^\infty$ by 
\begin{eqnarray}\label{eqn-difsequence}
\seq_i=\tr(f(\alpha^i+1)-f(\alpha^i)) 
\end{eqnarray}
for all $i \ge 0$, where $\alpha$ is a generator of $\gf(r)^*$ and $\tr(x)$ denotes 
the trace function from $\gf(r)$ to $\gf(q)$. We use $\C_{\seq}$ to denote the cyclic 
code defined by the sequence $\seq^\infty$. 

Polynomials over $\gf(r)$ of the form 
$$ 
\sum_{i,j} a_{i,j} x^{q^i + q^j}
$$
are called {\em Dembowski-Ostrom polynomials}, where $a_{i,j} \in \gf(r)$. 
If $f(x)$ is a Dembowski-Ostrom planar function, then $u(x)=f(x+1)-f(x)$ is an  
affine permutation of $\gf(r)$. In this case, one can prove that the code $\C_{\seq}$ 
has generator polynomial $(x-1)^\delta m_{\alpha^{-1}}(x)$ and parameters 
$[n, n-m-\delta, d]$, where $\delta \in \{0,1\}$ and 
\begin{eqnarray*} 
\left\{ \begin{array}{l} 
d=3 \mbox{ if } \delta=1, \\
d=2 \mbox{ if } \delta=0,  
 \end{array} 
 \right. 
 \end{eqnarray*} 
The code $\C_{\seq}$ is optimal in this case. 

When $f(x)=x^h$ is a monomial over $\gf(r)$, the relation between  codes $\C_{\seq}$ and $\C_s$ 
must be one of the following: 
\begin{itemize} 
\item The generator polynomial of $\C_s$ is equal to $m_{\alpha^{-h}}(x)$ times that of $\C_{\seq}$ 
          and the dimension of $\C_s$ is equal to that of $\C_{\seq}$ minus $m$. Hence, $\C_s$ is a 
          subcode of $\C_{\seq}$.    
\item The generator polynomial of $\C_\seq$ is equal to $m_{\alpha^{-h}}(x)$ times that of $\C_{s}$ 
          and the dimension of $\C_\seq$ is equal to that of $\C_{s}$ minus $m$. Hence, $\C_\seq$ is a 
          subcode of $\C_{s}$.                      
\end{itemize} 
 
All of the APN and planar functions could be plugged into this related construction above 
and the codes $\C_{\seq}$ are extremely good (many are optimal and almost optimal). 
Theorems about the codes $\C_s$ developed in previous sections can be modified into 
theorems about the codes $\C_{\seq}$. As an example, we will do this for the Welch APN 
function below. The rest of the modifications is left to the reader.

\begin{lemma}\label{lem-nWelch} 
Let $m =2t+1 \ge 7$. 
Let $\seq^{\infty}$ be the sequence of (\ref{eqn-difsequence}), where $f(x)=x^{2^{t}+3}$. 
Then the linear span $\ls_\seq$ of $\seq^{\infty}$ is equal to $4m+1$ and the minimal polynomial $\m_\seq(x)$ 
of  $\seq^{\infty}$ is given by 
\begin{eqnarray}\label{eqn-nWelch}
\m_\seq(x)=  (x-1) m_{\alpha^{-1}}(x) m_{\alpha^{-3}}(x) m_{\alpha^{-(2^t+1)}}(x)m_{\alpha^{-(2^t+2)}}(x)   
\end{eqnarray} 
where $m_{\alpha^{-j}}(x)$ is the minimal polynomial of $\alpha^{-j}$ over $\gf(2)$. 
\end{lemma}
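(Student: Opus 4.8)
The plan is to mirror the proof of Lemma~\ref{lem-Welch}: passing from $f(\alpha^i+1)$ to $f(\alpha^i+1)-f(\alpha^i)$ merely strips off one $2$-cyclotomic coset, so almost all the work is already done there. First I would compute $f(x+1)-f(x)$ explicitly. In characteristic $2$ we have
\begin{equation*}
(x+1)^{2^t+3}=(x^{2^t}+1)(x^3+x^2+x+1)=x^{2^t+3}+x^{2^t+2}+x^{2^t+1}+x^{2^t}+x^3+x^2+x+1,
\end{equation*}
so $f(x+1)-f(x)=x^{2^t+2}+x^{2^t+1}+x^{2^t}+x^3+x^2+x+1$. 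Applying $\tr$ and using its Frobenius invariance, which identifies $\tr(x^{2^t})=\tr(x^2)=\tr(x)$, together with $\tr(x)+\tr(x)+\tr(x)=\tr(x)$ over $\gf(2)$, gives
\begin{equation*}
\seq_i=\tr\left((\alpha^i)^{2^t+2}+(\alpha^i)^{2^t+1}+(\alpha^i)^{3}+\alpha^i+1\right).
\end{equation*}

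Next, exactly as in~(\ref{eqn-Welch2}), I would expand each trace term as a sum over its $2$-cyclotomic coset, obtaining
\begin{equation*}
\seq_i=\sum_{\ell\in C_{2^t+2}}(\alpha^i)^{\ell}+\sum_{\ell\in C_{2^t+1}}(\alpha^i)^{\ell}+\sum_{\ell\in C_{3}}(\alpha^i)^{\ell}+\sum_{\ell\in C_{1}}(\alpha^i)^{\ell}+1.
\end{equation*}
The proof of Lemma~\ref{lem-Welch} already establishes that the cosets $C_1,C_3,C_{2^t+1},C_{2^t+2}$ are pairwise disjoint and each of size $m$ (these are four of the five cosets analysed there; $C_{2^t+3}$ has disappeared because the monomial $x^{2^t+3}$ is exactly $f(x)$, which we subtracted), so I would simply cite those Euclidean-division estimates rather than repeat them. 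Since $m$ is odd, $\tr(1)=1\neq 0$, so the constant term survives as well. Hence, writing $\seq^\infty$ in the form of Lemma~\ref{lem-ls2}, the index set is $I=\{0\}\cup C_1\cup C_3\cup C_{2^t+1}\cup C_{2^t+2}$ with $|I|=1+4m$ and every coefficient $c_i$ with $i\in I$ equal to $1$.

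Finally, Lemma~\ref{lem-ls2} yields $\ls_\seq=|I|=4m+1$ and
\begin{equation*}
\m_\seq(x)=(x-1)\,m_{\alpha^{-1}}(x)\,m_{\alpha^{-3}}(x)\,m_{\alpha^{-(2^t+1)}}(x)\,m_{\alpha^{-(2^t+2)}}(x),
\end{equation*}
which is~(\ref{eqn-nWelch}). I expect no genuine obstacle here: the only real content of Lemma~\ref{lem-Welch} that the present statement needs---the disjointness and full size of the four cyclotomic cosets---can be invoked verbatim, and the sole care required is the characteristic-$2$ bookkeeping that reduces $\tr(x^{2^t}+x^2+x)$ to $\tr(x)$ and confirms that it is precisely $C_{2^t+3}$ that the subtraction of $f(\alpha^i)$ removes.
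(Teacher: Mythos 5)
Your proposal is correct and takes exactly the route the paper intends: the paper's own proof of this lemma is just the remark that the proof of Lemma~\ref{lem-Welch} can be slightly modified, and your argument carries out that modification faithfully (the subtraction of $f(\alpha^i)$ removes precisely the $C_{2^t+3}$ contribution from~(\ref{eqn-Welch2}), and the disjointness and size-$m$ claims for $C_1, C_3, C_{2^t+1}, C_{2^t+2}$ are imported verbatim from that proof).
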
 

\begin{proof} 
The proof of Lemma \ref{lem-Welch} can be slightly modified into a proof of this lemma. 
The details are left to the reader.  
\end{proof} 

The following theorem provides information on $\C_{\seq}$.    

\begin{theorem} 
Let $m \ge 7$ be odd. 
The binary code $\C_{\seq}$ defined by the sequence of Lemma \ref{lem-nWelch} has parameters 
$[2^m-1, 2^{m}-2-4m, d]$ and  generator polynomial $\m_\seq(x)$ of (\ref{eqn-nWelch}), where $d \ge 6$.  
\end{theorem}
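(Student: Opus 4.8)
The plan is to mirror the proof of Theorem~\ref{thm-Welch}, the only real change being that the cyclotomic coset $C_{2^t+3}$ has disappeared from the generator polynomial, so the plain BCH step used there must be upgraded to a Hartmann--Tzeng step. First, the dimension and generator polynomial are immediate from Lemma~\ref{lem-nWelch} and the definition of the code defined by a sequence: the linear span of $\seq^\infty$ is $\ls_\seq=4m+1$ and its minimal polynomial is $\m_\seq(x)$ of (\ref{eqn-nWelch}), hence $\C_\seq$ has generator polynomial $\m_\seq(x)$ and dimension $n-\ls_\seq=(2^m-1)-(4m+1)=2^m-2-4m$.

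For the bound on $d$, note that $(x-1)\mid\m_\seq(x)$, so every codeword of $\C_\seq$ is even-weight and $d$ is even; thus it suffices to prove $d\ge 5$. Let $\overline{\C_\seq}$ be the cyclic code with generator polynomial $\m_\seq(x)/(x-1)$; then $\C_\seq$ is the even-weight subcode of $\overline{\C_\seq}$, so $d\ge d(\overline{\C_\seq})$. Since a cyclic code and the code generated by the reciprocal of its generator polynomial have the same weight distribution, it is enough to lower bound the minimum distance of the cyclic code whose set of zeros contains $C_1\cup C_{2^t+1}\cup C_{2^t+2}$; by Lemma~\ref{lem-Welch} these cosets are pairwise disjoint and each of size $m$.

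The key step is to locate a Hartmann--Tzeng configuration inside this set of zeros. From $C_1=\{1,2,4,\dots,2^{m-1}\}$ we have $2^t\in C_1$ and $2^{t+1}\in C_1$. From $2^{2t+1}=2^m\equiv 1\pmod{2^m-1}$ we get $(2^t+1)2^{t+1}\equiv 2^{t+1}+1\pmod{2^m-1}$, so $2^{t+1}+1\in C_{2^t+1}$; also $2^t+1\in C_{2^t+1}$ trivially and $2^{t+1}+2=2(2^t+1)\in C_{2^t+1}$; and $2^t+2\in C_{2^t+2}$. Hence all six integers of
\[
\{\,2^t+i_1+i_2\,2^t : 0\le i_1\le 2,\ 0\le i_2\le 1\,\}=\{2^t,\,2^t+1,\,2^t+2,\,2^{t+1},\,2^{t+1}+1,\,2^{t+1}+2\}
\]
lie in the set of zeros. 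Since $\gcd(2^t,2^m-1)=1<4$, the Hartmann--Tzeng bound with designed distance $\delta=4$ and parameter $s=1$ gives minimum distance at least $\delta+s=5$. Therefore $d(\overline{\C_\seq})\ge 5$, hence $d\ge 5$, and as $d$ is even, $d\ge 6$.

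The only point requiring care is this last step. In Theorem~\ref{thm-Welch} the four consecutive integers $2^t,2^t+1,2^t+2,2^t+3$ were available and the BCH bound sufficed, whereas here $2^t+3$ is no longer a zero; one must instead use two runs of three consecutive integers related by the step $2^t$, and check that the shifted run $2^{t+1},2^{t+1}+1,2^{t+1}+2$ still sits inside the (smaller) set of zeros --- which, as shown above, it does because those three integers already belong to $C_1\cup C_{2^t+1}$. Everything else is the routine bookkeeping already carried out in Section~\ref{sec-Welch}.
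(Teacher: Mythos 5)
Your proof is correct, and it follows the same skeleton as the paper's (dimension from the lemma; pass to the even-weight-subcode relation with $\overline{\C_\seq}$; pass to the reciprocal; bound the distance via structured zeros; use evenness to push $5$ up to $6$). The one place you genuinely diverge is the distance step: you invoke the Hartmann--Tzeng bound with the two runs $\{2^t,2^t+1,2^t+2\}$ and $\{2^{t+1},2^{t+1}+1,2^{t+1}+2\}$ separated by $c_2=2^t$, and your verification that $2^{t+1}+1\in C_{2^t+1}$ (via $2^{2t+1}\equiv 1 \pmod n$) and $2^{t+1}+2=2(2^t+1)\in C_{2^t+1}$ is accurate, so the bound $d\ge \delta+s=5$ and hence $d\ge 6$ stands. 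However, the upgrade to Hartmann--Tzeng is not actually forced: your argument never uses the coset $C_3$, which \emph{is} present in $\m_\seq(x)$ of (\ref{eqn-nWelch}). Since $\{1,2,4\}\subseteq C_1$ and $3\in C_3$, the four consecutive integers $1,2,3,4$ already lie in the zero set of the reciprocal of $\m_\seq(x)/(x-1)$, so the plain BCH bound gives $d(\overline{\C_\seq})\ge 5$ and evenness gives $d\ge 6$; equivalently, including the zero $\alpha^0$ from the factor $x-1$, the five consecutive zeros $0,1,2,3,4$ give $d\ge 6$ for $\C_\seq$ directly. That BCH route is almost certainly the ``slight modification'' of the proof of Theorem \ref{thm-Welch} that the paper intends, simply replacing the run $\{2^t,\dots,2^t+3\}$ by $\{1,2,3,4\}$. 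What your approach buys is a template that would still work even if $C_3$ were absent from the generator polynomial; what the BCH route buys is brevity and fidelity to the original proof.
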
 

\begin{proof} 
The proof of Theorem \ref{thm-Welch} can be slightly modified into a proof of this theorem. 
The details are left to the reader.  
\end{proof} 

\begin{example} 
Let $m=3$ and $\alpha$ be a generator of $\gf(2^m)^*$ with $\alpha^3 + \alpha + 1=0$. 
Then 
$\C_{\seq}$ is a $[7, 6, 2]$ optimal binary cyclic code with generator polynomial  
$ 
x + 1.    
$
\end{example}

\begin{example} 
Let $m=5$ and $\alpha$ be a generator of $\gf(2^m)^*$ with $\alpha^5 + \alpha^2 + 1=0$. 
Then 
$\C_{\seq}$ is a $[31, 20, 6]$ optimal binary cyclic code with generator polynomial 
$ 
x^{11} + x^9 + x^8 + x^7 + x^2 + 1. 
$ 
\end{example} 

\begin{example} 
Let $m=7$ and $\alpha$ be a generator of $\gf(2^m)^*$ with $\alpha^7 + \alpha + 1=0$. 
Then 
$\C_{\seq}$ is a $[127, 98, 8]$ binary cyclic code with generator polynomial 
\begin{eqnarray*} 
\m_\seq(x) &=& x^{37} + x^{36} + x^{35} + x^{34} + x^{33} + x^{28} + x^{26} + x^{24} + \\
& &  x^{22} + x^{21} + x^{17} + x^{13} + x^9 + x^8 + x^7 + x^5 + x^4 + 1. 
\end{eqnarray*}
This code is not an optimal linear code, but may be the best binary cyclic code of length 127 
and dimension 98. 
\end{example}

\section{Concluding remarks and summary}

In this paper, we studied the codes derived from a number of highly nonlinear functions (including 
planar functions and almost perfect nonlinear functions). Many of these codes obtained from 
these functions are optimal or almost optimal. The dimension of some of the codes is flexible.  
We determined the minimum weight for some classes of cyclic codes, and developed tight 
lower bounds for some other classes of cyclic codes. The main results of this paper showed 
that the two approaches of constructing cyclic codes with planar and APN functions are 
quite promising. While it is rare to see optimal cyclic codes constructed with tools  
in algebraic geometry and algebraic function fields, the simple constructions of cyclic 
codes with monomials and trinomials over $\gf(r)$ employed in this paper are very impressive 
in the sense that it has produced many optimal and almost optimal cyclic codes.   

Binary sequences defined by the inverse APN function and the first Niho APN function have large 
linear span. The sequences defined by the Kasami APN function have also large linear span when 
$h$ is close to $m/4$. These sequences have also reasonable autocorrelation property. They may be 
employed in certain stream ciphers as keystreams. For example, the sequence defined by the 
inverse APN function is used in the stream cipher in \cite{wenpeiding}.  So the contribution of this 
paper in cryptography is the computation of the linear spans of these sequences.

\end{document}